\newtheorem{assum}[theorem]{Assumption} 
\let\ftype@table\ftype@figure
\begin{document}

\title{Likelihood-free  Model Choice for Simulator-based  Models   with   the Jensen--Shannon Divergence}
\author{\name Jukka Corander  \email jukka.corander@medisin.uio.no \\ 
\addr  Department of Mathematics and Statistics and Helsinki Institute of Information Technology (HIIT) \\  University of Helsinki \\ 
  Pietari Kalmin katu 5, 00014 Helsingin Yliopisto, Finland \\
\addr  Department of Biostatistics, Institute of Basic Medical Sciences \\ University of Oslo \\
Sognsvannsveien 9, 0372 Oslo, Norway \\
 Parasites and Microbes, Wellcome Sanger Institute \\ Cambridge, CB10 1SA, UK \AND   \name Ulpu Remes 
\email  u.m.v.remes@medisin.uio.no\\  \addr  Department of Biostatistics, Institute of Basic Medical Sciences \\
University in Oslo \\ Sognsvannsveien 9, 0372 Oslo, Norway 
\AND  \name  Timo Koski \email tjtkoski@kth.se\\ \addr  Department of Mathematics and Statistics and Helsinki Institute of Information Technology (HIIT)\\ University of Helsinki \\
 Pietari Kalmin katu 5, 00014 Helsingin Yliopisto, Finland \\
\addr  KTH Royal Institute of Technology \\ Lindstedtsv\"agen 25, 100 44 Stockholm, Sweden}

\editor{NN}

\maketitle

\begin{abstract}
Choice of appropriate structure and parametric dimension of a model in the light of data has a rich history in statistical research, where the first seminal approaches were developed in 1970s, such as the Akaike's and Schwarz's model scoring criteria that were inspired by information theory and embodied the rationale called Occam's razor. After those pioneering works, model choice was quickly established as its own field of research, gaining considerable attention in both computer science and statistics. However, to date, there have been limited attempts to derive scoring criteria for simulator-based models lacking a likelihood expression. Bayes factors have been considered for such models, but arguments have been put both for and against use of them and around issues related to their consistency. Here we use the asymptotic properties of Jensen--Shannon divergence (JSD) to derive a consistent model scoring criterion for the likelihood-free setting called JSD-Razor. Relationships of JSD-Razor with established scoring criteria for the likelihood-based approach are analyzed and we demonstrate the favorable properties of our criterion using both synthetic and real modeling examples.   
\end{abstract}

\section{Introduction}
\label{ppo:sec:intro}

A research field of increasing popularity deals with simulator-based models that lack an expression for the likelihood of data and consequently require likelihood-free inference approaches to be used for model fitting; for a recent comprehensive overview see  \citet{cranmer}. In one of the pioneering works dealing with likelihood-free inference, such a model and its  likelihood function were called  \textit{implicit}  \citet{diggle1984monte}. In contrast, a model with explicit likelihood is a \textit{prescribed} statistical model. The different  simulator-based models share the basic idea to  adjust  the  parameters   by finding   values which yield  outputs   that resemble the observed data, which raises the issue  of  the assessment of the discrepancy between the observed and simulated data. Considerable advances have been made in how such discrepancy can be converted into approximate likelihood or used to obtain samples from the corresponding posterior distribution of model parameters, however, the question of how to appropriately adjust the discrepancy for changes in model complexity/dimension have been given much less attention in likelihood-free inference research. Largely only the use of Bayes factors in the context of Approximate Bayesian Computation (ABC) have obtained a serious consideration, see  \cite {beaumont2019approximate}, \citet{didelot2011likelihood},  \citet{leuenberger2010bayesian}, \citet{marin2018likelihood}   and  \citet{robert2007lack}.

In related work \citet{corremkos},  we studied the discrepancy   between the observed and simulated data  as measured by the (symmetric) Jensen--Shannon divergence (JSD). It was shown that the asymptotic properties of JSD can be succinctly used to derive estimators, confidence intervals and hypothesis tests for implicit models with categorical output distributions. Here we develop the theory further to obtain an information-theoretically inspired model scoring criterion for such implicit models that can be used for solving the model choice problem in a consistent manner.   
We restrict  to   simulator-based models  that  emit  categorical data, as such data has been the main field  of application of the method, see \citep{corander2017frequency}.   

Our method of model choice is a modification of  Occam$^{,}$s  Razor as developed  by Balasubramanian and co-workers   in  \citet{balasubramanian2005mdl}, \citet{balasubramanian1996geometric} and  
 \citet{myung2000counting}.   Occam$^{,}$s  Razor is based on  an intuitive geometric interpretation of the  meaning of complexity in  model selection.  To cite   \citet{balasubramanian1996geometric},  complexity measures the ratio of the volume  occupied by distinguishable distributions in a model that come close   to the truth relative to the volume of the model as a whole.   Briefly stated, our   modification is to  replace  the Kullback Leibler divergence  in  Occam$^{,}$s  Razor with the Jensen--Shannon divergence  and correspondingly call the result a  JSD-Razor.  Minimization of  $- 1\cdot $logarithm of  JSD-Razor   leads to a criterion that can be used to ranking simulator-based models with respect to the fidelity of their simulation outputs, such that the complexity of the model is accounted for. Asymptotic analysis of the logarithm of  JSD-Razor  leads to two criteria   for model choice, where  the observed fit of a model  is in terms of  minimized JSD   additively penalized. There are two expressions of penalty, the more subtle  one  is  accounting  for  the complexity of the models in the sense of  the geometric interpretation, but is not readily computable.

Model choice between  prescribed models has been extensively studied by    a number  different approaches, see,  e.g., \citet{atkinson1970method}, 
\citet{massart2000some},  or the survey   in  \citet{rao2001model}.  A choice between models
 based on a measure that indicates  the relative flexibility of the models  examining  the extent to which the
candidate models can mimic each other is studied  in \citet{wagenmakers2004assessing}.  This mimicry is based on bootstrapping both  observed  and  
simulated data from prescribed models.  Such a bootstrapping strategy could have been an option for  this work, too. 

The  minimum description length, see  \cite {roos2017minimum}, and Bayesian approaches, see \citet{cavanaugh1999generalizing}, to model choice  are broadly speaking  dealing with the observed fit of a model additively penalized by terms accounting  for  the complexity of the models.  The  model  choice  by  stochastic complexity incorporated in the normalized maximum  likelihood estimate was 
developed in  \citet{rissanen2007information}. The computational techniques of the normalization, when dealing with  nonparametric   models  for categorical  data are found  in \citet{kontkanen2007linear} and  \citet{mononen2008computing}. When the observed fit is measured in terms of the maximized likelihood  function,  this is not feasible  in simulator-based modeling.  

JSD is an instance of a $\phi$-divergence, see, e.g.,  \citet{osterreicher2002csiszar} for a survey.    
The work in   \citet{alba2020model} deals with model choice  on misspecified  prescribed models for categorical data using a general $\phi$-divergence  for fit  and an additive  penalty for  empty cells and is  fundamentally different from the piece of work here.

\section{ Simulator-based Models for Categorical Data}\label{sec:phiestimat11} 
In this  section  a  set of  definitions and notation is recapitulated for  probability distributions for categorical data. This 
involves naturally the probability simplexes in  Euclidean spaces.  
The notion of implicit statistical models for categorical data is defined  formally.  This introduces parameters $\theta$ in the formalism.  Thereafter  one can discuss the various 
settings  for  model choice: separate, overlapping and nested parameter spaces.  Finally we present the so called Birch conditions for  categorical probability distributions with parameter dependencies.  These conditions  lead to  the existence of  the maximum likelihood estimate in the implicit model. 
\subsection{Categorical Distributions }\label{sec:phiestimat} 
Let  ${\cal A}=\{a_{1},\ldots, a_{k}\}$ be a finite set, $k \geq 2$. We are concerned with a situation where 
$k$  and  all  categories $a_{j}$ are known. This excludes the issues   of  very large alphabets discussed in \citet{kelly2012classification}.     $\mathbb{R}^{{\cal A}}$  denotes   the set of real valued functions on  ${\cal A}$. 
We introduce the set of categorical (probability)  distributions as 
\begin{equation}\label{allprob}
\mathbb{P} = \left\{\text{all probability distributions on ${\cal A}$} \right\} \subset  \mathbb{R}^{{\cal A}}.
\end{equation}
The   Iverson bracket  $I_{i}(x )  = [x=a_i]\in  \mathbb{R}^{{\cal A}}$ is  defined for each $a_{i} \in {\cal A}$ by 
\begin{equation}\label{iversonbr}
 I_{i}(x )  =  [x=a_i]:=\left\{ \begin{array}{cc}  1 &  x =a_{i} \\
0 &x  \neq  a_{i} \end{array} \right.
\end{equation}
Any  $ P \in \mathbb{P} $ can be written   as   
\begin{equation}\label{iverson}
P(x)= \prod_{i=1}^{k} p_{i}^{[x=a_i]}, x \in {\cal A},
\end{equation}
where  ($0^{0}=1, 0^{1}=0$), and $ p_{i} \geq 0$, $\sum_{i=1}^{k}p_{i}=1$. 
 The support 
of $P \in \mathbb{P}$  is  
$
{\rm supp}(P)= \{ a_{i} \in {\cal A}  |  \quad p_{i}=P(a_{i}) >0  \}
$.  If  $X$ is a random variable (r.v.) assuming values on ${\cal A}$, 
$X \sim P$ $ \in \mathbb{P}$  means that   $P(X=x)= P(x)$ for all $x \in {\cal A}$.  

Any  $P \in \mathbb{P}$  is also identified  as a probability vector ${\bf p}$,  an element of the probability simplex  $\triangle_{k-1}$ defined by  
\begin{equation}\label{simplex}
\triangle_{k-1} :=\left \{ {\bf p} = \left(p_{1}, \ldots, p_{k} \right)\mid  p_{i} \geq 0 , i=1, \ldots,k;   \sum_{i=1}^{k}p_{i}=1 \right \} \subset \mathbf{R}^{k}.
\end{equation}  
We write  this one-to-one correspondence between $\mathbb{P}$ and $\triangle_{k-1} $ as 
\begin{equation}\label{trmap}
\triangle\left(P \right)= {\bf p}. 
\end{equation} 
The  $ i$-th face of $\triangle_{k-1}$  is defined as    $\partial_{i}\triangle_{k-1}=\{ {\bf p} \in \triangle_{k-1} | p_{i}=0 \}$.  Any  face is in fact a probability 
simplex in $\mathbf{R}^{k-1}$.
The simplicial boundary of  $\triangle_{k-1}$  is  $\partial \triangle_{k-1} =\cup_{i=1}^{p}\partial_{i}\triangle_{k-1}$ $= \{ {\bf p} \in \triangle_{k-1}  |p_{i} =0 
\quad  \text{for some $i$} \}$.
The simplicial or topological  interior of  $ \triangle_{k-1} $   is $\stackrel{o}{\triangle}_{k-1} := \triangle_{k-1} \setminus \partial \triangle_{k-1} $, i.e., 
\begin{equation}\label{topint}
\stackrel{o}{\triangle}_{k-1}= \{ {\bf p} \in \triangle_{k-1}  |p_{i} > 0,  i=1, \ldots,k;   \sum_{i=1}^{k}p_{i}=1  \}.
\end{equation}
We note that 
\begin{equation}\label{fsupport} 
{\rm supp}(P)=  {\cal A}   \Leftrightarrow \triangle(P) \in   \stackrel{o}{\triangle}_{k-1}.
\end{equation}
The assumption  
\begin{equation}\label{posass}
 \triangle(P) \in   \stackrel{o}{\triangle}_{k-1}
\end{equation}
is made  for all models in the sequel. 

Let us set $ z^{\ast}_{i} =\ln \frac{p_{i}}{p_{k}}$, and $k^{\ast}({\bf z}^{\ast}) = -  \ln p_{k}$. Then  any categorical distribution in $ \mathbb{P}$  in Equation~(\ref{iverson})   can be written as
$$
P_{{\bf z}^{\ast}}(x)=  e^{\sum_{i=1}^{k}z^{\ast}_{i}[x=a_{i}] -k^{\ast}({\bf z}^{\ast})}, \quad  x \in {\cal A},
$$
which is an exponential family of distributions.  It is shown in \citet[p.~186]{amari2010information} that ${\bf p}$ and  ${\bf z}^{\ast}$ are, respectively,  the primal and dual variables in the entropy geometry of   
$\mathbb{P}$.  In this  $\mathbb{P}$ is a Riemann manifold, where the squared local distance is determined by the Hessian of  $ G({\bf p})=\sum_{i=1}^{k} \left( p_{i} \ln p_{i}-p_{i}\right)$. \citet{amari2010information}  as well as \citet{pistone2019lecture} 
deal with the information geometry of the non-parametric  probability simplex, not the parameterized  ones  of  Section \ref{sec:phiestimat22}.
An  argument for  indexing  probability simplices  with parameters    in terms of statistical  precision   is found in   \citet{altham1984improving}.

\subsection{Simulator Modeling Represented as an Implicit Statistical  Model}\label{sec:phiestimat22} 

Consider  $P_{o} \in \mathbb{P}$  as   the so-called  true distribution.  $P_{o}$ is  otherwise unknown, except the  observed data 
$\mathbf{D}= (D_{1}, \ldots, D_{n_{o}}) $,  are assumed to be  an i.i.d. $n_{o}$-sample  from a data source under $P_{o}$,  $\mathbf{D} 
\sim P_{o}$. In likelihood-free inference one as a rule  reduces the observed data $\mathbf{D}$  to
some features, or summary statistics, before performing inference. The role  of the
summary statistics is to reduce the dimensionality and to filter out information which is not
deemed relevant for the inference. 
 The summary statistics in this work  will be the empirical distribution $\widehat{P}_{\mathbf{D}} \in \mathbb{P}$. This is  computed  
in terms of the relative frequencies of the categories $a_{j}$ in   $\mathbf{D}$. Formally, we write           
$
\widehat{p}_{i}= \frac{n_{i}}{n_{o}}, \quad i=1, \ldots, k$,
where $n_{i} = $ the number of samples $Z_{j}$ in $\mathbf{D}$ such that $Z_{j}=a_{i}$, and  following Equation~(\ref{iverson})
\begin{equation}\label{typen}
\widehat{P}_\mathbf{D}(x)=\prod_{i=1}^{k} \widehat{p}_{i}^{[x=a_{i}]}, \quad x \in{\cal A}. 
\end{equation}
The sufficiency of this summary statistics has been established in \cite{corremkos2}. 

Nect, ${\mathbf M}_{C}$  is  a  simulator model for the data source.  Citing   \citet{lintusaari2017fundamentals}, simulator models   can be understood  in our case as computer programs  that  take as input  random numbers  $V$ and the parameter $\theta  \in \Theta \subset  \mathbb{R}^{d}$, $d ={\rm dim}\left(\Theta \right) < k =\mid{\cal A}\mid$  and  produce as output  $\mathbf{X}=\left(X_{1}, \ldots, X_{n}\right)$,  $n$ i.i.d. samples of categories in  ${\cal A}$.   The platform in \citet{lintusaari2017fundamentals} is not limited to  categorical data. 
 We write the corresponding function as 
 ${\mathbf M}_{C}(\theta)$.   
By this designation,
$
p_{i}(\theta)= P \left( X= a_{i}  \right) = P \left( {\mathbf M}_{C}(\theta) = a_{i}  \right) 
$
 for any  $\theta \in \Theta$  induces the category probabilities, i.e.,   $k$ functions that  have no (fully)  explicit expression, i.e., they are implicit functions  of $\theta$ in the sense  of \citet{diggle1984monte}   satisfying  $p_{i}(\theta) \geq 0$, $\sum_{i=1}^{k} p_{i}(\theta)=1$ for all $\theta \in \Theta$.   There is the  distribution $P_{\theta} \in \mathbb{P}$  given by 
\begin{equation}\label{pariversion}
P_{\theta}(x):= \prod_{i=1}^{k}p_{i}(\theta)^{[x=a_{i}]}, x \in {\cal A}, \theta \in \Theta.   
\end{equation}   
The   implicit model  representation  of 
 ${\mathbf M}_{C}$ in $\mathbb{P}$  is denoted  by   $\mathbb{M}_{p}= \left\{ P_{\theta} \mid  \theta \in \Theta \right \}$, 
$$
 {\mathbf M}_{C} = \left\{ {\mathbf M}_{C} (\theta) \mid  \theta \in \Theta \right \}\models  \mathbb{M}_{p}= \left\{ P_{\theta} \mid  \theta \in \Theta \right \} \subset \mathbb{P}. 
$$
We are going to use the customary notation   $ {\mathbf X}\sim   P_{\theta}$ which is to be understood in the above sense of generative simulator-based sampling, not as sampling  from  a known categorical distribution. $\widehat{P}_{\mathbf{X}} \in \mathbb{M}_{n}(\theta) $ is a shorthand for the statement that 
  $\widehat{P}_{\theta}$  is the  summary statistics for  $\mathbf{X} \sim   P_{\theta}$.
By Equation~(\ref{trmap})  there corresponds   to the representation  $\mathbb{M}_{p}= \left\{ P_{\theta} \in  \mathbb{P}\mid  \theta \in \Theta \right \} \subset \mathbb{P}$  a submanifold  of  $\triangle_{k-1}$  
by 
\begin{equation}\label{modellimage}
\triangle\left( \mathbb{M}_{p} \right) =\left \{  {\bf p} \in \triangle_{k-1} |  
\text{ $\exists  P_{\theta} \in \mathbb{M}_{p}$ such that $ {\bf p}= \triangle \left(P_{\theta} \right)$}  \right\}. 
\end{equation}
 Let us consider   simulator models  $\mathbb{M}_{p}^{(l)}$, $l=1,  \ldots, L$ with corresponding $\Theta^{(l)}$ $\subset \mathbb{R}^{k}$. 
 There are three different situations  for   $ \mathbb{M}_{p}^{(l_{1})}$  
and  $ \mathbb{M}_{p}^{(l_{2})}$  to be  related to each other. We assume the (weak)  identifiability  of the models, 
$
\theta \neq \theta^{'}  \Rightarrow  P_{\theta} \neq P_{\theta^{'}}, 
$
see Section \ref{antaganden}.
\begin{enumerate}
\item  $\Theta^{(l_{1})}$ and  $\Theta^{(l_{2})}$ are separate, i.e.,  $\Theta^{(l_{1})} \bigcap \Theta^{(l_{2})} =  \emptyset$.  This can also mean that $\Theta^{(l_{1})}$ and  $\Theta^{(l_{2})}$  have  no relations between
each other, this is the problem of  model choice  in \citet{atkinson1970method}.
\item   $\Theta^{(l_{1})}$ and   $\Theta^{(l_{2})}$ are overlapping, i.e. $\Theta^{(l_{1})}$ and   $\Theta^{(l_{2})}$ have a nonempty intersection, but 
are not subsets of each other. 
\item   $\Theta^{(l_{1})}$ and   $\Theta^{(l_{2})}$ are nested,  e.g.,   $\Theta^{(l_{1})}  \subset \Theta^{(l_{2})}$,
\end{enumerate}
see, e.g., \citet[p.~317, p.~320, p.~323]{vuong1989likelihood}, 
where   the submanifolds $ \mathbb{M}_{p}^{(l_{i})}$ are  for prescribed models. 
\begin{example}\label{multicateg}
Suppose that  each  category $a_{i}$   is associated with  a predictor $ \mathbf{\alpha}^{(i)}$ with $d$  state variables $\alpha^{(i)}_{s} $, which may  be real-valued, binary-valued, categorical-valued, etc.,  fixed   characteristics    of  the category  $a_{i}$. Let us write 
\begin{eqnarray}
a_{i}  & \leftrightarrow & \mathbf{\alpha}^{(i)}= \left( \alpha^{(i)}_{1}, \ldots,\alpha^{(i)}_{d} \right), i=1, \ldots , k-1 \nonumber \\
a_{k}  & \leftrightarrow &  \mathbf{\alpha}^{(k)} = \mathbf{0}_{d}:= \left( 0, \ldots,  0 \right) \nonumber 
\end{eqnarray}
with the arbitrary choice  of  $a_k$ as  a base   of $d$ zero states.  $\theta= \left(\theta_{1}, \ldots,\theta_{d} \right)$ 
is parameter vector  in some $\Theta$.     We set 
$\langle \mathbf{\alpha}^{(i)}, \theta \rangle :=\sum_{s=1}^{d}\alpha^{(i)}_{s}  \theta_{s}  $. Furthermore we set 
$
M(\theta):= \ln \left(  1 + \sum_{i=1}^{k-1} e^{\langle \mathbf{\alpha}^{(i)}, \theta \rangle }\right)
$ 
and the (prescribed) category probabilities in Equation~(\ref{pariversion}) are defined by 
\begin{equation}\label{multicat1}
p_{i}(\theta):= e^{\langle \mathbf{\alpha}^{(i)}, \theta \rangle -M(\theta)}, \quad i=1, \ldots, k, 
\end{equation}  
where the convention on  $\alpha^{(k)}$ gives 
$
p_{k}(\theta)= 1/\left(  1 + \sum_{i=1}^{k-1} e^{\langle \mathbf{\alpha}^{(i)}, \theta \rangle} \right) 
$.
Here we can obviously define  nested models with different dimensions by identifying     $\theta^{l_{1}}= \left(\theta_{1}, \ldots,\theta_{l_{1}} \right)$  
  $\in \Theta^{(l_{1})} $   as  $ \theta^{l_{2}}= \left(\theta_{1}, \ldots,\theta_{l_{1}},  0, \ldots, 0 \right)$ $\in \Theta^{(l_{2})}$, where  
$l_{1} < l_{2} < k$.  The  parameter vector   $\theta=\mathbf{0}_{l} $ with appropriate number of zeros  lies thus  in  every  $\Theta^{(l)}$, and Equation~(\ref{multicat1})
becomes the discrete uniform distribution $P_{U}$ on ${\cal A}$, i.e.,  
\begin{equation}\label{unif}
P_{U}(x)= \prod_{i=1}^{k} \left(  \frac{1}{k}\right)^{[x=a_i]},  \quad  x \in {\cal A}. 
\end{equation}
 Hence $P_{U}$   can be regarded as a model with dimension zero, as  $k$ is known in advance.    $\triangle\left( P_{U} \right)$ is known as   the barycenter of  $\triangle_{k-1}$. \end{example}

\begin{example}\label{multicateg2}
We consider a  special case of Example \ref{multicateg} with  $k=3$  and $d=0,1,2$.  Suppose that  each  category $a_{i}$   is associated with  a two-bit  string as follows 
$$
a_{1}   \leftrightarrow  \mathbf{\alpha}^{(1)}= \left( 1 ,  0\right) , 
a_{2}  \leftrightarrow  \mathbf{\alpha}^{(2)}= \left( 0 , 1\right), 
a_{3}   \leftrightarrow  \mathbf{\alpha}^{(3)}= \left( 0 , 0\right) .
$$
The nested models are given in terms of  $\theta \in \mathbb{R}^{2}$. 
\begin{description}
\item[ (i) $\theta= \left(\theta_{1},  \theta_{2} \right)$]
 Substitution  in   Equation~(\ref {multicat1}) gives 
$
M_{2}(\theta)= \ln \left(  1 +  e^{ \theta_{1}}   + e^{ \theta_{2}}   \right)
$  and 
\begin{equation}\label{multicat12}
p_{1}(\theta)= e^{ \theta_{1}    -M_{2}(\theta)}, p_{2}(\theta)= e^{ \theta_{2}    -M_{2}(\theta)},  p_{3}(\theta)= \frac{1}{ 1 +  e^{ \theta_{1} }  + e^{ \theta_{2}}  }.
\end{equation} 
\item[ (ii) $\theta= \left(\theta_{1}, 0\right)$] 
Here $M_{1}(\theta)= \ln \left( 2 +  e^{ \theta_{1}}      \right)$
and  Equation~(\ref {multicat1}) gives  
\begin{equation}\label{multicat13}
p_{1}(\theta)= e^{ \theta_{1}    -M_{1}(\theta)}, p_{2}(\theta)= \frac{1}{ 2 +  e^{ \theta_{1} } } ,  p_{3}(\theta)= \frac{1}{ 2 +  e^{ \theta_{1} } }  .
\end{equation} 
\item[ (iii) $\theta= \left(0, 0\right)$] 
And  $M_{0}(\theta)= \ln \left(  3     \right)$,  Equation~(\ref {multicat1}) gives 
\begin{equation}\label{multicat14}
p_{1}(\theta)= \frac{1}{ 3 }, p_{2}(\theta)=\frac{1}{ 3 },  p_{3}(\theta)= \frac{1}{ 3 }.
\end{equation} 
\end{description}
\end{example}

\subsection{Assumptions and Existence of Maximum Likelihood Estimate for Simulator Modeling} \label{antaganden}
For the further analysis  a set of  notations and assumptions on the flexibility of the simulator model  are required.
This is analogous  to the  KOH  theory of the smoothness of the functions in $\mathbf{M}_{C}$ \citep[p.~2]{kennedy2000predicting}. 
We need some notational conventions. In the sequel  $  \mathbf{x} \in  \mathbf{R}^{k}$ is a $1 \times k$, a  row vector, and $  \theta  \in \Theta \subset  \mathbf{R}^{d}$ is a $1 \times d$   row vector.  Hence 
${\bf x} {\bf x}^{T}=\sum_{i=1}^{k} x_{i}^{2}$ is a scalar product.   $|| \mathbf{x} ||_{2, \mathbf{R}^{k}} =\sqrt{{\bf x} {\bf x}^{T}}$ is  the Euclidean norm on $ \mathbf{R}^{k}$ and similarly  for 
 $|| \theta ||_{2, \mathbf{R}^{d}}$.
\begin{assum}\label{kontdiff}
 For every $\theta_{o}$ in the interior of $\Theta$  and every $j=1, \ldots, k$ we  have 
\begin{equation}\label{nablaexp1}
 p_{j}(\theta)=   p_{j}(\theta_{o})+ \left( \theta - \theta_{o} \right)p_{j}^{'}(\theta_{o})^{T} + o\left( || \theta - \theta_{o} ||_{2, \mathbf{R}^{d}}\right).
\end{equation}
where we have the $1 \times d$   total differential  
\begin{equation}\label{nablanot4}
p_{j}^{'}(\theta):=  \left( \frac{\partial}{\partial \theta_{1}}p_{j}(\theta), \ldots, \frac{\partial}{\partial \theta_{d}}p_{j}(\theta) \right).
\end{equation}
 \end{assum}
Let  us define for $i \in \{1, \ldots, d\}$ and $j \in  \{1, \ldots, d\}$  and $X \sim P_{\theta}$  the expectation 
\begin{equation}\label{ijfisher}
I_{ij}(\theta) := E \left[  \frac{\partial}{\partial \theta_{i}}  \ln P_{\theta}(X) \frac{\partial}{\partial \theta_{j}}  \ln P_{\theta}(X) \right].
\end{equation}  
The $d \times d$ matrix  
\begin{equation}\label{nablanot2}
 I(\theta):= \left[ I_{ij}(\theta) \right]_{i=1,j=1}^{d,d}
\end{equation}
is  the Fisher   information matrix of $\mathbb{M}_{p}$ at $\theta$.
\begin{assum}\label{kontdiff2}
For $\theta_{o}$ such that $p_{j}(\theta_{o}) >0$  for each $j=1, \ldots, k$  and  for all      $\theta$ in the interior of $\Theta$
\begin{equation}\label{nablaexp2}
 p_{j}(\theta)=   p_{j}(\theta_{o})+ \left( \theta - \theta_{o} \right)p_{j}^{'}(\theta_{o})^{T} +  \frac{1}{2}(\theta - \theta_{o})H(\theta_{o})  ( \theta - \theta_{o})^{T} +  o\left( || \theta - \theta_{o} ||^{2}_{2, \mathbf{R}^{d}}\right)
\end{equation}
where  $H(\theta_{o}) $  is  the Hessian with the elements 
\begin{equation}\label{ijfisher34}
H_{ij}(\theta_{o}) = -E  \ \left[  \frac{\partial^{2}}{\partial \theta_{i}\partial\theta_{j}}  \ln P_{\theta_{o}}(X) \right].
\end{equation}  
Under further regularity assumptions we have   $ H(\theta_{o}) =-I(\theta_{o})$. 
 \end{assum}
The  Jacobian $J\left(\theta \right)$ is the $k \times d$ matrix  
\begin{equation}\label{nablanot3}
J\left(\theta \right):=  \left( \begin{array}{cc}    p_{1}^{'}(\theta) \\
\vdots \\
  p_{k}^{'}(\theta) \end{array} \right)
\end{equation}
which has  the vectors $p_{j}^{'}(\theta)$ defined in Equation~(\ref{nablaexp2}) as its  rows. Furthermore we consider 
the $k \times k$ diagonal matrix 
\begin{equation}\label{nabladiag}
\Lambda(\theta):= {\rm diag}\left(   \frac{1}{\sqrt{p_{1}(\theta)}},\ldots,   \frac{1}{\sqrt{p_{k}(\theta)}}  \right). 
\end{equation}
Let us define
\begin{equation}\label{afaktor}
A(\theta) :=  \Lambda(\theta)J(\theta),
\end{equation}
which is a $k \times d$-matrix  and assume 
\begin{assum}\label{invfisher}
The  rank of $A(\theta)$ is $d$. 
\end{assum}
The following assumption is the strong  identifiability  condition of    \citet[(B), p.~817]{birch1964}.
\begin{assum}\label{invkontbirch}
For any $\epsilon >0$ there exists $\delta >0$ such that, 
\begin{equation}\label{birchinvers}
\text{if $    || {\bf p}(\theta)- {\bf p}(\theta_{0})||_{2, \mathbf{R}^{k}} > \epsilon$,  then 
 $||\theta - \theta_{0}||_{2, \mathbf{R}^{d}} > \delta$.}
\end{equation}
\end{assum}
Clearly this implies the weak  identifiability  assumption
\begin{assum}\label{identifiabililty}
 \begin{equation}\label{identass} 
\theta \neq \theta^{'}  \Rightarrow  P_{\theta} \neq P_{\theta^{'}}. 
\end{equation} 
 \end{assum}
Under this assumption  $ {\bf p}= \triangle \left(P_{\theta} \right)$ is  a one-to-one  map between $\triangle\left( \mathbb{M}_{p} \right) $ and $\mathbb{M}_{p}$. 

\citet{birch1964} 
proved under these assumptions that the MLE, maximum (prescribed) likelihood estimate of a true distribution   $P_{\theta_{o}}$ in a $\mathbb{M}_{p}$   exists and is almost surely consistent.  For simulator-modeling we  re-formulate this by the next proposition. 
\begin{proposition}\label{existensmle}
 MLE exists and is consistent for the simulator model ${\mathbf M}_{C} = \left\{ {\mathbf M}_{C} (\theta) \right\}$, in the sense that  MLE  exists and is consistent  for  $ \mathbb{M}_{p}= \left\{ P_{\theta} \mid  \theta \in \Theta \right \}$ for compact $\Theta$  under Assumptions \ref{kontdiff}--\ref{invkontbirch}, where  
${\mathbf M}_{C} = \left\{ {\mathbf M}_{C} (\theta) \mid  \theta \in \Theta \right \}$ $ \models  \mathbb{M}_{p}=\left\{ P_{\theta} \mid  \theta \in \Theta \right \}$.
\end{proposition}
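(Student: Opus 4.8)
The plan is to reduce the statement to the classical consistency theorem of \citet{birch1964}, applied to the representation $\mathbb{M}_p=\{P_\theta\mid\theta\in\Theta\}$ supplied by the relation $\models$. The point of that relation is that, even though $\theta\mapsto p_i(\theta)$ is only an implicit function, the induced object $P_\theta$ is a bona fide categorical distribution in $\mathbb{P}$, so the likelihood of the observed data is the ordinary multinomial likelihood. Writing $n_i$ for the category counts of $\mathbf{D}$, the log-likelihood under $P_\theta$ is $\ell(\theta)=\sum_{i=1}^{k} n_i\ln p_i(\theta)$, which by the positivity assumption (\ref{posass}) is finite for every $\theta\in\Theta$; since $\widehat{P}_{\mathbf{D}}$ is sufficient \citep{corremkos2}, maximizing $\ell$ over $\Theta$ is the correct notion of MLE for the simulator model, and I would define $\widehat{\theta}$ as its maximizer.

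Next I would verify that Assumptions \ref{kontdiff}--\ref{invkontbirch} are exactly Birch's hypotheses transcribed into the present notation. Birch's smoothness condition is furnished by the first- and second-order expansions (\ref{nablaexp1}) and (\ref{nablaexp2}); the non-degeneracy of the Fisher information is Assumption \ref{invfisher}, because a direct computation from (\ref{ijfisher}) and (\ref{afaktor}) gives $I(\theta)=A(\theta)^{T}A(\theta)$, which is positive definite precisely when $A(\theta)$ has rank $d$; and Birch's strong identifiability condition (B) is Assumption \ref{invkontbirch} verbatim, cf.\ (\ref{birchinvers}). The essential observation is that every one of these conditions is phrased solely in terms of the induced probabilities $p_i(\theta)$ and their derivatives, so none of them appeals to an explicit likelihood and all are meaningful in the implicit setting.

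With the hypotheses matched I would invoke \citet{birch1964} to obtain existence and almost sure consistency of $\widehat{\theta}$: existence follows because $\ell$ is continuous on the compact set $\Theta$ (the boundary of the simplex being excluded by (\ref{posass})) and hence attains its maximum, while consistency follows from the finite-alphabet strong law $\widehat{\mathbf{p}}\to\mathbf{p}(\theta_o)$ combined with the strong identifiability (\ref{birchinvers}), which transfers convergence in the simplex to convergence $\widehat{\theta}\to\theta_o$ in $\Theta$. The main obstacle is conceptual rather than computational: one must justify that a theorem stated for prescribed categorical families applies to an implicit one. I expect the resolution to be that Birch's proof never uses a closed-form likelihood --- it uses only the expansions (\ref{nablaexp1})--(\ref{nablaexp2}), the rank of $A(\theta)$, and the identifiability (\ref{birchinvers}) --- all of which $\models$ guarantees, so that the only genuinely new bookkeeping is confirming, via (\ref{posass}), that the compactness argument for existence is not spoiled by degeneracy on the boundary.
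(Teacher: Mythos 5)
Your proposal is correct and takes essentially the same route as the paper: the paper's own justification is precisely that Assumptions \ref{kontdiff}--\ref{invkontbirch} are Birch's conditions, phrased purely in terms of the induced category probabilities $p_{i}(\theta)$ and their derivatives, so existence and almost-sure consistency of the MLE for the implicit family $\mathbb{M}_{p}$ follow by direct appeal to \citet{birch1964}. Your additional bookkeeping (the multinomial log-likelihood, existence via continuity on compact $\Theta$ under Equation~(\ref{posass}), and consistency via the strong law combined with the strong identifiability in Equation~(\ref{birchinvers})) fills in details the paper delegates to the citation, but does not change the approach.
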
 
The claim about existence is established  in  \citet[p.~818]{birch1964} also for general $\Theta$  by various  considerations of how MLE can be introduced. It has been shown in \citet{corremkos}  for categorical data that the minimum JSD estimate, to be defined  in Equation~(\ref{minjsd}), and 
MLE are    asymptotically equal, when  $n_{o} \rightarrow +\infty$.  It is shown in \citet{corremkos2} that the minimum JSD estimate exists and is measurable   for compact  $\Theta$.

\section{The Symmetric Jensen--Shannon Divergence }
In this section we introduce  the (symmetric) Jensen--Shannon divergence (JSD)  as the measure of discrepancy between the observed and simulated  data summaries. We  note an interpretation of  JSD  as redundancy of  the optimal source coding for a mixture source.  We establish the JSD as a $\phi$-divergence, which makes it possible to 
use certain general properties of    $\phi$-divergences for JSD.    
\subsection{ Definition of Symmetric  JSD } 
Consider  two generic categorical probability   distributions $P\in \mathbb{P}$: $ P(x)= \prod_{i=1}^{k} p_{i}^{[x=a_{i}]}$  and $Q\in \mathbb{P}$: $ Q(x)= \prod_{i=1}^{k} q_{i}^{[x=a_{i}]}$. Then 
\begin{equation}\label{klinformation}
 D_{\rm KL}( P, Q) := \sum_{x \in {\cal A} } P(x) \ln \left( \frac{ P(x)}{ Q(x)}\right) = \sum_{i=1}^{k} p_{i} \ln \left( \frac{ p_{i}}{q_{i}}\right) 
 \end{equation}
is known as  the Kullback--Leibler divergence (KLD). In general  $D_{\rm KL}( P, Q) \neq  D_{\rm KL}( Q,P)$, if $P\neq Q$. We use $0\ln0 =0$ and if  ${\rm supp}(Q) \subset {\rm supp}(P)$,  we take $D_{\rm KL}( P, Q) = +\infty$. Next,  the  symmetric Jensen--Shannon divergence is   denoted by  $D_{\rm JS}( P, Q)$,    and  is defined with $M:= \frac{1}{2} P + \frac{1}{2}Q$,
i.e.,  $M(x)= \prod_{i=1}^{k} \left( \frac{1}{2} p_{i} +  \frac{1}{2} q_{i} \right)^{[x=a_{i}]}$, as 
  \begin{equation}\label{jssymm} 
D_{\rm JS}( P, Q) :=  \frac{1}{2}  D_{\rm KL}( P, M) +  \frac{1}{2}  D_{\rm KL}( Q,  M).
 \end{equation} 
 $D_{\rm JS}( P, Q)$  is  a  symmetrized version of KLD,  as $ D_{\rm JS}( P, Q)= D_{\rm JS}( Q,P)$ 
and  a smoothed version, since  $ D_{\rm JS}( P, Q) $ is uniformly bounded even if ${\rm supp}(Q) \subset {\rm supp}(P)$ or ${\rm supp}(P) \subset {\rm supp}(Q)$, as found in Equation~(\ref{range}).

The important result  in the  following  proposition  is  provided  for visibility  and ease of  reference.  
\begin{proposition}\label{sqrmetric}
$  \sqrt{D_{\rm JS}\left( P,Q \right)}$ is a metric on $\mathbb{P}\times \mathbb{P}$. 
\end{proposition}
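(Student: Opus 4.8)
The plan is to verify the four defining properties of a metric for $d(P,Q):=\sqrt{D_{\rm JS}(P,Q)}$ on $\mathbb{P}\times\mathbb{P}$; the first three are routine and the triangle inequality is the only substantial point. Symmetry is immediate from Equation~(\ref{jssymm}), since the mixture $M=\tfrac12 P+\tfrac12 Q$ is symmetric in its arguments, so $D_{\rm JS}(P,Q)=D_{\rm JS}(Q,P)$ and hence $d(P,Q)=d(Q,P)$. Non-negativity holds because each Kullback--Leibler term in Equation~(\ref{jssymm}) is non-negative by Gibbs' inequality, so $D_{\rm JS}\ge 0$ and $d$ is a well-defined non-negative real. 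For definiteness, $d(P,Q)=0$ forces $D_{\rm KL}(P,M)=D_{\rm KL}(Q,M)=0$, hence $P=M$ and $Q=M$, so $P=Q$; the converse is clear, as $P=Q$ gives $M=P$ and both divergences vanish.

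The only nontrivial axiom is $d(P,R)\le d(P,Q)+d(Q,R)$. I would obtain it not through a direct three-point estimate but by exhibiting $\sqrt{D_{\rm JS}}$ as a \emph{Hilbertian} metric, that is, by producing an isometric embedding $\Phi$ of $(\mathbb{P},d)$ into a Hilbert space; the triangle inequality then follows for free from the Hilbert norm. By the classical embedding theorem of Schoenberg, such an embedding exists precisely when $D_{\rm JS}$ is \emph{conditionally negative definite} on $\mathbb{P}$: for every finite family $P^{1},\dots,P^{m}\in\mathbb{P}$ and reals $c_1,\dots,c_m$ with $\sum_\alpha c_\alpha=0$ one has $\sum_{\alpha,\beta}c_\alpha c_\beta\,D_{\rm JS}(P^\alpha,P^\beta)\le 0$. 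The embedding is then the standard GNS construction with $\langle\Phi(P),\Phi(Q)\rangle=\tfrac12[D_{\rm JS}(P,P_0)+D_{\rm JS}(Q,P_0)-D_{\rm JS}(P,Q)]$ for a fixed base point $P_0$. Thus the whole proof collapses to verifying conditional negative definiteness.

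Writing $\ell(t):=t\ln t$ and using Equation~(\ref{klinformation}), one checks the coordinatewise decomposition $D_{\rm JS}(P,Q)=\sum_{i=1}^{k}\delta(p_i,q_i)$ with single-letter kernel $\delta(a,b)=\tfrac12\bigl[\ell(a)+\ell(b)-\ell(a+b)\bigr]+\tfrac12(a+b)\ln 2$, where the constant term is legitimate because $\sum_i(p_i+q_i)=2$. Since conditional negative definiteness is preserved under summation over coordinates, and since the additive pieces $\ell(a)+\ell(b)$ and $(a+b)\ln 2$ contribute $0$ whenever $\sum_\alpha c_\alpha=0$, the claim reduces to one scalar statement: the kernel $(a,b)\mapsto\ell(a+b)=(a+b)\ln(a+b)$ is conditionally \emph{positive} definite on $[0,\infty)$. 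This I would prove with Frullani's identity $\ln t=\int_0^\infty\frac{e^{-s}-e^{-ts}}{s}\,ds$; substituting it, again discarding the additive term annihilated by $\sum_\alpha c_\alpha=0$, and integrating by parts (the boundary term at $s\to0$ vanishes because $\sum_\alpha c_\alpha e^{-sa_\alpha}\to\sum_\alpha c_\alpha=0$, and at $s\to\infty$ by $s^{-2}$ decay) turns the quadratic form into $\int_0^\infty s^{-2}\bigl(\sum_\alpha c_\alpha e^{-sa_\alpha}\bigr)^2\,ds\ge 0$, which is manifestly non-negative.

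The genuine work, and the step I expect to be the main obstacle, is precisely this last analytic identity: expressing the elementary $t\ln t$ through Frullani and justifying the interchange of summation and integration together with the boundary terms in the integration by parts. Everything upstream of it is formal, namely the coordinatewise decomposition, the vanishing of additive terms under $\sum_\alpha c_\alpha=0$, and the passage from conditional negative definiteness through the Hilbert embedding to the triangle inequality. Alternatively one may cite the same conclusion directly from the $\phi$-divergence and Hilbertian-metric literature referenced earlier in the paper, but the Frullani computation has the virtue of making the argument self-contained.
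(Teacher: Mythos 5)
Your proof is correct, but it follows a genuinely different route from the paper, which offers no argument of its own for this proposition: it defers entirely to \citet{endres2003new} (see also \citet{vajda2009metric}), where the triangle inequality is obtained by an elementary calculus analysis of the per-coordinate kernel and then lifted to $\mathbb{P}$ by Minkowski's inequality. What you give instead is essentially the Hilbert-space embedding argument of Fuglede and Tops{\o}e (Proc.\ IEEE ISIT 2004). The substance checks out: the coordinatewise decomposition $D_{\rm JS}(P,Q)=\sum_{i}\delta(p_i,q_i)$ with $\delta(a,b)=\tfrac12\bigl[\ell(a)+\ell(b)-\ell(a+b)\bigr]+\tfrac12(a+b)\ln 2$ is exact (the remark about $\sum_i(p_i+q_i)=2$ is not even needed); additive kernels are annihilated whenever $\sum_\alpha c_\alpha=0$; and the Frullani-plus-integration-by-parts computation is sound, giving
\begin{equation*}
\sum_{\alpha,\beta}c_\alpha c_\beta\,(a_\alpha+a_\beta)\ln(a_\alpha+a_\beta)
=\int_0^\infty \frac{1}{s^{2}}\Bigl(\sum_\alpha c_\alpha e^{-s a_\alpha}\Bigr)^{2} ds\;\ge 0,
\end{equation*}
so $D_{\rm JS}$ is conditionally negative definite and Schoenberg's theorem delivers the isometric embedding and with it the triangle inequality. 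Two small repairs: the boundary term at $s\to\infty$ vanishes because $F(s)=\bigl(\sum_\alpha c_\alpha e^{-sa_\alpha}\bigr)^{2}$ is bounded, so $F(s)/s\to 0$, not because of ``$s^{-2}$ decay''; and note that non-negativity follows automatically from conditional negative definiteness with vanishing diagonal (take $c=(1,-1)$), while the identity of indiscernibles does require the direct Kullback--Leibler argument you give. As for what each approach buys: the route the paper cites is elementary and short, needing nothing beyond calculus and Minkowski; your route is self-contained modulo Schoenberg's theorem and proves strictly more, namely that $\sqrt{D_{\rm JS}}$ is a \emph{Hilbertian} metric, i.e., $\bigl(\mathbb{P},\sqrt{D_{\rm JS}}\bigr)$ embeds isometrically into a Hilbert space --- a stronger conclusion of independent interest (e.g., for kernel methods) that the elementary proof does not provide.
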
 
This is established  in \citet{endres2003new}, see also \citet{vajda2009metric}.  We shall also use $  D^{1/2}_{\rm JS}\left( P,Q \right)$   for $  \sqrt{D_{\rm JS}\left( P,Q \right)}$. 

The symmetric JSD  is often used in machine learning, see e.g.,  \citet{corander2017frequency} and   \citet{corremkos} for references.  There are   reasons for that: $D_{\rm JS}( P, Q)$ is bounded, nonnegative  and has an operational meaning pointed out next.  

\subsection{Interpretation as  Redundancy of Source Code for  $ X \sim \left( \frac{1}{2} P + \frac{1}{2} Q \right)$} \label{topsoecode}
We recapitulate  the idea from  \citet{topsoe1979information}, see also   \citet[p.~1858]{endres2003new}. 
As above,  consider  drawing  an i.i.d. sample $ \mathbf{X}= \left(X_{1}, \ldots, X_{n} \right)  \sim \left( \frac{1}{2} P + \frac{1}{2} Q \right)$,  
where $P$  and $Q$ are known distributions.  For any  $X_{i}$ we do not know which of $P$  or $Q$
was drawn from. Next we  seek  the  source  coding   that gives the
shortest average code length for the compression  of    $ \mathbf{X}$, see \citet[Ch. 5.3]{cover2012elements}. Let $R \in \mathbb{P}$  and  
$l_{i}$  be  the code length  $l_{i}=-  \ln r_{i}$. Let us call this code $\kappa$. Then the  expected code length of $\kappa$  is 
$$
\frac{1}{2}\sum_{i=1}^{k} l_{i}p_{i} + \frac{1}{2}\sum_{i=1}^{k} l_{i}q_{i}. 
$$
The  minimal  code length is obtained by selecting $R=M$  and the minimum is 
 the Shannon entropy of   $M $ in natural  logarithm, 
$$
H(M)=H\left( \frac{1}{2} P  + \frac{1}{2}Q \right):= -\sum_{i=1}^{k} \left(\frac{1}{2}p_{i} +  \frac{1}{2} q_{i}\right) 
  \ln \left(\frac{1}{2}p_{i} +  \frac{1}{2} q_{i}\right),
$$
see   \citet[Ch. 5.3]{cover2012elements}. On the other hand, a genie, who knows which of the two distributions  was chosen  
to generate the individual  $X_{i}$,   can by the same  argument  device a data compression  code with a shorter expected  minimum code length that is  
equal to $\frac{1}{2}H(P) + \frac{1}{2}H(Q) $. 
 Then    $D_{\rm JS}( P, Q)$ is the redundancy of the code  $\kappa$, because   
\begin{equation}\label{klinformationiudnet}
 D_{\rm JS}( P, Q) =H(M) -  \frac{1}{2}H(P) -  \frac{1}{2}H(Q).  
\end{equation}
This is a special case of an  identity  in \citet[Lemma 4]{topsoe1979information}.  The  right hand side of Equation~(\ref{klinformationiudnet})  is the Shannon-Jensen divergence  $D_{\rm JS}( P, Q)$  as  defined in \citet{lin1991divergence}.   

As is well-known, see \citet[p.~19]{cover2012elements},  $D_{\rm KL}(P,M)$  can be understood  as the
inefficiency of assuming that the true distribution is $M$ when it actually is $P$. Therefore  then  $ D_{\rm JS}( P, Q)$  could be seen as a minimum inefficiency distance, as formulated  in   \citet[p.~1859]{endres2003new}. 

\subsection{JSD and  $\phi$-divergences }
Consider 
\begin{equation}\label{jsddiv}
\phi_{\rm JS}(u)= \frac{1}{2} u \ln u - \frac{1}{2}\left( u +1 \right)\ln  \left(   \frac{1}{2} u +  \frac{1}{2} \right),  0 <  u < +\infty.
\end{equation}
One can  check that  $\phi_{\rm JS}$ is a  convex function on $(0, + \infty)$ $\stackrel{\phi}{\mapsto} \mathbf{R}$  and  has the properties $0 \phi_{\rm JS} \left( \frac{0}{0} \right) =0$ and $0\phi_{\rm JS}(x/0)= \lim_{\epsilon  \rightarrow 0} \epsilon
\phi_{\rm JS}(x/\epsilon)$,  $\phi(1)=0$.   It holds also  that 
\begin{equation}\label{jsddivdef}
D_{\rm JS}( P, Q) =  \sum_{x \in {\cal A} } Q(x) \phi_{\rm JS}\left( \frac{ P(x)}{ Q(x)}\right). 
\end{equation}
Hence  $D_{\rm JS}( P, Q) $ is a special case of  a $\phi$-divergence,  see   \citet[Ch. 8 and 9]{vajda1989theory} for the general theory and  \citet{osterreicher2002csiszar} for a concise summary.
In addition  the $^{\ast}$-conjugate $\phi^{\ast}$ of any divergence function  $\phi$  is defined by 
$
\phi^{\ast}(u):= u \phi \left( 1/u \right) $ for  $ 0 \leq u < +\infty$.
Then it turns out that    
\begin{equation}\label{range}
0=\phi_{\rm JS}(1) \leq D_{\rm JS}( P, Q)  \leq \phi_{\rm JS}(0) + \phi_{\rm JS}^{\ast}(0)= \ln (2). 
\end{equation}
 The left  equality implies the so-called identity of  of indiscernibles, i.e.,   $ D_{\rm JS}( P, Q)  =0$ if anly if 
$P=Q$.  The  inequalities  in Equation~(\ref{range}) are  an instance of  the range property   due to   Liese and Vajda in \citet[Thm 5]{liese2006divergences} valid for all $\phi$-divergences.   The study  \citet{topsoe2000some}  contains  several additional explicit expressions and  bounds, valid especially  for  $ D_{\rm JS}\left( P, Q \right)$. 

The following inequality  seems not be available in the literature, but is useful for our purposes.  
\begin{lemma}\label{lemmapinskerbd}
$P \in \mathbb{P}$, $Q \in \mathbb{P}$.  Then 
\begin{equation}\label{jsddkl}
 D_{\rm JS}( P, Q)  \leq \frac{1}{2}  D_{\rm KL}(P,Q). 
\end{equation}
\end{lemma}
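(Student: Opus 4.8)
The plan is to work entirely at the level of the defining sums and to prove the sharper \emph{identity}
\begin{equation*}
\tfrac{1}{2}D_{\rm KL}(P,Q) - D_{\rm JS}(P,Q) = D_{\rm KL}(M,Q),
\end{equation*}
from which the claimed inequality follows at once because the right-hand side is nonnegative. Here, as in Equation~(\ref{jssymm}), $M=\tfrac12 P+\tfrac12 Q$, so that $m_i=\tfrac12(p_i+q_i)$. Since all models are assumed to satisfy the positivity condition~(\ref{posass}), every $p_i,q_i$ is strictly positive and hence so is every $m_i$; all the logarithms below are therefore finite and no convention beyond $0\ln 0=0$ is needed. (For the general statement with $P,Q\in\mathbb{P}$ one notes that whenever some $q_i=0<p_i$ the right-hand side $D_{\rm KL}(P,Q)$ equals $+\infty$ and the inequality is trivial, so it suffices to treat the positive case.)

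First I would write out the difference using Equation~(\ref{klinformation}) for each Kullback--Leibler term and the definition~(\ref{jssymm}) of $D_{\rm JS}$:
\begin{equation*}
\tfrac{1}{2}D_{\rm KL}(P,Q) - D_{\rm JS}(P,Q)
= \tfrac{1}{2}\sum_{i=1}^{k} p_i\ln\frac{p_i}{q_i}
- \tfrac{1}{2}\sum_{i=1}^{k} p_i\ln\frac{p_i}{m_i}
- \tfrac{1}{2}\sum_{i=1}^{k} q_i\ln\frac{q_i}{m_i}.
\end{equation*}
The next step is to combine the first two $p_i$-sums using $\ln\frac{p_i}{q_i}-\ln\frac{p_i}{m_i}=\ln\frac{m_i}{q_i}$, and to rewrite $-\ln\frac{q_i}{m_i}=\ln\frac{m_i}{q_i}$ in the third sum. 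Both sums then carry the common factor $\ln\frac{m_i}{q_i}$, so they collapse into
\begin{equation*}
\tfrac{1}{2}\sum_{i=1}^{k}(p_i+q_i)\ln\frac{m_i}{q_i}
= \sum_{i=1}^{k} m_i\ln\frac{m_i}{q_i}
= D_{\rm KL}(M,Q),
\end{equation*}
where I have used $p_i+q_i=2m_i$. This establishes the identity.

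To finish, I would invoke the nonnegativity of $D_{\rm KL}(M,Q)$, which is the left-equality (identity of indiscernibles) part of the range property in Equation~(\ref{range}) applied to the Kullback--Leibler divergence, itself a $\phi$-divergence; equivalently it is the standard Gibbs inequality obtained by Jensen's inequality applied to the convex map $u\mapsto u\ln u$. I do not anticipate any genuine obstacle in this argument: the entire content is the algebraic bookkeeping that merges the three logarithmic sums into the single divergence $D_{\rm KL}(M,Q)$, and the only point requiring care is the treatment of possible zero coordinates, handled by the triviality remark above. It is worth recording that the proof in fact yields the exact residual $D_{\rm KL}(M,Q)$, so the inequality~(\ref{jsddkl}) is an equality precisely when $M=Q$, i.e.\ when $P=Q$.
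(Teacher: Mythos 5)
Your proof is correct and follows essentially the same route as the paper: the identity you derive is exactly Equation~(\ref{jsdkuusident}), which the paper invokes as ``a small piece of algebra'' on its $\phi$-divergence representation, and both arguments conclude from the nonnegativity of $D_{\rm KL}(M,Q)$. The only cosmetic difference is the treatment of the degenerate case---the paper invokes the bound $D_{\rm JS}\leq \ln 2$ from Equation~(\ref{range}) when $\tfrac{1}{2}D_{\rm KL}(P,Q)\geq\ln 2$, whereas you observe that $D_{\rm KL}(P,Q)=+\infty$ whenever some $q_i=0<p_i$ (note the remaining case is ${\rm supp}(P)\subseteq{\rm supp}(Q)$ rather than full positivity, but your conventions handle indices with $p_i=0$ anyway)---and your explicit verification of the algebra, together with the equality condition $P=Q$, is a welcome addition.
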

\begin{proof} 
If  $  \frac{1}{2} D_{\rm KL}(P,Q) \geq  \ln 2$, the lemma holds by  Equation~(\ref{range}).  Otherwise, 
a small piece of  algebra applied on  Equations~(\ref{jsddiv}) and (\ref{jsddivdef})  reveals  that 
\begin{equation}\label{jsdkuusident}
D_{\rm JS}( P, Q) = \frac{1}{2} D_{\rm KL}( P, Q)   -D_{\rm KL}( M, Q).
\end{equation}
 Since  $D_{\rm KL}( M, Q) \geq 0$, the assertion follows.
\end{proof}

The analysis of the asympotics of  JSD and model choice in the sequel requires additionally   the introduction of  the $\phi$-divergence with  $\phi(x)=|x-1|$.   We obtain  
\begin{equation}\label{totvar}
V( P, Q): = \sum_{x \in {\cal A}} Q(x) \left | \frac{P(x)}{Q(x)}-1 \right | =   \sum_{i=1}^{k} | p_{i} - q_{i}|.  
 \end{equation}
V( P, Q) is  called the variation distance.  This is frequently discussed as the acceptance criterion  in ABC.

\subsection{Information Radius   and Model  Evidence  } \label{largedevhoeffd2}
Let $p(\theta)$ be a prior density on $\Theta$ and  
\begin{equation}\label{datallikelihood}
P\left(\mathbf{D}\mid \mathbb{M}_{p}\right ) := \int_{\Theta} P_{\theta} \left(\mathbf{D}\right)p(\theta)d\theta.   
\end{equation} 
is the marginal data likelihood, also  known as   the model evidence.  Here 
$d\theta$ is the Lebesgue measure induced on $\Theta$.
 
In information  theory   $\widehat{P}_{\mathbf{D}} $     is  called the type of $ {\mathbf D}$ on ${\cal A}$, see  
  \citet[Part I Ch. 2]{csiszar2011information} and   \citet[Ch.  11.1]{cover2012elements}. 
Let  $n=n_{o}$.   The type class of $ \widehat{P}_{\mathbf{D}}$  is defined, see   \citet[Ch.  11.1-11.3]{cover2012elements},  
  by  
\begin{equation}\label{typeclass}
{\cal T}_{n}\left(\widehat{P}_{\mathbf{D}}\right) := \{ {\mathbf X}= (X_{1}, \ldots,X_{n}) \in{\cal A}^{n} \mid  \widehat{P}_{\mathbf {X}}=  \widehat{P}_{\mathbf{D}} \}
\end{equation} 
The set of  all types on  ${\cal  A}$ for $n$ samples
\begin{equation}\label{typesone}
{\cal P}_{n} := \left\{  P \in  \mathbb{P} \mid  {\cal T}_{n}\left( P \right)  \neq \emptyset \right\}.
\end{equation}
Let 
\begin{equation}\label{tolerans}
A_{\epsilon}:= \left\{ {\bf X}  \in   {\cal A}^{n} \mid  D^{1/2}_{\rm JS}\left(\widehat{P}_{\mathbf{D}},  \widehat{P}_{\mathbf{ X}} \right)  \leq \epsilon \right\}.
\end{equation} 
 Let $p(\theta)$ be a prior density  on $\Theta$.  
\begin{proposition}\label{acceptancerate} Let  $n =n_{o}$. Then  
\begin{equation}\label{sllnh}
\lim_{\epsilon \downarrow 0}  \int_{\Theta} P_{\theta}^{(n)}\left( A_{\epsilon}\right)    p(\theta)  d \theta =\frac{n_{o}!}{\prod_{j=1}^{k} n_{o,j} !} 
P\left(\mathbf{D}\mid \mathbb{M}_{p}\right ). 
\end{equation}
\end{proposition}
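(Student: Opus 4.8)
The plan is to exploit the fact that $A_\epsilon$ is a \emph{fixed} subset of ${\cal A}^n$, depending only on the observed type $\widehat{P}_{\mathbf{D}}$ and not on $\theta$, and that for $\epsilon$ small enough it collapses onto a single type class, whereupon the integrand becomes independent of $\epsilon$. First I would note that membership of $\mathbf{X}$ in $A_\epsilon$ depends on $\mathbf{X}$ only through its type $\widehat{P}_{\mathbf{X}} \in {\cal P}_n$, so $A_\epsilon$ is a union of type classes, namely $A_\epsilon = \bigcup \{ {\cal T}_n(P) : P \in {\cal P}_n, \ D^{1/2}_{\rm JS}(\widehat{P}_{\mathbf{D}}, P) \leq \epsilon \}$.

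The key step is to observe that, since $\sqrt{D_{\rm JS}}$ is a genuine metric on $\mathbb{P} \times \mathbb{P}$ by Proposition~\ref{sqrmetric} and the set of types ${\cal P}_n$ is finite, there is a strictly positive distance $\delta_0 = \min \{ D^{1/2}_{\rm JS}(\widehat{P}_{\mathbf{D}}, P) : P \in {\cal P}_n, \ P \neq \widehat{P}_{\mathbf{D}} \}$ from $\widehat{P}_{\mathbf{D}}$ to the nearest distinct type. For every $\epsilon < \delta_0$ the only type within $\epsilon$ of $\widehat{P}_{\mathbf{D}}$ is $\widehat{P}_{\mathbf{D}}$ itself, so that $A_\epsilon = {\cal T}_n(\widehat{P}_{\mathbf{D}})$ exactly. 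Crucially $\delta_0$ depends on $\mathbf{D}$ and $n$ but not on $\theta$, so the integrand stabilises simultaneously for all $\theta \in \Theta$.

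Next I would compute the measure of this single type class. Writing $n_{o,j}$ for the category counts of $\mathbf{D}$, the method of types, see \citet[Ch.~11]{cover2012elements}, gives the cardinality $|{\cal T}_n(\widehat{P}_{\mathbf{D}})| = n_o! / \prod_{j=1}^k n_{o,j}!$, while every sequence in that class has the same product probability $\prod_{j=1}^k p_j(\theta)^{n_{o,j}}$ under $P_\theta^{(n)}$. Recognising $\prod_{j=1}^k p_j(\theta)^{n_{o,j}} = P_\theta(\mathbf{D})$ as the data likelihood under $P_\theta$, this yields $P_\theta^{(n)}(A_\epsilon) = (n_o! / \prod_{j=1}^k n_{o,j}!) \, P_\theta(\mathbf{D})$ for every $\epsilon < \delta_0$.

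Finally, since this expression is constant in $\epsilon$ on the interval $(0, \delta_0)$, no dominated- or monotone-convergence argument is needed: integrating against the prior and invoking the definition of the model evidence $P(\mathbf{D} \mid \mathbb{M}_p) = \int_\Theta P_\theta(\mathbf{D}) p(\theta)\, d\theta$ from Equation~(\ref{datallikelihood}) gives the claimed limit directly. The only point deserving care is the stabilisation in the second paragraph: one must verify that the discreteness of ${\cal P}_n$ together with the metric property of $\sqrt{D_{\rm JS}}$ genuinely forces $A_\epsilon$ to reduce to a single type class below a $\theta$-independent threshold, since the whole argument rests on the limit being \emph{attained} for small positive $\epsilon$ rather than merely approached.
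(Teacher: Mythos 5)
Your proof is correct and follows essentially the same route as the paper's: decompose $A_\epsilon$ into type classes, observe that for small $\epsilon$ only the class ${\cal T}_n\left(\widehat{P}_{\mathbf{D}}\right)$ survives (which requires $\widehat{P}_{\mathbf{D}} \in {\cal P}_n$, i.e.\ $n = n_o$, a fact you use implicitly and the paper states explicitly), count that class, and invoke Equation~(\ref{datallikelihood}). Your two refinements --- making the threshold $\delta_0 > 0$ explicit via finiteness of ${\cal P}_n$ and the metric property of Proposition~\ref{sqrmetric}, so that the limit is attained for all $\epsilon < \delta_0$ rather than merely approached, and using the constant sequence probability $\prod_{j=1}^k p_j(\theta)^{n_{o,j}} = P_\theta(\mathbf{D})$ directly where the paper detours through the identity $P_{\theta}(\mathbf{D}) = e^{-n_o H\left(\widehat{P}_{\mathbf{D}}\right) - n_o D_{\rm KL}\left(\widehat{P}_{\mathbf{D}}, P_\theta\right)}$ of \citet[Theorem 11.1.2]{cover2012elements} and an interchange of summation and integration --- make your rendering marginally cleaner but do not change the substance.
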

\begin{proof} With the use of  Equation~(\ref{typeclass}) and Equation~(\ref{typesone})  we have 
$$
P_{\theta}^{(n)}\left(A_{\epsilon} \right)=\sum_{ {\bf X} \in A_{\epsilon}} P_{\theta}\left({\bf X} \right) =
 \sum_{ P \in  {\cal P}_{n} | D^{1/2}_{\rm JS}\left(\widehat{P}_{\mathbf{D}}, P \right) \leq  \epsilon} 
 \sum_{ {\bf X} \in  {\cal T}_{n}\left( P\right) }   P_{\theta}\left({\bf X} \right)
$$
From  \citet[Theorem 11.1.2]{cover2012elements}  we have  the identity 
\begin{equation}\label{typen111}
P_{\theta} (\mathbf{D}) =e^{ -n_{o}H\left(\widehat{P}_{\mathbf{D}}\right) -  n_{o}D_{\rm KL} \left( \widehat{P}_{\mathbf {D}}, P_{\theta}\right)},
\end{equation}
where  $H\left(\widehat{P}_{\mathbf{D}}\right) $  is the Shannon entropy of  $\widehat{P}_{\mathbf{D}}$  and thus  
$$
\sum_{ {\bf X} \in  {\cal T}_{n}\left( P\right) }   P_{\theta}\left({\bf X} \right)= 
\sum_{ {\bf X} \in  {\cal T}_{n}\left( P\right) } e^{ -nH\left(\widehat{P}_{\mathbf{X}}\right) - nD_{\rm KL} \left( \widehat{P}_{\mathbf {X}}, P_{\theta}\right)}. 
$$
However since  ${\bf X} \in  {\cal T}_{n}\left( P\right)$,   $\widehat{P}_{\mathbf{X}}=P$  by Equation~(\ref{typeclass})  and, with  
the cardinality  $ |  {\cal T}_{n}\left( P\right) | $, 
$$
\sum_{ {\bf X} \in  {\cal T}_{n}\left( P\right) }   P_{\theta}\left({\bf X} \right) =e^{ -nH\left(P\right) - nD_{\rm KL} \left( P , P_{\theta}\right)}|  {\cal T}_{n}\left( P\right) |
$$
and
$$
P_{\theta}^{(n)}\left( A_{\epsilon}\right)=
 \sum_{ P \in  {\cal P}_{n} | D^{1/2}_{\rm JS}\left(\widehat{P}_{\mathbf{D}}, P \right) \leq \epsilon} e^{ -nH\left(P\right) - nD_{\rm KL} \left( P , P_{\theta}\right)}|  {\cal T}_{n}\left( P\right) |. 
$$
The sum has a finite number of terms and hence the change of order between summation  and integration  is permitted, and 
we have 
$$
 \int_{\Theta} P_{\theta}^{(n)}\left( A_{\epsilon}\right)   p(\theta)  d \theta = 
 \sum_{ P \in  {\cal P}_{n} | D^{1/2}_{\rm JS}\left(\widehat{P}_{\mathbf{D}}, P \right) \leq \epsilon} |  {\cal T}_{n}\left( P\right) | \int_{\Theta}e^{ -nH\left(P\right) - nD_{\rm KL} \left( P , P_{\theta}\right)}d \theta.  
$$
As  $\epsilon \downarrow 0$,  the set of types $ \{ P \in  {\cal P}_{n} | D^{1/2}_{\rm JS}\left(\widehat{P}_{\mathbf{D}}, P \right) \leq \epsilon  \}$  decreases  to  $ \{ P \in  {\cal P}_{n} | D^{1/2}_{\rm JS}\left(\widehat{P}_{\mathbf{D}}, P \right) =0 \}$. Since  
  $ D^{1/2}_{\rm JS}\left(\widehat{P}_{\mathbf{D}}, P \right) =0$  if and only if  $\widehat{P}_{\mathbf{D}}=P$, and since 
$\widehat{P}_{\mathbf{D}} \in  {\cal P}_{n} $, as $n=n_{o}$, the limit set is the singleton set  $\left \{ \widehat{P}_{\mathbf{D}} \right \}$.  
Hence  the sum reduces in the limit to a single  term, or, to 
$$
\lim_{\epsilon \downarrow 0}  \int_{\Theta} P_{\theta}^{(n)}\left( A_{\epsilon}\right)   p(\theta)  d \theta=  
|  {\cal T}_{n}\left( \widehat{P}_{\mathbf{D}}  \right) |    \int_{\Theta} e^{ -nH\left(\widehat{P}_{\mathbf{D}} \right) -  nD_{\rm KL} \left(\widehat{P}_{\mathbf{D}} , P_{\theta}\right)}p(\theta) d\theta
$$ 
$$
= |  {\cal T}_{n}\left( \widehat{P}_{\mathbf{D}}  \right) |  \int_{\Theta} P_{\theta} \left(\mathbf{D}   \right) p(\theta)  d \theta,
$$
where   Equation~(\ref{typen111}) was used again.   Now,  by  combinatorics
$
|  {\cal T}_{n}\left( \widehat{P}_{\mathbf{D}}  \right) | =n_{o}!/\prod_{j=1}^{k} n_{o,j} !$.
By  Equation~(\ref{datallikelihood}) we have found the assertion in Equation~(\ref{sllnh}). 
\end{proof}
The proof above contains  the following special case, which shows  a scaled  $ P_{\theta}^{(n)}\left( A_{\epsilon}\right) $ as an asymptotically  correct estimate of the 
implicit  likelihood function. 
\begin{corollary}
\begin{equation}\label{sllnh2}
 \frac{\prod_{j=1}^{k} n_{o,j} !} {n_{o}!} \lim_{\epsilon \downarrow 0} P_{\theta}^{(n)}\left( A_{\epsilon}\right)  = P_{\theta}\left(\mathbf{D}\right ).
\end{equation}
\end{corollary}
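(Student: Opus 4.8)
The plan is to read the corollary directly off the intermediate identities already established while proving Proposition~\ref{acceptancerate}, simply stopping one step earlier and omitting the final integration against the prior $p(\theta)$. Throughout we keep $n = n_o$ as in the proposition, so that the observed type $\widehat{P}_{\mathbf{D}}$ is a genuine element of ${\cal P}_{n}$.

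First I would recall the expression derived in that proof, which holds for each fixed $\theta$ \emph{before} integrating,
$$
P_{\theta}^{(n)}\left( A_{\epsilon}\right)=
 \sum_{ P \in  {\cal P}_{n} \mid D^{1/2}_{\rm JS}\left(\widehat{P}_{\mathbf{D}}, P \right) \leq \epsilon} e^{ -nH\left(P\right) - nD_{\rm KL} \left( P , P_{\theta}\right)}\,|  {\cal T}_{n}\left( P\right) |,
$$
which itself follows from the decomposition of $A_{\epsilon}$ into type classes together with the counting identity in Equation~(\ref{typen111}). Next I would let $\epsilon \downarrow 0$. Because ${\cal P}_{n}$ is finite, the index set $\{ P \in {\cal P}_{n} \mid D^{1/2}_{\rm JS}(\widehat{P}_{\mathbf{D}}, P) \leq \epsilon \}$ decreases to $\{ P \in {\cal P}_{n} \mid D^{1/2}_{\rm JS}(\widehat{P}_{\mathbf{D}}, P) = 0 \}$, which by the identity-of-indiscernibles (the left equality in Equation~(\ref{range})) and the membership $\widehat{P}_{\mathbf{D}} \in {\cal P}_{n}$ is exactly the singleton $\{\widehat{P}_{\mathbf{D}}\}$. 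The finite sum therefore collapses to its single surviving term,
$$
\lim_{\epsilon \downarrow 0} P_{\theta}^{(n)}\left( A_{\epsilon}\right)= e^{ -nH\left(\widehat{P}_{\mathbf{D}}\right) - nD_{\rm KL} \left( \widehat{P}_{\mathbf{D}} , P_{\theta}\right)}\,|  {\cal T}_{n}\left( \widehat{P}_{\mathbf{D}}\right) |.
$$

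Finally I would reinsert Equation~(\ref{typen111}) to identify the exponential factor as $P_{\theta}(\mathbf{D})$, and substitute the combinatorial count $|{\cal T}_{n}(\widehat{P}_{\mathbf{D}})| = n_o!/\prod_{j=1}^{k} n_{o,j}!$. This yields $\lim_{\epsilon \downarrow 0} P_{\theta}^{(n)}(A_{\epsilon}) = (n_o!/\prod_{j=1}^{k} n_{o,j}!)\, P_{\theta}(\mathbf{D})$, and multiplying through by $\prod_{j=1}^{k} n_{o,j}!/n_o!$ gives Equation~(\ref{sllnh2}). There is no genuine obstacle here, since the whole computation is embedded in the proof of Proposition~\ref{acceptancerate}; the only point I would state carefully is \emph{why the limit is exact rather than merely asymptotic}, namely that finiteness of ${\cal P}_{n}$ forces the acceptance set to collapse onto the exact singleton $\{\widehat{P}_{\mathbf{D}}\}$, so that no averaging over $\theta$ and no appeal to the prior are required.
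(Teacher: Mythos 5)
Your proposal is correct and is essentially the paper's own argument: the paper explicitly presents the corollary as a special case contained in the proof of Proposition~\ref{acceptancerate}, obtained exactly as you describe by taking the limit $\epsilon \downarrow 0$ in the type-class decomposition of $P_{\theta}^{(n)}(A_{\epsilon})$ before integrating against the prior. Your added remark on why the limit is exact (finiteness of ${\cal P}_{n}$ plus the identity of indiscernibles forcing collapse to the singleton $\{\widehat{P}_{\mathbf{D}}\}$) is precisely the step the paper also uses, so there is no substantive difference.
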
  
The integral $  \int_{\Theta} P_{\theta}^{(n)}\left( A_{\epsilon}\right)    p(\theta)  d \theta$ is known in the literature   as the acceptance rate of certain  ABC algorithms 
\citet[p.~247]{leuenberger2010bayesian}.
Section~\ref{jsdcritsugg} discusses a criterion of model choice that in view of Proposition \ref{acceptancerate}  will also  maximize the acceptance rate at  small  $\epsilon$.  

\subsection{Use in Parameter Estimation}

\newcommand{\param}{\theta}
\newcommand{\ncls}{k} 
\newcommand{\icls}{i} 
\newcommand{\nobs}{n_o} 
\newcommand{\nsms}{n} 
\newcommand{\nrep}{m} 
\newcommand{\pcls}{p_\icls}


The present work uses symmetric JSD as a model fit measure.
We assume an observed data set $\mathbf D$ summarized as $\widehat{P}_{\mathbf{D}}$ and a simulator-based model that produces categorical observation data.
Assuming that we can calculate category probabilities $P_\param$ based on the model parameters $\param$, 
we can find the parameters that maximize the model fit to observed data as
\begin{equation}\label{eq:minimum_jsd_estimate}
    \hat\param_\mathrm{JSD} = \arg\min_\param D_\mathrm{JS}(\widehat{P}_{\mathbf D}, P_\param).
\end{equation}
This is a special case of   the minimum $\phi$-divergence estimate  see,  e.g., 
\citet[Ch. 5.1--5.3]{pardo2018statistical}.   For  the minimum $\phi$-divergence estimate for discrete (incl. categorical) distributions, see \citet{morales1995asymptotic}. 
The parameter estimate $   \hat\param_\mathrm{JSD}$ is equivalent to the maximum likelihood estimate $\hat\param_\mathrm{ML}$, when the observation count $\nobs\rightarrow\infty$ \citep{corremkos}.


Simulator-based or likelihood-free estimation methods are needed when the mapping between model parameters and category probabilities is complicated or unknown so that a direct comparison between the observed data and model parameters is not possible.
The idea is that while we cannot calculate $D_\mathrm{JS}(\widehat{P}_{\mathbf D}, P_\param)$, we can simulate data with the model parameters $\param$ and evaluate model fit based on comparison between the observed and simulated data.
In practice, the individual simulations are used to calculate $D_\mathrm{JS}(\widehat{P}_{\mathbf D}, \widehat{P}_{\mathbf X_\param})$ and the optimization task is to find the model parameters that minimize the expected discrepancy $E[D_\mathrm{JS}(\widehat{P}_{\mathbf D}, \widehat{P}_{\mathbf X_\param})]$.


To summarize, when the mapping between between model parameters and category probabilities is unknown, we substitute $D_\mathrm{JS}(\widehat{P}_{\mathbf D}, P_\param)$ with $E[D_\mathrm{JS}(\widehat{P}_{\mathbf D}, \widehat{P}_{\mathbf X_\param})]$.
However, minimization  of $E[D_\mathrm{JS}(\widehat{P}_{\mathbf D}, \widehat{P}_{\mathbf X_\param})]$ is more difficult than minimizing $D_\mathrm{JS}(\widehat{P}_{\mathbf D}, P_\param)$ (which can be quite difficult, too).
We can show that in certain conditions $D_\mathrm{JS}(\widehat{P}_{\mathbf D}, \widehat{P}_{\mathbf X_\param})\rightarrow D_\mathrm{JS}(\widehat{P}_{\mathbf D}, P_\param)$ when the simulation count $\nsms\rightarrow\infty$ \citep{corremkos2}, but otherwise the discrepancies calculated based on individual simulation results $D_\mathrm{JS}(\widehat{P}_{\mathbf D}, \widehat{P}_{\mathbf X_\param})$ are best understood as $E[D_\mathrm{JS}(\widehat{P}_{\mathbf D}, \widehat{P}_{\mathbf X_\param})]$ observed with additive noise.
This makes optimization difficult.
Since running simulations can be expensive, we want to limit the total simulation count, and we need an optimization method that can balance between running simulations with the same parameter values to improve the expected JSD estimates locally and running simulations with new parameter values to find the expected JSD minimum globally. 
The experiments carried out in this work use Bayesian optimization.
For a tutorial and review, see \citet{frazier2018tutorial}.

\section{   Likelihood-free Model Choice  based on  JSD   Razor    }\label{jsdcritsugg}\label{bicschwartz}

As outlined in the Introduction, Section \ref{ppo:sec:intro}, the  purpose of statistical model selection is to select from 
a set of alternative explanations  or models, the one that best explains the data, here categorical   $\mathbf{D}$. 
The task is not elementary, as there are two conflicting  requirements  of a  good model,  namely  that of  generalizability and 
that of goodness of fit.  Here goodness of fit measures how well a model fits the observed  $\mathbf{D}$.   By generalizability  we refer to the capability of the model to fit   well  novel data sets. 

Occam$^{,}$s Razor  is  known as the  dictum  that   the simpler model, for example a model with fewer parameters, is to be preferred  (a.k.a. the principle of parsimony).  In terms of information theory the Occam$^{,}$ Razor  proposes   that the shortest description of the data is the best model. 

In this section we   start  from the interpretation   of Occam$^{,}$s  Razor 
in  \citet{balasubramanian1996geometric}.   JSD is related to   this Razor by bounds  for the model evidence.  
We derive an expression for the JSD Razor by finding the total differential and Hessian of  JSD w.r.t.  the parameters. Then  the multivariable  Laplace approximation is applied to get an approximate expression of the JSD Razor    to be minimized by choice of  model family.

\subsection{ Bounds  for the Model Evidence}\label{infbd}

For  a finite number $L$ of alternative  models  $\mathbb{M}_{p}^{(l)}$ for $\mathbf{D}$,  the Bayesian rule  of  selecting  among  them is  to 
pick  the model that maximizes  the posterior probability given   $\mathbf{D}$. If the  models have equal prior probabilities,  
this means selecting the model that maximizes the the model evidence as defined in Equation~(\ref{datallikelihood}).   

Next we drop for convenience of writing the superscript in $\mathbb{M}_{p}^{(l)}$.  
It has been shown in \citet{corremkos2} that  $\widehat{P}_{\mathbf{D}}$ is  Bayes sufficient summary statistic  
of the data  $\mathbf{D}$.  This follows  effectively  by  the multinomial probability 
\begin {equation}\label{hoffdmultid}
 P_{\theta} (\widehat{P}_{\mathbf{D}}) =  \frac{n_{o}!}{  \prod_{j=1}^{k}n_{o,j}!} 
P_{\theta}\left(\mathbf{D} \right).
\end{equation}
Thus we define     the model evidence based on the  sufficient summary of   $\mathbf{D}$, $P\left(\widehat{P}_{\mathbf{D}}\mid \mathbb{M}_{p}\right )$,  by   modification of   Equation~(\ref{datallikelihood}) as   
$$
P\left(\widehat{P}_{\mathbf{D}}\mid \mathbb{M}_{p}\right ):= \frac{n_{o}!}{\prod_{j=1}^{k} n_{o,j} !} 
P\left(\mathbf{D}\mid \mathbb{M}_{p}\right ) =  \int_{\Theta} P_{\theta} (\widehat{P}_{\mathbf{D}}) p(\theta)d\theta ,
$$
cf., Equation~(\ref{sllnh}).
  In view of Equation~(\ref{hoffdmultid}) and  Equation~(\ref{typen111})  and evaluation of $e^{ -n_{o}H\left(\widehat{P}_{\mathbf{D}}\right)}$ by definition of the Shannon entropy  we obtain 
\begin{equation}
 \int_{\Theta} P_{\theta} (\widehat{P}_{\mathbf{D}}) p(\theta)d\theta  = \frac{n_{o}!}{  \prod_{j=1}^{k}n_{o,j}!} \prod_{j=1}^{k}\left(\frac{n_{o,j}}{n_{o}}\right)^{n_{o,j}} \int_{\Theta} e^{-n_{o} D_{\rm KL} \left(  \widehat{P}_{\mathbf{D}},  P_{\theta}\right)} p(\theta) d\theta. \nonumber 
\end{equation}
The multiplicative factor  in front of the integral 
in the right hand side is $<1$ since it  is the   probability of  the event  $\underline{ \xi}=\left (n_{1},\ldots, n_{k} \right)$ w.r.t. to the multinomial distribution with parameters $ n_{o}$  
and  $\left(\frac{n_{1}}{ n_{o}},  \ldots, \frac{n_{k}}{n_{o}}\right)$.  By   Lemma  \ref{lemmapinskerbd}  we have found 
\begin{equation}\label{eq:model_evidence_and_razor_upper_bound}
 \int_{\Theta} P_{\theta} (\widehat{P}_{\mathbf{D}}) p(\theta)d\theta  \leq  \int_{\Theta} e^{-n_{o} D_{\rm KL} \left(  \widehat{P}_{\mathbf{D}},  P_{\theta}\right)} p(\theta) d\theta \leq  \int_{\Theta} e^{-2n_{o}  D_{\rm JS}\left(  \widehat{P}_{\mathbf{D}},  P_{\theta}\right)}p(\theta) d\theta.
\end{equation}
The idea is to use  the integral  in the right  hand side of the second inequality   to find  an implementable  criterion for likelihood-free simulator-based 
models choice.  For a special prior density  $p(\theta)$  the integral  will below be called the JSD-Razor. 
\subsection{Definition of the JSD Razor and Outline}
Let $I(\theta)$  be  the Fisher  information matrix (see Section \ref{antaganden}).   Consider the prior 
$$
p(\theta) = \frac{\sqrt{\det I(\theta)}}{ V\left( \Theta \right)},
$$
where  $ V\left( \Theta \right):=\int_{\Theta}  \sqrt{\det I(\theta)} d \theta$ is assumed to exist.  Hence $p(\theta)$ is 
Jeffreys$^{,}$ prior, which is rigorously constructed in   \citet{balasubramanian1996geometric}  by a convergence argument  from a discrete uniform prior  on 
a finite number of $  P_{\theta}$   indistinguishable (in a  sense made precise in \citet{balasubramanian1996geometric})  from   $P_{\theta_{o}}$.
Then we define the JSD-Razor ${\rm R}_{n_{o}}\left(\mathbb{M}_{p} \right)$  by
\begin{equation}\label{jsdrazordef} 
{\rm R}_{n_{o}}\left(\mathbb{M}_{p}\right):= 
\int_{\Theta} e^{-2n_{o}  D_{\rm JS}\left(  \widehat{P}_{\mathbf{D}}, P_{\theta}\right) } \frac{ \sqrt{\det I(\theta)}}{V\left( \Theta \right)} d\theta.
\end{equation}
Occam$^{,}$s   Razor as introduced  in   \citet[Equation (35), p.~20]{balasubramanian1996geometric}  becomes
  $$ \int_{\Theta} e^{-n_{o} D_{\rm KL} \left(  \widehat{P}_{\mathbf{D}},  P_{\theta}\right)} \frac{ \sqrt{\det I(\theta)}}{V\left( \Theta \right)} d\theta.$$ 
We have seen above (Equation~\ref{eq:model_evidence_and_razor_upper_bound}) that Occam$^{,}$s   Razor  is bounded upwards by ${\rm R}_{n_{o}}\left(\mathbb{M}_{p} \right)$.

In the following  Section \ref{jsdrazor221} we argue for   and derive  from  maximization of  the JSD-Razor ${\rm R}_{n_{o}}\left(\mathbb{M}_{p}^{(l)} \right)$   two rules for model choice. The criterion  ${\rm SIC}_{\rm JSD} \left( \mathbb{M}_{p}^{(l)}\right )$ below  is the cruder version of  the asymptotics of the JSD-Razor ${\rm R}_{n_{o}}\left(\mathbb{M}_{p} \right)$.  It chooses  the model $\mathbb{M}_{p}^{(l)}$ that minimizes 
\begin{equation}\label{sicjsdo}
{\rm SIC}_{\rm JSD} \left( \mathbb{M}_{p}^{(l)}\right ) :=2 n_{o} D_{\rm JS} \left( \widehat{P}_{ \mathbf{D}}, P_{\widehat{\theta}^{(l)}_{\rm JSD}}\right) + {\rm dim} (\Theta^{(l)}) \ln \sqrt{\frac{n_{o}}{8\pi}}
\end{equation}
among a  finite number $L$ of alternative  models  $\mathbb{M}_{p}^{(l)}$ for $\mathbf{D}$,  where it is understood that   $n_{o} >8 \pi $, and where we define  the minimum JSD estimate of  $\theta \in \Theta^{(l)}$  by 
\begin{equation}\label{minjsd}
 \widehat{\theta}^{(l)}_{\rm JSD} = \widehat{\theta}^{(l)}_{\rm JSD} (\mathbf{D})=  {\rm argmin}_{\theta \in \Theta^{(l)}} D_{\rm JS} \left(  \widehat{P}_{\mathbf{D}},  P_{\theta}\right).
\end{equation}
It is shown in \citet{corremkos2} that $ \widehat{\theta}^{(l)}_{\rm JSD}$ exists for compact  $\Theta^{(l)}$. Asymptotic (in $n_{o}$) properties of   $ \widehat{\theta}^{(l)}_{\rm JSD}$ can be extracted from  the results on general  minimum $\phi$-divergence estimates  in \citet{morales1995asymptotic} for prescribed models. The criterion in Equation~(\ref{sicjsdo})  contains, as it should,   the trade-off between fit and model dimension:  the JSD-fit  will be smaller in  a model with a larger parameter space.  
In  Equation~(\ref{sicjsdo}) we have a   criterion which  is  computable for choice between simulator-based  models   in  the sense that   the computational minimization of $E_{P_\theta}\left[D_{\rm JSD}(\widehat{P}_{\mathbf {D}},\widehat{P}_{\mathbf {X}_{\theta}})\right]$  by the software function BOLFI should approximately  find  $\widehat{\theta}_{\rm JSD}$, as  defined by Equation~(\ref{minjsd}). 
 
\subsection{Step One  for Derivation of  ${\rm SIC}_{\rm JSD}$ : Total Differential and Hessian of JSD w.r.t  $\theta$} 
The following result is well-known, see, e.g.,   \citet[p.~355]{morales1995asymptotic}.
 \begin{lemma}\label{infemma}
 \begin{equation}\label{nablanotexp}
I(\theta)=A(\theta)^{T}A(\theta).
\end{equation}
Under Assumption \ref{invfisher},  the  matrix  $I(\theta)$ is invertible. 
\end{lemma}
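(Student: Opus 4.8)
The claim is twofold: first the factorization $I(\theta)=A(\theta)^{T}A(\theta)$, and second that this matrix is invertible under Assumption \ref{invfisher}. The plan is to compute the $(i,j)$-entry of the Fisher information directly from its definition in Equation~(\ref{ijfisher}) and recognize it as an inner product of columns of $A(\theta)$.

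First I would expand the score. Since $P_{\theta}(x)=\prod_{m=1}^{k}p_{m}(\theta)^{[x=a_{m}]}$, we have $\ln P_{\theta}(x)=\sum_{m=1}^{k}[x=a_{m}]\ln p_{m}(\theta)$, so that $\frac{\partial}{\partial\theta_{i}}\ln P_{\theta}(x)=\sum_{m=1}^{k}[x=a_{m}]\frac{1}{p_{m}(\theta)}\frac{\partial p_{m}(\theta)}{\partial\theta_{i}}$. Taking $X\sim P_{\theta}$, the expectation in Equation~(\ref{ijfisher}) picks out, via $E[[X=a_{m}][X=a_{m'}]]=p_{m}(\theta)\,\delta_{mm'}$, only the diagonal terms, giving
\begin{equation}\label{fisherentrycalc}
I_{ij}(\theta)=\sum_{m=1}^{k}\frac{1}{p_{m}(\theta)}\frac{\partial p_{m}(\theta)}{\partial\theta_{i}}\frac{\partial p_{m}(\theta)}{\partial\theta_{j}}.
\end{equation}
Next I would match this against the matrix product. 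By the definitions in Equations~(\ref{nablanot3}), (\ref{nabladiag}) and (\ref{afaktor}), the $(m,i)$-entry of $A(\theta)=\Lambda(\theta)J(\theta)$ is $\frac{1}{\sqrt{p_{m}(\theta)}}\frac{\partial p_{m}(\theta)}{\partial\theta_{i}}$, since $\Lambda(\theta)$ is diagonal and the $i$-th component of the $m$-th row $p_{m}^{'}(\theta)$ of $J(\theta)$ is exactly $\frac{\partial p_{m}(\theta)}{\partial\theta_{i}}$. Therefore the $(i,j)$-entry of $A(\theta)^{T}A(\theta)$ is $\sum_{m=1}^{k}A_{mi}(\theta)A_{mj}(\theta)=\sum_{m=1}^{k}\frac{1}{p_{m}(\theta)}\frac{\partial p_{m}(\theta)}{\partial\theta_{i}}\frac{\partial p_{m}(\theta)}{\partial\theta_{j}}$, which coincides with Equation~(\ref{fisherentrycalc}); this establishes Equation~(\ref{nablanotexp}).

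For invertibility, I would use the rank hypothesis. Assumption \ref{invfisher} states that $A(\theta)$ has rank $d$, i.e. full column rank (note $k>d$). For any nonzero $v\in\mathbf{R}^{d}$ we then have $A(\theta)v^{T}\neq 0$, so $v\,I(\theta)\,v^{T}=v A(\theta)^{T}A(\theta)v^{T}=\|A(\theta)v^{T}\|_{2,\mathbf{R}^{k}}^{2}>0$, showing $I(\theta)$ is positive definite and hence invertible. The computation of Equation~(\ref{fisherentrycalc}) relies on the standard interchange of differentiation and expectation implicit in Assumption~\ref{kontdiff}, which is the only point requiring the smoothness hypotheses; I expect no real obstacle here, as the argument is the classical Gram-matrix identity for the Fisher information of a multinomial family and the positivity passes directly through the full-rank assumption.
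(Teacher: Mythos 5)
Your proposal is correct and follows essentially the same route as the paper: both compute $I_{ij}(\theta)=\sum_{m=1}^{k}\frac{1}{p_{m}(\theta)}\frac{\partial p_{m}(\theta)}{\partial\theta_{i}}\frac{\partial p_{m}(\theta)}{\partial\theta_{j}}$ from the Iverson-bracket form of $P_{\theta}$ and then identify this entrywise with $A(\theta)^{T}A(\theta)=\Lambda(\theta)^{T}\Lambda(\theta)$-weighted products of the Jacobian rows. The only difference is that you also spell out the invertibility step (full column rank of $A(\theta)$ implies $vI(\theta)v^{T}=\|A(\theta)v^{T}\|_{2,\mathbf{R}^{k}}^{2}>0$, hence positive definiteness), which the paper asserts without writing out; this is a welcome, if standard, completion.
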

It is  appropriate  here to use the simplex  map  from Equation~(\ref{trmap})  for the distributions in $\mathbb{P}$.  In  other words, we  shall work with 
functions of vectors in $\triangle_{k-1}$.  Then 
 ${\bf p}_{\theta} =\left(p_{1}(\theta), \ldots, p_{k}(\theta) \right)= \triangle\left(P_{\theta} \right)$, $\widehat{\mathbf{p}} =\triangle \left( \widehat{P}_{\mathbf{D}}\right)$, and  we obtain by means of Iverson bracket (Equation~\ref{iversonbr}) and  Equation~(\ref{jsddivdef})  that 
\begin{equation}\label{jsfphimapo}
D_{\rm JS}\left(  \widehat{\mathbf{p}}, {\bf p}_{\theta}\right)
=D_{\rm JS}\left(  \widehat{P}_{\mathbf{D}} , P_{\theta}\right).
\end{equation}
Since we are going to  at first deal with   this quantity   by partial derivatives  w.r.t.  $\theta$  for fixed $\widehat{\mathbf{p}}$, we introduce for ease of writing the function 
\begin{equation}\label{jsfphimap}
\Phi\left( \widehat{\mathbf{p}} ,  \theta   \right): =D_{\rm JS}\left(  \widehat{\mathbf{p}}, {\bf p}_{\theta}\right)
\end{equation}
defined on   ${\triangle}_{k-1}\times \Theta $.  We compute with  dropping  the hat  so that  $\mathbf{p} \leftarrow \widehat{\mathbf{p}}$.
We recall the generating $\phi$  of symmetric JSD in Equation~(\ref{jsddiv}) 
$$
\phi(u)= \frac{1}{2} u \ln u - \frac{1}{2}\left( u +1 \right)\ln  \left(   \frac{1}{2} u +  \frac{1}{2} \right),  0 \leq u < +\infty.
$$ 
The first derivative is 
\begin{equation}\label{jsddivder}
\phi^{'}(u)=   \frac{1}{2 }\ln u -\frac{1}{2 }\ln \left(\frac{u}{2 } + \frac{1}{2 } \right), 0  <  u < +\infty, 
\end{equation}
We can rewrite 
\begin{equation}\label{jsddivder3}
\phi^{'}(u)=   \frac{1}{2 }\ln \left( \frac{u}{\left(\frac{u}{2 } + \frac{1}{2 } \right)}\right), 
\end{equation} 
and this gives 
\begin{equation}\label{jsddivder4}
\phi(u)=   u\phi^{'}(u) -\frac{1}{2 }\ln \left(\frac{u}{2 } + \frac{1}{2 } \right).
\end{equation} 
Next we  get 
\begin{equation}\label{jsddiv2}
  \phi^{''}(u)= \frac{1}{2 u(u+1)},  0 <  u < +\infty.
\end{equation}
\begin{lemma} \label{birchloglike}
Assume 
 ${\bf p} (\theta) \in \stackrel{o}{\triangle}_{k-1} $.
Let us define   the  $ 1 \times k$ vector 
\begin{equation}\label{graddual}
\Phi_{\phi}\left(  {\bf p},  \theta   \right) =\left(  \phi^{'}\left( \frac{p_{1}(\theta)}  {p_{1}}  \right), \ldots, \phi^{'}\left( \frac{p_{k}(\theta)} {p_{k}}   \right)\right).
\end{equation} 
Then  the $1 \times d$ total differential  $\frac{\partial}{\partial \theta }\Phi\left(  {\bf p},  \theta   \right)$  is given by 
\begin{equation}\label{gradth}
\frac{\partial}{\partial \theta }\Phi\left(  {\bf p},  \theta   \right)= \Phi_{\phi}\left(  {\bf p},  \theta   \right) J\left(\theta \right),
\end{equation} 
where $J\left(\theta \right)$  is the Jacobian  in Equation~(\ref{nablanot3}). 
 \end{lemma}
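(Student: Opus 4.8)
The plan is to reduce the claim to a single application of the chain rule, once $\Phi$ is written in the convenient $\phi$-divergence form with the right orientation. First I would exploit the symmetry of the Jensen--Shannon divergence, $D_{\rm JS}(\widehat{\mathbf{p}}, \mathbf{p}_{\theta}) = D_{\rm JS}(\mathbf{p}_{\theta}, \widehat{\mathbf{p}})$, together with representation~(\ref{jsddivdef}), to write (after dropping the hat, $\mathbf{p} \leftarrow \widehat{\mathbf{p}}$)
\[
\Phi(\mathbf{p}, \theta) = D_{\rm JS}(\mathbf{p}_{\theta}, \mathbf{p}) = \sum_{i=1}^{k} p_{i}\, \phi\!\left( \frac{p_{i}(\theta)}{p_{i}} \right).
\]
This particular orientation is the crucial choice: it places the $\theta$-dependence only inside the argument $p_{i}(\theta)/p_{i}$ of $\phi$, which is exactly the argument appearing in $\Phi_{\phi}$ of Equation~(\ref{graddual}), whereas the opposite orientation would put $\theta$ in both the prefactor $p_{i}(\theta)$ and the ratio $p_{i}/p_{i}(\theta)$ and force a messier computation.

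Next I would differentiate term by term with respect to $\theta_{j}$, $j = 1, \ldots, d$. The sum is finite (only $k$ categories), so interchanging $\partial/\partial\theta_{j}$ with the summation is immediate, and Assumption~\ref{kontdiff} guarantees that each $p_{i}(\theta)$ is differentiable, with derivatives collected in the rows $p_{i}^{'}(\theta)$ of the Jacobian~(\ref{nablanot3}). Setting $v_{i} := p_{i}(\theta)/p_{i}$ and applying the chain rule gives
\[
\frac{\partial}{\partial \theta_{j}}\left[ p_{i}\, \phi(v_{i}) \right] = p_{i}\, \phi^{'}(v_{i})\, \frac{\partial v_{i}}{\partial \theta_{j}} = p_{i}\, \phi^{'}(v_{i})\, \frac{1}{p_{i}}\, \frac{\partial p_{i}(\theta)}{\partial \theta_{j}} = \phi^{'}\!\left( \frac{p_{i}(\theta)}{p_{i}} \right) \frac{\partial p_{i}(\theta)}{\partial \theta_{j}},
\]
where the factor $p_{i}$ cancels cleanly.

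Summing over $i$ and recognizing $\partial p_{i}(\theta)/\partial \theta_{j} = J_{ij}(\theta)$ then yields the $j$-th component $\frac{\partial}{\partial \theta_{j}} \Phi(\mathbf{p}, \theta) = \sum_{i=1}^{k} \phi^{'}(p_{i}(\theta)/p_{i})\, J_{ij}(\theta)$, which is precisely the $j$-th entry of the row-vector product $\Phi_{\phi}(\mathbf{p}, \theta)\, J(\theta)$; assembling the $d$ components gives~(\ref{gradth}). There is no serious obstacle here, but two points require care and I would flag them explicitly. First, $\phi^{'}$ must be well defined at each $v_{i}$: this follows because the hypothesis ${\bf p}(\theta) \in \stackrel{o}{\triangle}_{k-1}$ and the standing positivity assumption~(\ref{posass}) force $v_{i} = p_{i}(\theta)/p_{i} \in (0, +\infty)$, the domain on which $\phi^{'}$ is given by Equation~(\ref{jsddivder}). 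Second, the symmetry step is what makes the $p_{i}$ cancel; without it one would instead have to combine identity~(\ref{jsddivder4}) with $\sum_{i} \partial p_{i}(\theta)/\partial\theta_{j} = 0$ (from $\sum_{i} p_{i}(\theta) \equiv 1$) to recover the same expression, so I would use symmetry from the outset to keep the argument transparent.
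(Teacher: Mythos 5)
Your proof is correct, but it takes a genuinely different (and shorter) route than the paper's. The paper works with the canonical $\phi$-divergence orientation of Equation~(\ref{jsddivdef}), namely $\Phi({\bf p},\theta)=\sum_{i=1}^{k} p_{i}(\theta)\,\phi\left(p_{i}/p_{i}(\theta)\right)$, in which $\theta$ enters both the prefactor and the ratio; product and chain rules then produce the bracket $\phi(u)-u\phi^{'}(u)$ at $u=p_{i}/p_{i}(\theta)$ (Equation~\ref{partialq}), which is converted into $\phi^{'}\left(p_{i}(\theta)/p_{i}\right)$ by the two $\phi_{\rm JS}$-specific identities (\ref{jsddivder4}) and (\ref{jsddivder3}). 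You instead invoke the symmetry $D_{\rm JS}(P,Q)=D_{\rm JS}(Q,P)$ (equivalently, the self-conjugacy $\phi_{\rm JS}^{\ast}=\phi_{\rm JS}$) at the outset, which places the $\theta$-dependence only inside the argument of $\phi$, so a single chain-rule application yields (\ref{gradth}) with no special identities; you also handle the positivity bookkeeping (one needs $p_{i}>0$ as well as $p_{i}(\theta)>0$ for $\Phi_{\phi}$ to be defined) more explicitly than the paper does. What the paper's route buys in exchange: it stays within the generic orientation, so the same computation applies to non-symmetric $\phi$-divergences, and its intermediate formulas (\ref{partialq}) and (\ref{obslema}) are reused verbatim as the starting point of the Hessian computation in Lemma~\ref{jsdhessian}, whereas your representation would require redoing (though it would in fact also simplify) that computation. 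One small correction to your closing remark: the non-symmetric route does not need $\sum_{i}\partial p_{i}(\theta)/\partial\theta_{j}=0$; the paper gets by with the identities (\ref{jsddivder4}) and (\ref{jsddivder3}) alone.
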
 
The  straightforward computational proof is found in Appendix  \ref{diffintvajdapist}.

Let us define the $k \times k$ diagonal matrix  
\begin{equation}\label{nabladiag1}
\Lambda({\bf p}, \theta):= {\rm diag}\left(   \frac{1}{\sqrt{(p_{1}+p_{1}(\theta))}},\ldots,   \frac{1}{\sqrt{(p_{k}+p_{k}(\theta))}}  \right), 
\end{equation}
And let us define with the Jacobian $J(\theta)$  in  Equation~(\ref{nablanot3})
\begin{equation}\label{ajsdfaktor}
A({\bf p}, \theta) :=  \Lambda\left({\bf p}, \theta \right)J(\theta).
\end{equation}

\begin{lemma}\label{jsdhessian} Assume ${\bf p}  \in \stackrel{o}{\triangle}_{k-1} $  and   ${\bf p} (\theta) \in \stackrel{o}{\triangle}_{k-1} $.
The Hessian matrix  of  $\Phi \left(  {\bf p},  \theta  \right)$ is the  $ d\times d$ matrix  $H_{\Phi}\left(  {\bf p},  \theta  \right)$  with elements 
given by 
\begin{equation}\label{jsdehesselem}
\frac{\partial^{2}}{\partial \theta_{l}\partial \theta_{j}}\Phi\left(  {\bf p},  \theta \right)= \sum_{i=1}^{k}   \phi^{'}\left( \frac{p_{i}(\theta)}  {p_{i}}  \right) \frac{\partial^{2}}{\partial \theta_{l} \partial \theta_{j}}p_{i}(\theta) + \frac{1}{2} I_{lj}(\theta) -\frac{1}{2} \left[A({\bf p}, \theta)^{T}A({\bf p}, \theta)\right]_{lj}. 
\end{equation} 
\end{lemma}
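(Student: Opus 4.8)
The plan is to obtain the Hessian by differentiating once more the first total differential supplied by Lemma~\ref{birchloglike}, and then to recognize the two resulting quadratic forms as the Fisher information $I(\theta)$ and the matrix $A(\mathbf{p},\theta)^{T}A(\mathbf{p},\theta)$. First I would write Equation~(\ref{gradth}) componentwise as
$$\frac{\partial}{\partial\theta_{j}}\Phi(\mathbf{p},\theta)=\sum_{i=1}^{k}\phi^{'}\!\left(\frac{p_{i}(\theta)}{p_{i}}\right)\frac{\partial}{\partial\theta_{j}}p_{i}(\theta),$$
which is legitimate because the interior assumptions $\mathbf{p}\in\stackrel{o}{\triangle}_{k-1}$ and $\mathbf{p}(\theta)\in\stackrel{o}{\triangle}_{k-1}$ keep every argument $p_{i}(\theta)/p_{i}$ strictly inside $(0,+\infty)$, where $\phi^{'}$ and $\phi^{''}$ are given by Equations~(\ref{jsddivder}) and (\ref{jsddiv2}), while Assumption~\ref{kontdiff2} supplies the second derivatives $\partial^{2}p_{i}(\theta)/\partial\theta_{l}\partial\theta_{j}$. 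Differentiating with respect to $\theta_{l}$ and applying the product rule splits the answer into two finite sums; the sum in which $\phi^{'}$ is left undifferentiated and the second derivative of $p_{i}$ appears is already the first summand of Equation~(\ref{jsdehesselem}), so no further work is needed there.

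The substantive step is the remaining sum, in which the chain rule gives $\frac{\partial}{\partial\theta_{l}}\phi^{'}(p_{i}(\theta)/p_{i})=\phi^{''}(p_{i}(\theta)/p_{i})\,p_{i}^{-1}\,\partial p_{i}(\theta)/\partial\theta_{l}$. Inserting $\phi^{''}(u)=1/(2u(u+1))$ from Equation~(\ref{jsddiv2}) with $u=p_{i}(\theta)/p_{i}$ produces the scalar coefficient $p_{i}/\bigl(2\,p_{i}(\theta)(p_{i}+p_{i}(\theta))\bigr)$ multiplying $\partial_{l}p_{i}(\theta)\,\partial_{j}p_{i}(\theta)$. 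The key algebraic identity is the partial-fraction decomposition
$$\frac{p_{i}}{2\,p_{i}(\theta)\bigl(p_{i}+p_{i}(\theta)\bigr)}=\frac{1}{2\,p_{i}(\theta)}-\frac{1}{2\bigl(p_{i}+p_{i}(\theta)\bigr)},$$
which separates this sum into two quadratic forms in the Jacobian rows, with diagonal weights $p_{i}(\theta)^{-1}$ and $(p_{i}+p_{i}(\theta))^{-1}$ respectively.

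Finally I would identify the two pieces with the claimed matrices. By Lemma~\ref{infemma} and the definition $A(\theta)=\Lambda(\theta)J(\theta)$ from Equations~(\ref{nabladiag})--(\ref{afaktor}), the first piece $\sum_{i}p_{i}(\theta)^{-1}\partial_{l}p_{i}(\theta)\,\partial_{j}p_{i}(\theta)$ equals $[A(\theta)^{T}A(\theta)]_{lj}=I_{lj}(\theta)$, which contributes the $\tfrac{1}{2}I_{lj}(\theta)$ term; while the second piece, carrying the weights $(p_{i}+p_{i}(\theta))^{-1}$, is by Equations~(\ref{nabladiag1})--(\ref{ajsdfaktor}) precisely $[A(\mathbf{p},\theta)^{T}A(\mathbf{p},\theta)]_{lj}$, contributing the $-\tfrac{1}{2}[A(\mathbf{p},\theta)^{T}A(\mathbf{p},\theta)]_{lj}$ term. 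Collecting the three contributions yields Equation~(\ref{jsdehesselem}). I do not expect a genuine obstacle here; the only care required is in the chain-rule bookkeeping and in matching the two diagonal weightings to the respective definitions of $A(\theta)$ and $A(\mathbf{p},\theta)$, the partial-fraction identity being the single line that makes both Fisher-type terms appear simultaneously.
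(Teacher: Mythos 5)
Your proof is correct and takes essentially the same route as the paper's: the same partial-fraction identity $\frac{p_{i}}{2p_{i}(\theta)(p_{i}+p_{i}(\theta))}=\frac{1}{2p_{i}(\theta)}-\frac{1}{2(p_{i}+p_{i}(\theta))}$, and the same identifications of the two resulting quadratic forms with $I_{lj}(\theta)$ (via Lemma \ref{infemma}) and $\left[A({\bf p},\theta)^{T}A({\bf p},\theta)\right]_{lj}$ (via Equation~\ref{ajsdfaktor}). The only difference is one of bookkeeping: you differentiate the already-simplified gradient of Lemma \ref{birchloglike} (Equation~\ref{gradth}), so a single chain rule yields the coefficient, whereas the paper differentiates the bracket form in Equation~(\ref{partialq}) and then collapses three terms by cancellation before reaching the same expression.
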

The proof is found in the Appendix \ref{diffintvajdapist}.  We introduce the $d \times d$ matrix 
$$
H_{p_{1}, \ldots, p_{k}} \left( \theta   \right) :=  \left[\sum_{i=1}^{k}   \phi^{'}\left( \frac{p_{i}(\theta)}  {p_{i}}  \right) \frac{\partial^{2}}{\partial \theta_{l} \partial \theta_{j}}p_{i}(\theta) \right]_{l=1,j=1}^{d,d}.
$$
Then Equation~(\ref{jsdehesselem})  becomes     the matrix  equality.
\begin{proposition}\label{hessmatr} Assume ${\bf p}  \in \stackrel{o}{\triangle}_{k-1} $  and   ${\bf p} (\theta) \in \stackrel{o}{\triangle}_{k-1} $. The Hessian matrix w.r.t.  $\theta$  of 
$\Phi\left( \bf{p} ,  \theta   \right)$  in Equation~(\ref{jsfphimap})
 \begin{equation}\label{jsdehesselem2}
  H_{\Phi}\left(  {\bf p},  \theta  \right) = H_{p_{1}, \ldots, p_{k}} \left( \theta   \right)  +  \frac{1}{2} I(\theta) -\frac{1}{2} A({\bf p}, \theta)^{T}A({\bf p}, \theta).
\end{equation} 
\end{proposition}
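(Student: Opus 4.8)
The plan is to obtain the asserted matrix identity by reading it off directly from the element-wise formula already furnished by Lemma~\ref{jsdhessian}, so that no fresh differentiation is required beyond matching entries.

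First I would recall that, on the region where both $\mathbf{p}$ and $\mathbf{p}(\theta)$ lie in $\stackrel{o}{\triangle}_{k-1}$, the function $\Phi$ is twice continuously differentiable in $\theta$, so its Hessian $H_{\Phi}\left(\mathbf{p},\theta\right)$ is by definition the $d\times d$ matrix whose $(l,j)$ entry is $\frac{\partial^{2}}{\partial\theta_{l}\partial\theta_{j}}\Phi\left(\mathbf{p},\theta\right)$. Lemma~\ref{jsdhessian} expresses this entry as the sum of three scalars displayed in Equation~(\ref{jsdehesselem}).

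Next I would identify each of those three summands with the $(l,j)$ entry of one of the three matrices appearing on the right of Equation~(\ref{jsdehesselem2}). The first summand $\sum_{i=1}^{k}\phi^{'}\!\left(\frac{p_{i}(\theta)}{p_{i}}\right)\frac{\partial^{2}}{\partial\theta_{l}\partial\theta_{j}}p_{i}(\theta)$ is exactly the $(l,j)$ entry of $H_{p_{1},\ldots,p_{k}}\left(\theta\right)$ by the definition of that matrix stated immediately before the proposition. The second summand $\frac{1}{2}I_{lj}(\theta)$ is the $(l,j)$ entry of $\frac{1}{2}I(\theta)$ by the definition of the Fisher information matrix in Equation~(\ref{nablanot2}). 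The third summand $-\frac{1}{2}\left[A(\mathbf{p},\theta)^{T}A(\mathbf{p},\theta)\right]_{lj}$ is literally the $(l,j)$ entry of $-\frac{1}{2}A(\mathbf{p},\theta)^{T}A(\mathbf{p},\theta)$. Since every entry of $H_{\Phi}$ thus coincides with the corresponding entry of the matrix sum, the matrix identity in Equation~(\ref{jsdehesselem2}) follows.

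There is no genuine obstacle here: all of the analytic content sits in Lemma~\ref{jsdhessian}, whose derivation is deferred to the appendix, and the present proposition is purely the reassembly of that lemma into matrix notation. As a consistency check I would remark that each of the three matrices on the right-hand side is symmetric --- $H_{p_{1},\ldots,p_{k}}\left(\theta\right)$ because $\frac{\partial^{2}}{\partial\theta_{l}\partial\theta_{j}}p_{i}(\theta)$ is symmetric in $(l,j)$, $I(\theta)$ by construction, and $A(\mathbf{p},\theta)^{T}A(\mathbf{p},\theta)$ automatically as a Gram matrix --- so their sum is symmetric, exactly as a Hessian must be.
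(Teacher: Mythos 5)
Your proposal is correct and matches the paper exactly: the paper offers no separate proof of Proposition~\ref{hessmatr}, introducing $H_{p_{1},\ldots,p_{k}}\left(\theta\right)$ immediately after Lemma~\ref{jsdhessian} and stating that Equation~(\ref{jsdehesselem}) ``becomes the matrix equality,'' which is precisely your entry-by-entry reassembly. Your added remark on the symmetry of the three matrices is a harmless sanity check beyond what the paper says.
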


\subsection{Step Two   for Derivation of  ${\rm SIC}_{\rm JSD}$: Laplace Approximation of the JSD-Razor}\label{laplarazor}
We  adapt  for the  current  setting some pertinent  standard  results of  \citet[Ch. 5]{breitung2006asymptotic} or  \citet[Ch. XI]{wong2001asymptotic} for Laplace approximation of multivariate integrals, in this case for 
 $  {\rm R}_{n_{o}}\left(\mathbb{M}_{p} \right) $ in  Equation~(\ref{jsdrazordef}). The work in    \citet{lapinski2019multivariate}  adds  convergence rates of this  approximation. 
 We  use the map  
$\Phi\left( \widehat{\mathbf{p}} ,  \theta   \right)$  in Equation~(\ref{jsfphimap}). Hence by  Equation~(\ref{minjsd})  we write  $ \widehat{\theta}_{\rm JSD} = \widehat{\theta}_{\rm JSD} (\mathbf{D})=$ $  {\rm argmin}_{\theta \in \Theta}\Phi\left( \widehat{\mathbf{p}} ,  \theta   \right)$.

\begin{lemma}\label{laplcelemma} Assume  Assumptions \ref{kontdiff}--\ref{invkontbirch}. Assume that  $\Theta$ is a compact subset of  $\mathbb{R}^{d}$ and that  $ \Phi\left( \widehat{\mathbf{p}} ,  \theta   \right)$  has a unique minimum at $\widehat{\theta}_{\rm JSD}$   in the interior of  $\Theta$
and that  the Hessian   $H_{\Phi}\left(  \widehat{\bf p},  \widehat{\theta}_{\rm JSD}  \right)$ is positive definite, and that the prior density $p(\theta)$ is a continuous  function of $\theta$.  Then it holds that
\begin{equation}\label{sicjsd2}
 {\rm R}_{n_{o}}\left(\mathbb{M}_{p} \right)  \approx \left(\frac{2\pi}{n_{o}}\right)^{d/2}e^{-2n_{o}  D_{\rm JS}\left(  \widehat{P}_{\mathbf{D}},  P_{ \widehat{\theta}_{\rm JSD}}\right) }
 \frac{\sqrt{\det I \left(\widehat{\theta}_{{\rm JSD} }\right)}}
{\sqrt{   \det H_{\Phi}\left(  \widehat{\bf p}, \widehat{\theta}_{\rm JSD} \right)}} \frac{1}{V\left( \Theta \right)}.
\end{equation}
\end{lemma}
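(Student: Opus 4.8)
The plan is to read the JSD-Razor of Equation~(\ref{jsdrazordef}) as a multivariate Laplace integral
$$
{\rm R}_{n_{o}}\left(\mathbb{M}_{p}\right)=\frac{1}{V(\Theta)}\int_{\Theta} g(\theta)\, e^{-2n_{o}\Phi\left(\widehat{\mathbf{p}},\theta\right)}\, d\theta,
\qquad g(\theta):=\sqrt{\det I(\theta)},
$$
with large parameter $n_{o}$, smooth phase $\Phi(\widehat{\mathbf{p}},\cdot)$ from Equation~(\ref{jsfphimap}), and continuous amplitude $g$. I would then invoke the standard asymptotic expansion for such integrals, adapting \citet[Ch.~5]{breitung2006asymptotic} or \citet[Ch.~XI]{wong2001asymptotic}, whose conclusion after the normalizing constant is collected is precisely Equation~(\ref{sicjsd2}). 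Thus the work reduces to checking the hypotheses of the Laplace theorem at the candidate $\widehat{\theta}_{\rm JSD}$ of Equation~(\ref{minjsd}) and then assembling the constants.

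First I would verify those hypotheses. Second-order smoothness of the phase is supplied by Assumptions~\ref{kontdiff}--\ref{kontdiff2}, which propagate through the smooth functions $\phi$ and $\ln$ defining $\Phi$; concretely the total differential of Lemma~\ref{birchloglike} and the Hessian $H_{\Phi}$ of Proposition~\ref{hessmatr} are continuous in $\theta$ on the region ${\bf p}(\theta)\in\stackrel{o}{\triangle}_{k-1}$ guaranteed by Equation~(\ref{posass}). Stationarity of the phase at the interior minimizer is exactly the vanishing of $\frac{\partial}{\partial\theta}\Phi=\Phi_{\phi}({\bf p},\theta)J(\theta)$ from Lemma~\ref{birchloglike}, so the linear term of the Taylor expansion drops out, and the quadratic term is controlled by the assumed positive definiteness of $H_{\Phi}(\widehat{\bf p},\widehat{\theta}_{\rm JSD})$. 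Finally the amplitude is continuous and nonvanishing at the minimizer: by Lemma~\ref{infemma} with Assumption~\ref{invfisher} the information $I(\widehat{\theta}_{\rm JSD})=A(\widehat{\theta}_{\rm JSD})^{T}A(\widehat{\theta}_{\rm JSD})$ is invertible, so $g(\widehat{\theta}_{\rm JSD})=\sqrt{\det I(\widehat{\theta}_{\rm JSD})}>0$, while $p(\theta)$ is continuous by hypothesis.

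With the hypotheses in hand the approximation splits into the usual two pieces. Writing $\Theta=B_{\rho}(\widehat{\theta}_{\rm JSD})\cup(\Theta\setminus B_{\rho})$, on the complement the uniqueness of the interior minimizer together with continuity of $\Phi$ and compactness of $\Theta$ yields a gap $\Phi(\widehat{\mathbf{p}},\theta)-\Phi(\widehat{\mathbf{p}},\widehat{\theta}_{\rm JSD})\ge c(\rho)>0$, so the outer contribution is exponentially negligible through the factor $e^{-2n_{o}c(\rho)}$. On the ball I would substitute the second-order Taylor expansion of $\Phi$ granted by Assumption~\ref{kontdiff2}, replace $g$ by $g(\widehat{\theta}_{\rm JSD})$ up to a continuity error, rescale $\theta-\widehat{\theta}_{\rm JSD}$ by $\sqrt{n_{o}}$, and evaluate the resulting $d$-dimensional Gaussian integral, whose normalizing constant supplies the power of $n_{o}^{-d/2}$ and the factor $1/\sqrt{\det H_{\Phi}(\widehat{\bf p},\widehat{\theta}_{\rm JSD})}$ in Equation~(\ref{sicjsd2}); collecting $e^{-2n_{o}D_{\rm JS}(\widehat{P}_{\mathbf{D}},P_{\widehat{\theta}_{\rm JSD}})}$, the amplitude $\sqrt{\det I(\widehat{\theta}_{\rm JSD})}$, and the prior normalization $1/V(\Theta)$ then reassembles Equation~(\ref{sicjsd2}).

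I expect the main obstacle to be the rigorous remainder control rather than the formal computation. One must show that the cubic and higher Taylor remainder on $B_{\rho}$, after the $\sqrt{n_{o}}$ rescaling, contributes only a lower-order multiplicative error, and that $\rho$ can be chosen (possibly shrinking with $n_{o}$) so that the outer exponential bound and the inner remainder are simultaneously controlled. The compactness of $\Theta$, the uniqueness of the interior minimizer, and the strong identifiability Assumption~\ref{invkontbirch}, which keeps $\mathbf{p}(\theta)$ and hence $\Phi$ separated from its minimizing value away from $\widehat{\theta}_{\rm JSD}$, are what make this tail estimate go through; this is where I would lean on the quantitative multivariate bounds of \citet{lapinski2019multivariate} if an explicit convergence rate were desired.
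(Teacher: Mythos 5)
Your proposal is correct and follows essentially the same route as the paper: the paper's proof likewise verifies that $\Phi\left(\widehat{\mathbf{p}},\theta\right)$ is twice differentiable in $\theta$ on the interior of $\Theta$ (using the smoothness of $\phi_{\rm JS}$ established in the preceding section) and then directly invokes the multivariate Laplace approximation theorems of \citet[Thm 4.1, p.~56]{breitung2006asymptotic} and \citet[Thm 3, pp.~494--495]{wong2001asymptotic}, the very results you adapt. The only difference is that you unpack the internals of the Laplace method (ball decomposition, tail bound, Gaussian integral, remainder control), which the paper leaves to the cited references.
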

\begin{proof}
Due to the  differentiability  properties  of   $\phi_{\rm JS}(u)$ in Equation~(\ref{jsddiv}) and the assumptions   the map    $\Phi\left( \widehat{\mathbf{p}} ,  \theta   \right) $ is a twice differentiable as a function of $\theta$ in the interior of $\Theta$, as has been checked in the preceding section. 
Then the  equality  Equation~(\ref{sicjsd2}) is valid for $ {\rm R}_{n_{o}}\left(\mathbb{M}_{p} \right)$ in  Equation~(\ref{jsdrazordef}), as follows   by \citet[Thm 4.1, p.~56]{breitung2006asymptotic}  or   \citet[Thm 3. pp.~494$-$495]{wong2001asymptotic}.
\end{proof}
\begin{lemma}\label{hessenapprsats}
Let the assumptions of Lemma  \ref{laplcelemma}  hold for  any $n_{o}$.   Assume that there is $ P_{\theta_{o}} \in \mathbb{M}$  such that   $\mathbf{D} \sim  P_{\theta_{o}} $. Then  with  $ P_{\theta_{o}}$-probability   one, 
\begin{equation}\label{hessenappr}
H_{\Phi}\left(  \widehat{\bf p}, \widehat{\theta}_{\rm JSD} \right) \approx  \frac{1}{4} I\left(\widehat{\theta}_{\rm JSD} \right).
\end{equation}
for all  large  $n_{o}$, where  $I\left(\widehat{\theta}_{\rm JSD} \right)$ is the Fisher  information matrix defined in
Equation~(\ref{ijfisher}) evaluated at $\widehat{\theta}_{\rm JSD}$. 
\end{lemma}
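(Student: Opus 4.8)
The plan is to start from the exact Hessian decomposition in Proposition~\ref{hessmatr}, evaluated at the observed summary and the minimum JSD estimate,
\begin{equation}
H_{\Phi}\left(  \widehat{\bf p},  \widehat{\theta}_{\rm JSD}  \right) = H_{p_{1}, \ldots, p_{k}} \left( \widehat{\theta}_{\rm JSD}  \right)  +  \frac{1}{2} I\left(\widehat{\theta}_{\rm JSD}\right) -\frac{1}{2} A\left(\widehat{\bf p}, \widehat{\theta}_{\rm JSD}\right)^{T}A\left(\widehat{\bf p}, \widehat{\theta}_{\rm JSD}\right),
\end{equation}
and to track each of the three terms as $n_{o}\to\infty$. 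The two inputs I would invoke at the outset are the strong law of large numbers, by which $\widehat{\bf p}=\triangle(\widehat{P}_{\mathbf{D}})\to{\bf p}(\theta_{o})$ with $P_{\theta_{o}}$-probability one, and the consistency $\widehat{\theta}_{\rm JSD}\to\theta_{o}$ almost surely (Proposition~\ref{existensmle} together with the asymptotic equivalence of the minimum JSD estimate with the MLE in \citet{corremkos}). Because Assumption~\ref{posass} keeps every $p_{i}(\theta_{o})>0$, the ratios $p_{i}(\widehat{\theta}_{\rm JSD})/\widehat{p}_{i}\to 1$ and the denominators $\widehat{p}_{i}+p_{i}(\widehat{\theta}_{\rm JSD})\to 2p_{i}(\theta_{o})>0$.

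The decisive observation is that $\phi'(1)=0$: evaluating Equation~(\ref{jsddivder}) at $u=1$ gives $\tfrac12\ln 1-\tfrac12\ln 1=0$. Consequently each coefficient $\phi'(p_{i}(\widehat{\theta}_{\rm JSD})/\widehat{p}_{i})\to 0$, and since the second partials $\partial^{2}p_{i}/\partial\theta_{l}\partial\theta_{j}$ stay bounded near $\theta_{o}$ under Assumption~\ref{kontdiff2}, I would conclude $H_{p_{1},\ldots,p_{k}}(\widehat{\theta}_{\rm JSD})\to 0$ almost surely, so the first term vanishes in the limit.

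Next I would compute the limit of the third term. Writing out $A({\bf p},\theta)^{T}A({\bf p},\theta)$ with the scaling $\Lambda({\bf p},\theta)$ of Equation~(\ref{nabladiag1}) gives entries $\sum_{i=1}^{k}\frac{1}{\widehat{p}_{i}+p_{i}(\widehat{\theta}_{\rm JSD})}\frac{\partial p_{i}}{\partial\theta_{l}}\frac{\partial p_{i}}{\partial\theta_{j}}$, and passing to the limit with $\widehat{p}_{i}+p_{i}(\widehat{\theta}_{\rm JSD})\to 2p_{i}(\theta_{o})$ these converge to $\tfrac12\sum_{i=1}^{k}\frac{1}{p_{i}(\theta_{o})}\frac{\partial p_{i}}{\partial\theta_{l}}\frac{\partial p_{i}}{\partial\theta_{j}}$, which by Lemma~\ref{infemma} equals $\tfrac12[I(\theta_{o})]_{lj}$. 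Hence $A(\widehat{\bf p},\widehat{\theta}_{\rm JSD})^{T}A(\widehat{\bf p},\widehat{\theta}_{\rm JSD})\to\tfrac12 I(\theta_{o})$ almost surely. Combining the three limits and using continuity of $I(\cdot)$ to replace $I(\theta_{o})$ by $I(\widehat{\theta}_{\rm JSD})$ for large $n_{o}$ yields $H_{\Phi}(\widehat{\bf p},\widehat{\theta}_{\rm JSD})\to 0+\tfrac12 I(\theta_{o})-\tfrac12\cdot\tfrac12 I(\theta_{o})=\tfrac14 I(\theta_{o})\approx\tfrac14 I(\widehat{\theta}_{\rm JSD})$, which is Equation~(\ref{hessenappr}).

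The main obstacle is not any single computation but justifying the passage to the limit uniformly, i.e.\ ensuring that the convergences $\phi'(p_{i}(\widehat{\theta}_{\rm JSD})/\widehat{p}_{i})\to 0$ and of the denominators are not spoiled near the simplicial boundary. Assumption~\ref{posass} keeps ${\bf p}(\theta_{o})$ in $\stackrel{o}{\triangle}_{k-1}$, bounded away from the boundary, so together with the almost-sure convergence of $\widehat{\bf p}$ and $\widehat{\theta}_{\rm JSD}$ all quantities eventually lie in a compact neighbourhood of $({\bf p}(\theta_{o}),\theta_{o})$ on which $\phi'$ and the first and second partials of the $p_{i}$ are continuous; the continuous-mapping theorem then delivers each limit, and hence the stated approximation for all large $n_{o}$.
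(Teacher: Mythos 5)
Your proof is correct and takes essentially the same approach as the paper: both start from the Hessian decomposition of Proposition~\ref{hessmatr}, use $\phi^{'}(1)=0$ to make the second-derivative term vanish, and use the limit of the denominators $\widehat{p}_{i}+p_{i}(\widehat{\theta}_{\rm JSD}) \to 2p_{i}$ to turn $A(\widehat{\bf p},\widehat{\theta}_{\rm JSD})^{T}A(\widehat{\bf p},\widehat{\theta}_{\rm JSD})$ into $\frac{1}{2}I$, giving $\frac{1}{2}I-\frac{1}{4}I=\frac{1}{4}I$. The only (inessential) difference is how the closeness of $\widehat{\bf p}$ and ${\bf p}(\widehat{\theta}_{\rm JSD})$ is established: you localize both at ${\bf p}(\theta_{o})$ via the strong law and parameter-space consistency, while the paper bounds $|| \widehat{P}_{\mathbf{D}} - P_{\widehat{\theta}_{\rm JSD}}||_{2}$ above by $D^{1/2}_{\rm JS}$ (Lemma~\ref{birchbd}) and applies the triangle inequality for the metric $D^{1/2}_{\rm JS}$ (Proposition~\ref{sqrmetric}), both arguments resting on the same almost-sure convergence results of \citet{corremkos} and \citet{corremkos2}.
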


\begin{proof} We know by Lemma \ref{birchbd} that 
\begin{equation}\label{jsdl22}
\frac{\sqrt{2}}{4}  || \widehat{P}_{\mathbf{D}} - P_{\widehat{\theta}_{\rm JSD}}||_{2} \leq   D^{1/2}_{\rm JS}\left(  \widehat{P}_{\mathbf{D}}, P_{\widehat{\theta}_{\rm JSD}}\right).
\end{equation}
In force of Proposition \ref{sqrmetric}, we can use the triangle inequality to the effect that  
\begin{equation}\label{jsdtrian}
  D^{1/2}_{\rm JS}\left(  \widehat{P}_{\mathbf{D}}, P_{\widehat{\theta}_{\rm JSD}}\right) \leq    D^{1/2}_{\rm JS}\left(  \widehat{P}_{\mathbf{D}}, P_{\theta_{o}}\right)   + D^{1/2}_{\rm JS}\left( P_{\theta_{o}} ,P_{\widehat{\theta}_{\rm JSD}}\right). 
\end{equation}   
It has been shown in \citet[Lemma 5.1, Proposition 5.2]{corremkos} that  $D^{1/2}_{\rm JS}\left( P_{\theta_{o}} ,P_{\widehat{\theta}_{\rm JSD}}\right) \rightarrow 0$,  $ P_{\theta_{o}}$  a.s., as $n_{o} \rightarrow +\infty$, 
when  $ P_{\theta_{o}} \in \mathbb{M}$ and  $\mathbf{D} \sim  P_{\theta_{o}} $. Let us recall the total variation distance $V(P,Q) $  defined in  (\ref{totvar}).  It holds that  
$D^{1/2}_{\rm JS}\left(  \widehat{P}_{\mathbf{D}}, P_{\theta_{o}}\right) \rightarrow 0$, if  $V\left(  \widehat{P}_{\mathbf{D}}, P_{\theta_{o}}\right)$ $\rightarrow 0$, 
  $ P_{\theta_{o}}$  a.s., as $n_{o} \rightarrow +\infty$, which convergence is well known,  for a proof see 
e.g., Appendix  A of  \citet{corremkos2}, as    $ P_{\theta_{o}} \in \mathbb{M}$ and  $\mathbf{D} \sim  P_{\theta_{o}} $.
Hence   in Equation~(\ref{jsdl22}), the norm 
\begin{equation}
  || \widehat{P}_{\mathbf{D}} - P_{\widehat{\theta}_{\rm JSD}}||_{2} = \sqrt{\sum_{i=1}^{k}  \left(  \widehat{p}_{i} -  p_{i}\left(\widehat{\theta}_{\rm JSD}   \right) \right)^{2}}  \rightarrow 0,
\end{equation}
 $ P_{\theta_{o}}$  a.s., as $n_{o} \rightarrow +\infty$.

The Hessian  $H_{\Phi}\left(  \widehat{\bf p}, \widehat{\theta}_{\rm JSD} \right)$  has by  Proposition 
\ref{hessmatr}, Equation~(\ref{jsdehesselem2}),  by Equation~(\ref{hessele})  and Equation~(\ref{fisinff})  the elements
\begin{eqnarray}\label{hessele1}
\frac{\partial^{2}}{\partial \theta_{l}\partial \theta_{j}}\Phi\left(  \widehat{\bf p},  \widehat{\theta}_{\rm JSD}  \right)&=&\sum_{i=1}^{k}   \phi^{'}\left( \frac{p_{i}\left(\widehat{\theta}_{\rm JSD} \right)}  {\widehat{p}_{i}}  \right) \frac{\partial^{2}}{\partial \theta_{l} \partial \theta_{j}}p_{i}\left(\widehat{\theta}_{\rm JSD} \right) \nonumber \\
&  &  \\
& &+ \frac{1}{2} I_{lj}\left(\widehat{\theta}_{\rm JSD} \right)   -  \frac{1}{2}\sum_{i=1}^{k} 
\frac{ \frac{\partial}{\partial \theta_{l}}p_{i}\left(\widehat{\theta}_{\rm JSD} \right) \frac{\partial}{\partial \theta_{j}}p_{i}\left(\widehat{\theta}_{\rm JSD} \right)   }{\left(\widehat{p}_{i}+p_{i}\left(\widehat{\theta}_{\rm JSD} \right) \right)}.  \nonumber
\end{eqnarray}
Since   $\left(  \widehat{p}_{i} -  p_{i}\left(\widehat{\theta}_{\rm JSD}   \right) \right)^{2}\rightarrow 0 $,  
$ \phi^{'}\left( \frac{p_{i}\left(\widehat{\theta}_{\rm JSD} \right)}  {\widehat{p}_{i}}  \right) \approx $
$ \phi^{'}\left( 1\right) =0$, where we used Equation~(\ref{jsddivder3}).
For the same reason  
$$
\frac{ \frac{\partial}{\partial \theta_{l}}p_{i}\left(\widehat{\theta}_{\rm JSD} \right) \frac{\partial}{\partial \theta_{j}}p_{i}\left(\widehat{\theta}_{\rm JSD} \right)   }{\left(\widehat{p}_{i}+p_{i}\left(\widehat{\theta}_{\rm JSD} \right) \right)}  \approx \frac{1}{2}
\frac{ \frac{\partial}{\partial \theta_{l}}p_{i}\left(\widehat{\theta}_{\rm JSD} \right) \frac{\partial}{\partial \theta_{j}}p_{i}\left(\widehat{\theta}_{\rm JSD} \right)   }{p_{i}\left(\widehat{\theta}_{\rm JSD} \right)}.
$$
Now Lemma  \ref{infemma}, or  the expression   Equation~(\ref{ijfisher3}) in its proof, gives 
$$
 \frac{1}{2}\sum_{i=1}^{k} 
\frac{ \frac{\partial}{\partial \theta_{l}}p_{i}\left(\widehat{\theta}_{\rm JSD} \right) \frac{\partial}{\partial \theta_{j}}p_{i}\left(\widehat{\theta}_{\rm JSD} \right)   }{\left(\widehat{p}_{i}+p_{i}\left(\widehat{\theta}_{\rm JSD} \right) \right)} \approx \frac{1}{4}I_{lj}\left(\widehat{\theta}_{\rm JSD} \right). 
$$
When  the approximate expressions  above  have been  applied  in Equation~(\ref{hessele1}), the right hand side of Equation~(\ref{hessenappr})
is obtained. \end{proof}  

\subsection{ Step Three    for Derivation of  ${\rm SIC}_{\rm JSD}$: Two Versions  }\label{jsdrazor221}
Next we produce the rule  in Equation~(\ref{sicjsdo}).
 We want to find  the model   $\mathbb{M}_{p}^{(l)}$, which minimizes   $- \ln {\rm R}_{n_{o}}\left(\mathbb{M}_{p}^{(l)} \right)$. Following    \citet[(35), p.~20]{balasubramanian1996geometric} we  define 
\begin{equation}\label{cvol}
V_{c}(\Theta):=  \left(\frac{2\pi}{n_{o}}\right)^{d/2} \frac{\sqrt{\det I \left( \widehat{\theta}_{{\rm JSD} }\right)}}{\sqrt{   \det H_{\Phi}\left(  \widehat{\bf p}, \widehat{\theta}_{\rm JSD} \right)}}.
\end{equation}
Here  $\ln \frac{V(\Theta)}{V_{c}(\Theta)}$ acts as   a penalty 
for model complexity in the  geometric sense  of  model volume, see \citet{balasubramanian1996geometric} and   \citet{balasubramanian2005mdl} and  \citet{myung2000counting}.

 By  Lemma \ref{laplcelemma}, 
\begin{equation}\label{jsdrazor22}
 - \ln {\rm R}_{n_{o}}\left(\mathbb{M}_{p}\right) \approx 2n_{o}  D_{\rm JS}\left(  \widehat{P}_{\mathbf{D}}, P_{ \widehat{\theta}_{\rm JSD}}\right)  +\ln \frac{V(\Theta)}{V_{c}(\Theta)}, 
\end{equation}
By an expansion  we get  
$$
\ln \frac{V(\Theta)}{V_{c}(\Theta)}= \frac{d}{2} \ln \frac{n_{o}}{2\pi} + \ln \int_{\Theta}  \sqrt{\det I(\theta)} d \theta + \frac{1}{2} \ln \frac{ \det H_{\Phi}\left(  \widehat{\bf p}, \widehat{\theta}_{\rm JSD} \right)} {\det I \left(\widehat{\theta}_{{\rm JSD} }\right)}.
$$
By Lemma \ref{hessenapprsats}, Equation~(\ref{hessenappr}),  and the rules for determinants we have 
$$
 \det H_{\Phi}\left(  \widehat{\bf p}, \widehat{\theta}_{\rm JSD} \right) \approx  \frac{1}{4^{d}} \det  I\left(\widehat{\theta}_{\rm JSD} \right), 
$$ 
and  
\begin{equation}\label{notcoarse}
\ln \frac{V(\Theta)}{V_{c}(\Theta)} \approx \frac{d}{2}  \ln \frac{n_{o}}{2\pi} + \ln \int_{\Theta}  \sqrt{\det I(\theta)} d \theta 
- d \ln2, 
\end{equation}
whereby Equation~(\ref{sicjsdo}) is obtained by dropping $ \ln \int_{\Theta}  \sqrt{\det I(\theta)} d \theta$.  An example of  explicit computation of  
$\det I(\theta)$ is presented in Appendix \ref{birchmapmain} (Example \ref{multicateg3}).
\section{Properties of ${\rm SIC}_{\rm JSD}$} \label{propjsd}
In this section we  study the  properties of  ${\rm SIC}_{\rm JSD} \left( \mathbb{M}_{p}^{(l)}\right )$ as approximately obtained from JSD Razor  in the preceding section  and stated  in Equation~(\ref{sicjsdo}).  We make a comparison of  ${\rm SIC}_{\rm JSD}$  with the well known Schwarz$^{,}$s information criterion (SIC) for model  determination. This criterion is also known as Bayesian information criterion (BIC).  Then we prove the  consistency 
of   ${\rm SIC}_{\rm JSD} \left( \mathbb{M}_{p}^{(l)}\right )$ for nested models  both when the true data source has a model  included in one of the models and when it is  not. 
 
\subsection{An Upper Bound by  Two-Part MDL  }
 Let $\widehat{\theta}^{(l)}_{\rm ML}$   denote the maximum likelihood estimate of $\theta$.
The minimization of 
\begin{equation}\label{scicrit}
{\rm SIC}\left( \mathbb{M}_{p}^{(l)}\right ) :=- \ln P_{\widehat{\theta}^{(l)}_{\rm ML}}\left( \mathbf{D} \right)  + \frac{{\rm dim} (\Theta^{(l)})}{2}\ln n_{o}
\end{equation}
as a function on the set of models $\mathbb{M}_{p}^{(l)}$, $l=1, \ldots, L$,  is known as Schwarz$^{,}$s information criterion  for model  determination,  see   \citet{cavanaugh1999generalizing} for  the derivation,  \citet{neath2012bayesian}  for a recent survey of applications, and  \citet[pp.~5--6]{rissanen2007information} and  \citet[section 7.2.3, pp.~352--353]{robert2007bayesian}  for critical remarks.  It is needless to point out that minimization of  ${\rm SIC}\left( \mathbb{M}_{p}\right )^{(l)} $ is not available for implicit models and likelihood-free inference. 

 Rissanen   proved    that the   model achieving the minimum of  ${\rm SIC}\left( \mathbb{M}_{p}^{(l)}\right ) $ gives the least redundant  coding  possible of $\mathbf{D}$   amongst all universal codes, where    optimal quantization of  $\Theta$  is achieved by using accuracy of order $1/\sqrt{n_{o}}$, see \citet[p.~2]{roos2017minimum} for the result and further references. 

By    Section \ref{topsoecode} above,  
 the  term $2 n_{o} D_{\rm JS} \left( \widehat{P}_{ \mathbf{D}}, P_{\widehat{\theta}^{(l)}_{\rm JSD}}\right) $ 
in Equation~(\ref{sicjsdo}) can be regarded as   redundance  in a different sense.  In other words,  ${\rm SIC}_{\rm JSD} \left( \mathbb{M}_{p}^{(l)}\right )$  is a sum of  redundance and a penalty term of basically same  form  as in  
the two part redundance code length in Equation~(\ref{scicrit}).  The next proposition  is suggested by  Section  \ref{infbd}.

\begin{proposition}
With  SIC as  defined in  Equation~(\ref{scicrit})  and ${\rm SIC}_{\rm JSD}$  from Equation~(\ref{sicjsdo}) it holds that  
\begin{equation}\label{scicrit2}
{\rm SIC}_{\rm JSD} \left( \mathbb{M}_{p}^{(l)}\right ) < {\rm SIC}\left( \mathbb{M}_{p}^{(l)}\right ).
\end{equation}
\end{proposition}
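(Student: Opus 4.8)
The plan is to put both criteria into a common form, separate the ``fit'' part from the dimension penalty, and compare the two parts independently. First I would rewrite the Schwarz criterion using the type-counting identity in Equation~(\ref{typen111}), which gives
\[
-\ln P_{\widehat{\theta}^{(l)}_{\rm ML}}(\mathbf{D}) = n_{o} H(\widehat{P}_{\mathbf{D}}) + n_{o} D_{\rm KL}(\widehat{P}_{\mathbf{D}}, P_{\widehat{\theta}^{(l)}_{\rm ML}}),
\]
and I would note that $\widehat{\theta}^{(l)}_{\rm ML}$ is precisely the minimizer of $D_{\rm KL}(\widehat{P}_{\mathbf{D}}, P_{\theta})$ over $\Theta^{(l)}$, since $H(\widehat{P}_{\mathbf{D}})$ does not depend on $\theta$. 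For the penalties I would write $\dim(\Theta^{(l)})\ln\sqrt{n_{o}/(8\pi)} = \tfrac{d}{2}\ln n_{o} - \tfrac{d}{2}\ln(8\pi)$ with $d=\dim(\Theta^{(l)})$, so the penalty of ${\rm SIC}_{\rm JSD}$ is exactly the ${\rm SIC}$ penalty $\tfrac{d}{2}\ln n_{o}$ reduced by the nonnegative amount $\tfrac{d}{2}\ln(8\pi)$.

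For the fit terms I would chain inequalities. Because $\widehat{\theta}^{(l)}_{\rm JSD}$ minimizes the JSD by Equation~(\ref{minjsd}),
\[
2 n_{o} D_{\rm JS}(\widehat{P}_{\mathbf{D}}, P_{\widehat{\theta}^{(l)}_{\rm JSD}}) \leq 2 n_{o} D_{\rm JS}(\widehat{P}_{\mathbf{D}}, P_{\widehat{\theta}^{(l)}_{\rm ML}}) \leq n_{o} D_{\rm KL}(\widehat{P}_{\mathbf{D}}, P_{\widehat{\theta}^{(l)}_{\rm ML}}) \leq -\ln P_{\widehat{\theta}^{(l)}_{\rm ML}}(\mathbf{D}),
\]
where the middle step is Lemma~\ref{lemmapinskerbd} evaluated at $\widehat{\theta}^{(l)}_{\rm ML}$ and the last step uses $H(\widehat{P}_{\mathbf{D}}) \geq 0$. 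Together with the penalty comparison this already yields ${\rm SIC}_{\rm JSD} \leq {\rm SIC}$.

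The only delicate point is the strict inequality, which I would obtain by splitting on the dimension. If $d \geq 1$, the penalty gap $\tfrac{d}{2}\ln(8\pi) > 0$ forces strictness regardless of the fit terms. If $d=0$ the two penalties coincide and $\widehat{\theta}_{\rm JSD}=\widehat{\theta}_{\rm ML}$, so I would instead show the fit gap is strictly positive using the exact identity Equation~(\ref{jsdkuusident}) from the proof of Lemma~\ref{lemmapinskerbd}: with $Q=P_{\widehat{\theta}_{\rm ML}}$ and $M=\tfrac12\widehat{P}_{\mathbf{D}}+\tfrac12 Q$,
\[
-\ln P_{\widehat{\theta}_{\rm ML}}(\mathbf{D}) - 2 n_{o} D_{\rm JS}(\widehat{P}_{\mathbf{D}}, Q) = n_{o} H(\widehat{P}_{\mathbf{D}}) + 2 n_{o} D_{\rm KL}(M, Q).
\]
Both summands are nonnegative and they cannot vanish simultaneously: $H(\widehat{P}_{\mathbf{D}})=0$ would force $\widehat{P}_{\mathbf{D}}$ to be a point mass, whereas $D_{\rm KL}(M,Q)=0$ forces $\widehat{P}_{\mathbf{D}}=Q$, impossible because every model distribution lies in the open simplex by the positivity assumption~(\ref{posass}) and is therefore not a point mass. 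I expect this $d=0$ edge case to be the only real obstacle; the rest is direct substitution together with the two divergence inequalities already in hand.
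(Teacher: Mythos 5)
Your proposal is correct, and its core is the same chain the paper uses: minimality of $\widehat{\theta}^{(l)}_{\rm JSD}$ from Equation~(\ref{minjsd}), then Lemma~\ref{lemmapinskerbd} at $\widehat{\theta}^{(l)}_{\rm ML}$, then the decomposition of $-\ln P_{\widehat{\theta}^{(l)}_{\rm ML}}(\mathbf{D})$ into $n_{o}H(\widehat{P}_{\mathbf{D}})+n_{o}D_{\rm KL}(\widehat{P}_{\mathbf{D}},P_{\widehat{\theta}^{(l)}_{\rm ML}})$, plus the elementary penalty comparison. Where you diverge is in how strictness is secured, and your treatment is actually tighter than the paper's. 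The paper gets strictness from the claim $\sum_{j} n_{o,j}\ln(n_{o,j}/n_{o}) < 0$, i.e. $H(\widehat{P}_{\mathbf{D}})>0$, which silently assumes the observed data hit at least two categories; if all $n_{o}$ observations fall in a single category this term is zero and the paper's strict step degenerates to an equality (the paper's penalty comparison $\tfrac{d}{2}\ln\tfrac{n_{o}}{8\pi} < \tfrac{d}{2}\ln n_{o}$ would still rescue strictness, but only when $d\geq 1$). You instead prove the non-strict inequality unconditionally and then split on $d$: for $d\geq 1$ the penalty gap $\tfrac{d}{2}\ln(8\pi)>0$ does the work, and for $d=0$ you invoke the exact identity Equation~(\ref{jsdkuusident}) to write the fit gap as $n_{o}H(\widehat{P}_{\mathbf{D}})+2n_{o}D_{\rm KL}(M,Q)$ and observe, via Assumption~(\ref{posass}), that these two nonnegative terms cannot vanish simultaneously. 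This closes exactly the degenerate-data, zero-dimension corner that the paper's argument leaves open, at the cost of a slightly longer case analysis.
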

\begin{proof} By  definition of $\widehat{\theta}_{\rm JSD}$  in Equation~(\ref{minjsd}) we have  for $\widehat{\theta}_{\rm ML}$, the maximum likelihood
estimate of   $\theta$ based on  $\mathbf{D}$,   
$$
2n_{o}D_{\rm JS} \left( \widehat{P}_{ \mathbf{D}}, P_{\widehat{\theta}_{\rm JSD}}\right)  \leq   
2n_{o}D_{\rm JS} \left( \widehat{P}_{ \mathbf{D}}, P_{\widehat{\theta}_{\rm ML}}\right)
$$
and by Lemma  \ref{lemmapinskerbd} and definition of  $D_{\rm KL} $ (Equation~\ref{klinformation})
\begin{eqnarray}
2n_{o}D_{\rm JS} \left( \widehat{P}_{ \mathbf{D}}, P_{\widehat{\theta}_{\rm ML}}\right) &  \leq & 
n_{o}D_{\rm KL} \left( \widehat{P}_{ \mathbf{D}}, P_{\widehat{\theta}_{\rm ML}}\right)  \nonumber \\
& =& \sum_{j=1}^{k} n_{o,j} \ln  \left( \frac{n_{o,j}}{n_{o}} \right) -   \sum_{j=1}^{k} n_{o,j} \ln  \left( p_{j} \left(  \widehat{\theta}_{\rm ML}   \right) \right)  \nonumber \\
& <  & - \sum_{j=1}^{k} n_{o,j} \ln  \left( p_{j} \left(  \widehat{\theta}_{\rm ML}  \right) \right), \nonumber 
\end{eqnarray}
since $\sum_{j=1}^{k} n_{o,j} \ln  \left( \frac{n_{o,j}}{n_{o}} \right) < 0$.  Hence we have established  that 
$$
2n_{o}D_{\rm JS} \left( \widehat{P}_{ \mathbf{D}}, P_{\widehat{\theta}_{\rm JSD}}\right)  <  -\ln P_{  \widehat{\theta}_{\rm ML}} \left(\mathbf{D} \right).
$$
 Since 
$ \frac{{\rm dim} (\Theta^{(l)})}{2}) \ln \frac{n_{o}}{8\pi}    < $ $ \frac{{\rm dim} (\Theta^{(l)})}{2} \ln n_{o}$, the inequality  in the proposition  holds by the  definitions  in Equations~(\ref{sicjsdo})  and  (\ref{scicrit}).   \end{proof}
The criteria in  Equations~(\ref{sicjsdo})  and  (\ref{scicrit}) are also inherently connected  via the already cited  fact that   $\widehat{\theta}_{\rm JSD}$ and  $\widehat{\theta}_{\rm ML}$   are asymptotically equal, when  $n_{o} \rightarrow +\infty$, as  shown in  \citet{corremkos}. 

The inequality  Equation~(\ref{scicrit2}) tells that  if  for  the optimal model $\mathbb{M}_{p}^{(l^{\ast})}$ w.r.t  ${\rm SIC}$ the minimum  value of  ${\rm SIC}\left( \mathbb{M}_{p}^{(l^{\ast})}\right )$  is very small, then  $\mathbb{M}_{p}^{(l^{\ast})}$ is likely to be the minimizer of  ${\rm SIC}_{\rm JSD} $, too. 

\subsection{Consistency of the  JSD-Razor Rule    }\label{sciconsisten}
We show next a consistency property of the JSD-Razor  model selection  criterion for nested models. Consistency means that the criterion will
asymptotically select, with probability one, amongst candidate models  $\mathbb{M}_{p}^{(l)}$, $l=1,  \ldots, L$,  the   the most parsimonious   $\Theta^{(l)}$  model 
containing the true generating distribution. 
From a theoretical point of view such  consistency is  a  very strong optimality property of the JSD-Razor model choice. 
\begin{proposition}\label{cons1}
The  models $\mathbb{M}_{p}^{(l)}$, $l=1,  \ldots, L$ are  nested    
$$
\Theta^{(1)} \subseteq \Theta^{(2)} \subseteq  \ldots \subseteq  \Theta^{(L)}
$$
and $ {\rm dim}\left (\Theta^{(1)}  \right))\leq  {\rm dim} \left(\Theta^{(2)}\right) \leq \ldots  \leq {\rm dim} \left(\Theta^{(L)} \right)( <k)$.
Suppose that $l_{o}$ is the smallest integer in $\{1, \ldots, L\}$ such that  the true probability $ P_{\theta_{o}} \in \mathbb{M}^{\left(l_{o}\right)}$.
Then  for  large $n_{o}$  
\begin{equation}\label{jsdkonsist}
{\rm SIC}_{\rm JSD} \left( \mathbb{M}_{p}^{\left(l_{o}\right)}\right ) \leq {\rm SIC}_{\rm JSD} \left( \mathbb{M}_{p}^{\left(l\right)}\right ) \quad \text{for every  $l \neq  l_{o}$ }
\end{equation}
with   $ P_{\theta_{o}}$- probability one.
\end{proposition}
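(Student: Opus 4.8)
The plan is to split the comparison into the \emph{underfitting} regime $l<l_{o}$ and the \emph{overfitting} regime $l>l_{o}$, and in each regime to compare the growth rate in $n_{o}$ of the fit term $2n_{o}D_{\rm JS}(\widehat{P}_{\mathbf{D}},P_{\widehat{\theta}^{(l)}_{\rm JSD}})$ against the penalty $\dim(\Theta^{(l)})\ln\sqrt{n_{o}/8\pi}$. Throughout I would use the almost sure convergence $\widehat{P}_{\mathbf{D}}\to P_{\theta_{o}}$, equivalently $V(\widehat{P}_{\mathbf{D}},P_{\theta_{o}})\to0$ and hence $D^{1/2}_{\rm JS}(\widehat{P}_{\mathbf{D}},P_{\theta_{o}})\to0$, established in \citet{corremkos} and \citet{corremkos2}, together with the fact that $D^{1/2}_{\rm JS}$ is a metric (Proposition~\ref{sqrmetric}).

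For $l<l_{o}$ the minimality of $l_{o}$ gives $P_{\theta_{o}}\notin\mathbb{M}^{(l)}_{p}$. Since $\Theta^{(l)}$ is compact and $D^{1/2}_{\rm JS}$ is a metric satisfying the identity of indiscernibles (Equation~\ref{range}), the quantity $\delta_{l}:=\min_{\theta\in\Theta^{(l)}}D_{\rm JS}(P_{\theta_{o}},P_{\theta})$ is strictly positive. Because $P\mapsto\min_{\theta\in\Theta^{(l)}}D^{1/2}_{\rm JS}(P,P_{\theta})$ is $1$-Lipschitz for the metric $D^{1/2}_{\rm JS}$, the triangle inequality and $D^{1/2}_{\rm JS}(\widehat{P}_{\mathbf{D}},P_{\theta_{o}})\to0$ give $D_{\rm JS}(\widehat{P}_{\mathbf{D}},P_{\widehat{\theta}^{(l)}_{\rm JSD}})\to\delta_{l}>0$, $P_{\theta_{o}}$-a.s. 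Hence $\mathrm{SIC}_{\rm JSD}(\mathbb{M}^{(l)}_{p})\ge 2n_{o}D_{\rm JS}(\widehat{P}_{\mathbf{D}},P_{\widehat{\theta}^{(l)}_{\rm JSD}})$ grows linearly in $n_{o}$, whereas $\mathrm{SIC}_{\rm JSD}(\mathbb{M}^{(l_{o})}_{p})$ is of order $O(\log n_{o})$ a.s.\ (shown next); a linear term eventually dominates a logarithmic one, so Equation~(\ref{jsdkonsist}) holds for all large $n_{o}$.

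For $l>l_{o}$ we have $\theta_{o}\in\Theta^{(l_{o})}\subseteq\Theta^{(l)}$, so both minimized divergences are bounded above by $D_{\rm JS}(\widehat{P}_{\mathbf{D}},P_{\theta_{o}})$. Combining Lemma~\ref{lemmapinskerbd} with the elementary bound $D_{\rm KL}(\widehat{P}_{\mathbf{D}},P_{\theta_{o}})\le\sum_{i}(\widehat{p}_{i}-p_{i}(\theta_{o}))^{2}/p_{i}(\theta_{o})$ and the law of the iterated logarithm for the multinomial frequencies, namely $\|\widehat{P}_{\mathbf{D}}-P_{\theta_{o}}\|_{2}^{2}=O(\log\log n_{o}/n_{o})$ a.s., I would obtain
\[
0\le 2n_{o}D_{\rm JS}(\widehat{P}_{\mathbf{D}},P_{\widehat{\theta}^{(l)}_{\rm JSD}})\le 2n_{o}D_{\rm JS}(\widehat{P}_{\mathbf{D}},P_{\theta_{o}})=O(\log\log n_{o})\quad P_{\theta_{o}}\text{-a.s.},
\]
and the same bound for the index $l_{o}$; in particular $\mathrm{SIC}_{\rm JSD}(\mathbb{M}^{(l_{o})}_{p})=O(\log n_{o})$ a.s.\ as used above. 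By nestedness the fit of the larger model is no larger, so the difference of fit terms is nonpositive with absolute value $O(\log\log n_{o})$, while the penalty difference equals $\tfrac12(\dim\Theta^{(l)}-\dim\Theta^{(l_{o})})(\ln n_{o}-\ln 8\pi)$, which diverges to $+\infty$ since $\dim\Theta^{(l)}>\dim\Theta^{(l_{o})}$ for the strictly larger nested model. As $\log\log n_{o}=o(\log n_{o})$, the penalty dominates and $\mathrm{SIC}_{\rm JSD}(\mathbb{M}^{(l)}_{p})-\mathrm{SIC}_{\rm JSD}(\mathbb{M}^{(l_{o})}_{p})\to+\infty$ a.s.

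The main obstacle is precisely the overfitting regime. The convenient limit law for minimum $\phi$-divergence estimators, that $8n_{o}D_{\rm JS}(\widehat{P}_{\mathbf{D}},P_{\widehat{\theta}^{(l)}_{\rm JSD}})$ converges in distribution to $\chi^{2}_{k-1-\dim\Theta^{(l)}}$ following \citet{morales1995asymptotic} with $\phi_{\rm JS}''(1)=1/4$, only gives boundedness \emph{in probability} of the fit terms; this yields consistency in probability but not the claimed ``probability one.'' Upgrading to the almost sure statement is exactly where the iterated-logarithm control $n_{o}D_{\rm JS}(\widehat{P}_{\mathbf{D}},P_{\theta_{o}})=O(\log\log n_{o})$ a.s.\ is required, so that the logarithmic penalty gap strictly dominates the fit gap along almost every sample path. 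A secondary point to state explicitly is the assumption that distinct nested models have strictly increasing dimension, i.e.\ $\dim\Theta^{(l)}>\dim\Theta^{(l_{o})}$ whenever $l>l_{o}$, without which the penalty gap could vanish and the conclusion would fail.
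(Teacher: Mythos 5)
Your proof is correct, and in the underfitting regime $l<l_{o}$ it follows essentially the paper's own route: both arguments combine the metric property of $D^{1/2}_{\rm JS}$ (Proposition~\ref{sqrmetric}), the triangle inequality, and the almost sure convergence $D^{1/2}_{\rm JS}(\widehat{P}_{\mathbf{D}},P_{\theta_{o}})\to 0$ to show that the fit term of any misspecified model stays bounded away from zero, so $2n_{o}\cdot\mathrm{fit}$ grows linearly and swamps the logarithmic penalty difference. The genuine divergence is in the overfitting regime $l>l_{o}$, and there your argument is substantially more careful than the paper's. The paper's proof (Equation~\ref{sicjsdcomp}) writes both competing criteria as $2n_{o}\epsilon+C(\cdot,n_{o})$, i.e.\ it treats the two minimized fit terms as if they were equal and reduces the comparison to $C(l_{o},n_{o})\le C(l,n_{o})$; this passes over the central difficulty of BIC-type consistency proofs, namely that the richer nested model fits strictly better, and $2n_{o}$ times that fit improvement could in principle offset the $O(\ln n_{o})$ penalty gap. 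You close exactly this gap: since $\theta_{o}$ lies in both parameter spaces, both minimized fit terms are bounded by $2n_{o}D_{\rm JS}(\widehat{P}_{\mathbf{D}},P_{\theta_{o}})$, and the chain $D_{\rm JS}\le\tfrac12 D_{\rm KL}\le\tfrac12\chi^{2}$ (Lemma~\ref{lemmapinskerbd} together with $\ln x\le x-1$) plus the law of the iterated logarithm for the multinomial frequencies gives $2n_{o}D_{\rm JS}(\widehat{P}_{\mathbf{D}},P_{\theta_{o}})=O(\log\log n_{o})$ almost surely, which is $o(\ln n_{o})$; the penalty gap then dominates along almost every sample path, which is precisely what the ``probability one'' claim demands (as you note, the $\chi^{2}$ limit law of \citet{morales1995asymptotic} would only deliver consistency in probability). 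Your closing caveat is also legitimate and worth recording: if $\dim(\Theta^{(l)})=\dim(\Theta^{(l_{o})})$ for some $l>l_{o}$, the penalty gap vanishes while the larger model can fit strictly better infinitely often (for instance when $\theta_{o}$ lies on the boundary of $\Theta^{(l_{o})}$ but in the interior of $\Theta^{(l)}$), so the inequality~(\ref{jsdkonsist}) can fail; the paper's proof never confronts this case only because it equates the two fit terms, and it makes the analogous strictness assumption $\Delta>0$ explicitly in its inequality~(\ref{sicineq}). In short: same skeleton and identical underfitting argument, but your LIL-based control of the overfitting fit gap supplies the step that the paper's own proof leaves essentially unargued.
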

\begin{proof}  For  economy of  space let us set $ C\left(l, n_{o}\right) :={\rm dim} (\Theta^{(l)}) \ln \frac{n_{o}}{8\pi}$. There  are two cases with distinct arguments.  
\begin{description}
\item[$l  > l_{o}$:] When  $ P_{\theta_{o}} \in \mathbb{M}^{\left(l_{o} \right)}$,  then $ P_{\theta_{o}} \in \mathbb{M}^{\left(l \right)}$ by the  nesting property.   It holds thus  that 
 $\theta_{o} \in \Theta^{\left(l_{o}\right)}$  implies  $\theta_{o} \in \Theta^{\left(l\right)}$ for every $ l$ such that  $l>l_{o}$, too. Let  now  $ \widehat{\theta}^{(l)}_{\rm JSD} = \widehat{\theta}^{(l)}_{\rm JSD} (\mathbf{D})$ be given in 
(\ref{minjsd}). 
As already stated  above,  \citet[Lemma 5.1, Proposition 5.2]{corremkos} shows that when   $P_{\theta_{o}} \in  \mathbb{M}^{\left(l\right)}$, then 
$$
 D_{\rm JS}\left( P_{\theta_{o}} ,P_{\widehat{\theta}^{(l)}_{\rm JSD}}\right) \rightarrow 0
$$
 $ P_{\theta_{o}}$  a.s., as $n_{o} \rightarrow +\infty$ for  all $l  > l_{o}$   and also for $l= l_{o}$. 
This means that there is some $n_{\epsilon}$ such that  for  $n_{o} >n_{\epsilon}$
\begin{equation}\label{sicjsdcomp}
{\rm SIC}_{\rm JSD} \left( \mathbb{M}_{p}^{(l_{o})}\right ) =2 n_{o} \epsilon  +  C\left(l_{o}, n_{o}\right) \leq  2 n_{o} \epsilon  +  C\left(l, n_{o}\right) .
\end{equation}
since   $ C\left(l_{o}, n_{o}\right) \leq   C\left(l, n_{o}\right) $   for every $l > l_{o}$, when   $n_{o} >8 \pi $.  
\item[$l  < l_{o}$:] Since   $ P_{\theta_{o}} \notin \mathbb{M}^{\left(l \right)}$ for $l  < l_{o}$, then  it holds 
for every $l  < l_{o}$ and  for a   $\delta>0$  defined  by the inefficiency  of  $  \mathbb{M}^{\left(l \right)}$  w.r.t.    $ P_{\theta_ {o}}$  that 
$$
\delta:= \min_{\theta \in \Theta^{(l)}}D^{1/2}_{\rm JS} \left( P_{\theta_ {o}},  P_{\theta}\right) \leq  
D^{1/2}_{\rm JS} \left( P_{\theta_ {o}},   P_{ \widehat{\theta}^{(l)}_{{\rm JSD}} (\mathbf{D}) }\right)
$$
with  $\widehat{\theta}^{(l)}_{\rm JSD} (\mathbf{D}) $ defined as in Equation~(\ref{minjsd}) for any $l  < l_{o}$.
In view  of  Proposition  \ref{sqrmetric} we can apply the triangle inequality in the right-hand side to the effect  that 
$$
D^{1/2}_{\rm JS} \left( P_{\theta_ {o}},   P_{ \widehat{\theta}^{(l)}_{{\rm JSD}} (\mathbf{D}) }\right) \leq  D^{1/2}_{\rm JS} \left( P_{\theta_ {o}},    \widehat{P}_{\mathbf{D}}\right) + 
D^{1/2}_{\rm JS} \left( \widehat{P}_{\mathbf{D}},   P_{ \widehat{\theta}^{(l)}_{{\rm JSD}} (\mathbf{D}) }\right). 
$$
Hence 
$$
D_{\rm JS} \left( P_{\theta_ {o}},   P_{ \widehat{\theta}^{(l)}_{{\rm JSD}} (\mathbf{D}) }\right) \leq 2  D_{\rm JS} \left( P_{\theta_ {o}},    \widehat{P}_{\mathbf{D}}\right) + 2
D_{\rm JS} \left( \widehat{P}_{\mathbf{D}},   P_{ \widehat{\theta}^{(l)}_{{\rm JSD}} (\mathbf{D}) }\right). 
$$
As $\mathbf{D} \sim  P_{\theta_ {o}}$,   $ D^{1/2}_{\rm JS} \left( P_{\theta_ {o}},    \widehat{P}_{\mathbf{D}}\right) \rightarrow 0$, as $n_{o}$ increases to  $+\infty$, as shown in   \citet[Lemma 5.1, Proposition 5.2]{corremkos}.  It follows that   
$$
\delta^{2}  <  \liminf_{n_{o} \rightarrow +\infty}2D_{\rm JS}\left( \widehat{P}_{\mathbf{D}}\ ,P_{\widehat{\theta}^{(l)}_{\rm JSD}} \right).
$$
Hence  for    $l  < l_{o}$ 
\begin{equation}\label{unbded}
{\rm SIC}_{\rm JSD} \left( \mathbb{M}_{p}^{(l)}\right ) =2 n_{o} D_{\rm JS} \left( \widehat{P}_{ \mathbf{D}}, P_{\widehat{\theta}^{(l)}_{\rm JSD}}\right) + C\left(l, n_{o}\right)
\end{equation}
is a  function of  $n_{o}$ that will ultimately with   $P_{\theta_{o}}$ -probability one  exceed 
${\rm SIC}_{\rm JSD} \left( \mathbb{M}_{p}^{(l_{o})}\right )=2 n_{o} \epsilon  + C\left(l_{o}, n_{o}\right)$ established in the first case  of this proof.   In more detail,  suppose that $\epsilon  <  \delta^{2} $  and that  $n_{o}$   satisfies 
\begin{equation}\label{sicineq}
2 n_{o}\left[ D_{\rm JS} \left( \widehat{P}_{ \mathbf{D}}, P_{\widehat{\theta}^{(l)}_{\rm JSD}}\right)   -  \epsilon \right] > \ln \frac{n_{o}}{8\pi} \Delta, 
\end{equation}
where  $ \Delta := \left(  {\rm dim} (\Theta^{\left(l_{o} \right)})  -{\rm dim} (\Theta^{\left(l \right)}) \right) 
>0 $. 
As    $2n_{o}$ grows faster than  $ \ln \frac{n_{o}}{8\pi} $ when $n_{o}$ grows,  and  by Equation~(\ref{range})   the positive  factor multiplying  $2 n_{o}$ is bounded, there is an integer  $N$ such that  Equation~(\ref{sicineq}) holds for all   $n_{o} >N$ with $  P_{\theta_{o}} $ probability  one.   The inequality in Equation~(\ref{sicineq}) is  equivalent to 
$$
{\rm SIC}_{\rm JSD} \left( \mathbb{M}_{p}^{(l)}\right ) - {\rm SIC}_{\rm JSD} \left( \mathbb{M}_{p}^{(l_{o})}\right ) >0
$$
 \end{description}
Hence we have proved the  required consistency property. 
\end{proof}

\subsection{JSD-Razor Rule, when the True Distribution is not Covered   by the Models  }\label{sciconsisten2}
We shall apply  the following result, which is valid for  any $P_{o}$, inside or outside the models.  
\begin{proposition}\label{misspec11}
 Assume that Equation~(\ref{posass}) holds for
$P_{o} \in  \mathbb{P}$ and for  any $P_{\theta} \in \mathbb{M}_{p}$. Let  $ {\mathbf D}= (D_{1}, \ldots, D_{n_{o}}) $ be    an i.i.d. $n_{o}$-sample  $\sim$ $P_{o}$. Then  it holds that 
\begin{equation}\label{misspec1}
\lim_{n_{o} \rightarrow +\infty} D_{\rm JS} \left( \widehat{P}_{\mathbf D}, P_{\theta}\right) =  D_{\rm JS} \left( P_{o}, P_{\theta}\right). 
\end{equation}
$P_{o}$-a.s.. 
\end{proposition}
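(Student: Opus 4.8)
The plan is to exploit two ingredients that are already in place: the empirical type $\widehat{P}_{\mathbf{D}}$ converges to the sampling distribution $P_{o}$ almost surely, and $D^{1/2}_{\rm JS}$ is a genuine metric on $\mathbb{P}\times \mathbb{P}$ (Proposition~\ref{sqrmetric}). Together these reduce the claim to a continuity-plus-convergence argument, where $P_{\theta}$ is held fixed and only the first argument of $D_{\rm JS}$ varies.

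First I would record the convergence of the summary statistic. Writing $P_{o} \leftrightarrow (p_{1}^{o}, \ldots, p_{k}^{o})$ and $\widehat{p}_{i}= n_{i}/n_{o}$, each count $n_{i}$ is a sum of $n_{o}$ i.i.d.\ indicators $[D_{j}=a_{i}]$, so by the strong law of large numbers $\widehat{p}_{i} \to p_{i}^{o}$ with $P_{o}$-probability one for each fixed $i$. Intersecting these $k$ probability-one events produces a single event of probability one on which $\widehat{P}_{\mathbf{D}} \to P_{o}$ componentwise, equivalently $V\!\left(\widehat{P}_{\mathbf{D}}, P_{o}\right)=\sum_{i=1}^{k}|\widehat{p}_{i}-p_{i}^{o}| \to 0$, with $V$ as in Equation~(\ref{totvar}). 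As already invoked in the proof of Lemma~\ref{hessenapprsats}, total-variation convergence forces $D^{1/2}_{\rm JS}\!\left(\widehat{P}_{\mathbf{D}}, P_{o}\right)\to 0$ (see Appendix~A of \citet{corremkos2}), so on that same event $D^{1/2}_{\rm JS}\!\left(\widehat{P}_{\mathbf{D}}, P_{o}\right)\to 0$ $P_{o}$-a.s.

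Next I would invoke the metric property. Since $D^{1/2}_{\rm JS}$ is a metric by Proposition~\ref{sqrmetric}, the reverse triangle inequality gives
$$
\left| D^{1/2}_{\rm JS}\!\left(\widehat{P}_{\mathbf{D}}, P_{\theta}\right) - D^{1/2}_{\rm JS}\!\left(P_{o}, P_{\theta}\right) \right| \leq D^{1/2}_{\rm JS}\!\left(\widehat{P}_{\mathbf{D}}, P_{o}\right),
$$
whose right-hand side tends to $0$ $P_{o}$-a.s.\ by the previous step. Hence $D^{1/2}_{\rm JS}\!\left(\widehat{P}_{\mathbf{D}}, P_{\theta}\right) \to D^{1/2}_{\rm JS}\!\left(P_{o}, P_{\theta}\right)$, and squaring (a continuous operation) yields Equation~(\ref{misspec1}), i.e.\ $D_{\rm JS}\!\left(\widehat{P}_{\mathbf{D}}, P_{\theta}\right) \to D_{\rm JS}\!\left(P_{o}, P_{\theta}\right)$ $P_{o}$-a.s.

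The one delicate point is the passage from $V\to 0$ to $D^{1/2}_{\rm JS}\to 0$, since for finite $n_{o}$ the type $\widehat{P}_{\mathbf{D}}$ may sit on the simplicial boundary (some category unobserved), where one must check that JSD does not misbehave. This is harmless: $D_{\rm JS}$ is uniformly bounded by $\ln 2$ (Equation~(\ref{range})) and, through the entropy representation in Equation~(\ref{klinformationiudnet}) together with the convention $0\ln 0 = 0$, extends continuously to the closed simplex. Consequently the full-support hypothesis of Equation~(\ref{posass}) is used only to keep the limiting quantity $D_{\rm JS}(P_{o}, P_{\theta})$ well placed, and the convergence argument above holds for any $P_{o}\in\mathbb{P}$, whether inside or outside the model.
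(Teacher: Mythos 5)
Your proof is correct. A direct comparison is not quite possible here, because the paper gives no argument for Proposition~\ref{misspec11} at all: it defers the proof entirely to \citet{corremkos2}. What you have done is reconstruct, inside the paper, an argument from ingredients the paper itself deploys in the neighbouring proofs: the almost-sure convergence of the empirical type to $P_{o}$ (via the strong law of large numbers), the implication from total-variation convergence to convergence in $D^{1/2}_{\rm JS}$ (invoked in the proof of Lemma~\ref{hessenapprsats} with the same citation to Appendix~A of \citet{corremkos2}), and the triangle inequality for the metric $D^{1/2}_{\rm JS}$ of Proposition~\ref{sqrmetric} (used in the proofs of Proposition~\ref{cons1} and Lemma~\ref{deltaast}). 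Your use of the \emph{reverse} triangle inequality to get two-sided convergence, rather than only an upper bound as in those proofs, is the one extra twist needed to obtain the limit statement rather than a one-sided estimate, and it is handled correctly since the metric property holds on all of $\mathbb{P}\times\mathbb{P}$, boundary points of the simplex included. Your closing observation is also sound and worth keeping: because $D_{\rm JS}$ is bounded by $\ln 2$ and continuous on the closed simplex (entropy representation with $0\ln 0=0$), the full-support assumption of Equation~(\ref{posass}) plays no role in the convergence itself, which is consistent with the proposition being stated for arbitrary $P_{o}$, inside or outside the model.
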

The proof is found  in  \citet{corremkos2}.  Proposition \ref{misspec11}  is next   applied to study  of model choice by minimization of $ {\rm SIC}_{\rm JSD} \left( \mathbb{M}_{p}^{(l)}\right )$ when  there are 
$L$ nested models  $\mathbb{M}_{p}^{(l)}$ such that   $P_{o}$ is outside of  $\mathbb{M}_{p}^{(l)}$  for every $l$.  Of course, Lemma \ref{hessenapprsats} used in derivation of  ${\rm SIC}_{\rm JSD}$   requires  that   $P_{o}  \in \mathbb{M}_{p}$.
The result below is perhaps  in spite of this a natural extension.  Here $ D_{\rm JS} \left( P_{o}, P_{\theta^{(l)}}\right) >0$ for every  $ P_{\theta^{(l)}} \in \mathbb{M}_{p}^{(l)}$  for every $l$.
\begin{lemma}\label{deltaast}
The  models $\mathbb{M}_{p}^{(l)}$, $l=1,  \ldots, L$ are  nested    
$$
\Theta^{(1)} \subseteq \Theta^{(2)} \subseteq  \ldots \subseteq  \Theta^{(L)}
$$
and $ {\rm dim}\left (\Theta^{(1)}  \right))\leq  {\rm dim} \left(\Theta^{(2)}\right) \leq \ldots  \leq {\rm dim} \left(\Theta^{(L)} \right)( <k)$.
Suppose $P_{o} \notin\mathbb{M}_{p}^{(l)}$ for every $l$.
Let  $ {\mathbf D}= (D_{1}, \ldots, D_{n_{o}}) $ be    an i.i.d. $n_{o}$-sample  $\sim$ $P_{o}$. The assumption in Equation~(\ref{posass}) holds for
$P_{o} \in  \mathbb{P}$. 
  Let  us assume that there is  $ P_{\theta^{(l^{\ast})}} \in  \mathbb{M}_{p}^{(l^{\ast})}$   such that 
\begin{equation}\label{mindist3}
0 <  \delta^{\ast}:=  D^{1/2}_{\rm JS} \left( P_{o},  P_{\theta^{(l^{\ast})}}\right) =\min_{1 \leq l \leq L}  \min_{\theta^{(l)} \in \Theta^{(l)}} D^{1/2}_{\rm JS} \left( P_{o},  P_{\theta^{(l)}}\right) 
\end{equation}
Then, for every $l \geq  l^{\ast}$, as  $n_{o} \rightarrow +\infty$, 
\begin{equation}\label{misspclimit}
D_{\rm JS} \left( \widehat{P}_{\mathbf{D}},   P_{ \widehat{\theta}^{(l)}_{{\rm JSD}} (\mathbf{D}) }\right) \rightarrow  ( \delta^{\ast})^{2}
\end{equation}
with   $ P_{o}$- probability one,   where $\widehat{\theta}^{(l)}_{\rm JSD}$  is computed  in Equation~(\ref{minjsd}).
 \end{lemma}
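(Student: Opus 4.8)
The plan is to sandwich the random quantity $D^{1/2}_{\rm JS}\left( \widehat{P}_{\mathbf{D}}, P_{\widehat{\theta}^{(l)}_{\rm JSD}}\right)$ between two sequences that both converge almost surely to $\delta^{\ast}$, exploiting that $D^{1/2}_{\rm JS}$ is a genuine metric (Proposition \ref{sqrmetric}) together with strong consistency of the empirical type. The only analytic input I would invoke is
\[
D^{1/2}_{\rm JS}\left( \widehat{P}_{\mathbf{D}}, P_{o}\right) \to 0 \quad P_{o}\text{-a.s.},
\]
which holds because $\mathbf{D} \sim P_{o}$ forces $V\left( \widehat{P}_{\mathbf{D}}, P_{o}\right) \to 0$ $P_{o}$-a.s.\ by the strong law (see Appendix~A of \citet{corremkos2}), and $D_{\rm JS}$ is dominated by the variation distance. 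Everything else is the triangle inequality.

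A preliminary observation I would record first is that, for every $l \geq l^{\ast}$, nesting collapses the population minimum to $\delta^{\ast}$: since $\Theta^{(l^{\ast})} \subseteq \Theta^{(l)}$ the minimum over $\Theta^{(l)}$ cannot exceed $D^{1/2}_{\rm JS}\left( P_{o}, P_{\theta^{(l^{\ast})}}\right) = \delta^{\ast}$, while Equation~(\ref{mindist3}) shows it cannot drop below the global minimum $\delta^{\ast}$; hence $\min_{\theta \in \Theta^{(l)}} D^{1/2}_{\rm JS}\left( P_{o}, P_{\theta}\right) = \delta^{\ast}$. This is what ties each case $l \geq l^{\ast}$ to the single number $\delta^{\ast}$.

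For the upper bound I would use that $\theta^{(l^{\ast})} \in \Theta^{(l^{\ast})} \subseteq \Theta^{(l)}$ is a feasible competitor in the definition~(\ref{minjsd}) of $\widehat{\theta}^{(l)}_{\rm JSD}$, so
\[
D^{1/2}_{\rm JS}\left( \widehat{P}_{\mathbf{D}}, P_{\widehat{\theta}^{(l)}_{\rm JSD}}\right) \leq D^{1/2}_{\rm JS}\left( \widehat{P}_{\mathbf{D}}, P_{\theta^{(l^{\ast})}}\right) \leq D^{1/2}_{\rm JS}\left( \widehat{P}_{\mathbf{D}}, P_{o}\right) + \delta^{\ast},
\]
the second step being the triangle inequality, whence $\limsup_{n_{o}} D^{1/2}_{\rm JS}\left( \widehat{P}_{\mathbf{D}}, P_{\widehat{\theta}^{(l)}_{\rm JSD}}\right) \leq \delta^{\ast}$. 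For the matching lower bound I would instead start from $P_{\widehat{\theta}^{(l)}_{\rm JSD}} \in \mathbb{M}_{p}^{(l)}$, so that $D^{1/2}_{\rm JS}\left( P_{o}, P_{\widehat{\theta}^{(l)}_{\rm JSD}}\right) \geq \delta^{\ast}$ by the preliminary observation, then apply $\delta^{\ast} \leq D^{1/2}_{\rm JS}\left( P_{o}, \widehat{P}_{\mathbf{D}}\right) + D^{1/2}_{\rm JS}\left( \widehat{P}_{\mathbf{D}}, P_{\widehat{\theta}^{(l)}_{\rm JSD}}\right)$ and rearrange to get $\liminf_{n_{o}} D^{1/2}_{\rm JS}\left( \widehat{P}_{\mathbf{D}}, P_{\widehat{\theta}^{(l)}_{\rm JSD}}\right) \geq \delta^{\ast}$. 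Combining the two bounds and squaring yields Equation~(\ref{misspclimit}).

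There is no deep obstacle here; the argument is a clean sandwiching once the metric property is in hand. The one place that demands care is the lower bound, which relies on $\widehat{\theta}^{(l)}_{\rm JSD}$ being an admissible point of $\Theta^{(l)}$ (so that its image lies in $\mathbb{M}_{p}^{(l)}$ and inherits the bound $\geq \delta^{\ast}$), which in turn needs the minimizer in~(\ref{minjsd}) to exist --- guaranteed for compact $\Theta^{(l)}$ by \citet{corremkos2}. I would also flag that, unlike the consistency statement in Proposition~\ref{cons1}, this lemma does not separate $l$ from $l^{\ast}$ through dimension: it only identifies the limiting fit value $(\delta^{\ast})^{2}$, and it is precisely the equality of this value across all $l \geq l^{\ast}$ that later lets the dimension penalty in ${\rm SIC}_{\rm JSD}$ break the tie in favour of the most parsimonious model.
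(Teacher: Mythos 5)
Your proof is correct and takes essentially the same route as the paper: the identical sandwich built from the metric property of $D^{1/2}_{\rm JS}$ (Proposition~\ref{sqrmetric}), the lower bound $\delta^{\ast} \leq D^{1/2}_{\rm JS}\left(P_{o}, P_{\widehat{\theta}^{(l)}_{\rm JSD}}\right)$ from $P_{\widehat{\theta}^{(l)}_{\rm JSD}} \in \mathbb{M}_{p}^{(l)}$, the upper bound via the fixed competitor $\theta^{(l^{\ast})}$ made admissible by nesting, and the $P_{o}$-a.s.\ convergence $D^{1/2}_{\rm JS}\left(\widehat{P}_{\mathbf{D}}, P_{o}\right) \to 0$. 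The only cosmetic difference is that where the paper invokes Proposition~\ref{misspec11} to conclude $D^{1/2}_{\rm JS}\left(\widehat{P}_{\mathbf{D}}, P_{\theta^{(l^{\ast})}}\right) \to \delta^{\ast}$, you apply the triangle inequality a second time, which is an equivalent (and marginally more self-contained) way to close the same step.
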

\begin{proof}  It holds by Equation~(\ref{mindist3}) for every $l \geq l^{\ast}$ that 
\begin{equation}\label{mindist4}
0 <  \delta^{\ast} \leq D^{1/2}_{\rm JS} \left( P_{o},   P_{ \widehat{\theta}^{(l)}_{{\rm JSD}} (\mathbf{D}) }\right)
\end{equation}
By   the triangle inequality  justified  by Proposition \ref{sqrmetric}  we have 
\begin{equation}\label{mindist45}
D^{1/2}_{\rm JS} \left( P_{o}, P_{\widehat{\theta}^{(l)}_{\rm JSD}}\right)   \leq    D^{1/2}_{\rm JS} \left(P_{o},  \widehat{P}_{ \mathbf{D}}\right) +  D^{1/2}_{\rm JS} \left( \widehat{P}_{ \mathbf{D}}, P_{\widehat{\theta}^{(l)}_{\rm JSD}}\right).
\end{equation}
By  definition of    $\widehat{\theta}^{(l)}_{\rm JSD}$ and since the  models are nested  and  $l \geq  l^{\ast}$,  we bound the right hand side upwards by  
\begin{equation}\label{hakbd}
\leq   D^{1/2}_{\rm JS} \left(P_{o},  \widehat{P}_{ \mathbf{D}}\right) +  D^{1/2}_{\rm JS} \left( \widehat{P}_{ \mathbf{D}},   P_{\theta^{(l^{\ast})}}\right).
 \end{equation}
 The proof  of Theorem 17 in \citet{corremkos2}   can be used ad verbatim to show that
\begin{equation}\label{totvarber110} 
 D^{1/2}_{\rm JS} \left(P_{o},  \widehat{P}_{ \mathbf{D}}\right)  \rightarrow 0, \text{as $n_{o} \rightarrow +\infty$}   
 \end{equation}
 $P_{o}$-a.s..  In addition,  Proposition \ref{misspec11} entails that 
\begin{equation}\label{totvarber1100}
D^{1/2}_{\rm JS} \left( \widehat{P}_{ \mathbf{D}},   P_{\theta^{(l^{\ast})}}\right)  \rightarrow  D^{1/2}_{\rm JS} \left( P_{o},   P_{\theta^{(l^{\ast})}}\right), \text{as $n_{o} \rightarrow +\infty$.}
 \end{equation}
$P_{o}$-a.s..  In view of  
Equations~(\ref{mindist4})--(\ref{totvarber1100}), we have  
$$
0 <  \delta^{\ast} \leq   D^{1/2}_{\rm JS} \left(P_{o},  \widehat{P}_{ \mathbf{D}}\right) +  D^{1/2}_{\rm JS} \left( \widehat{P}_{ \mathbf{D}}, P_{\widehat{\theta}^{(l)}_{\rm JSD}}\right)
$$
$$
\leq  \lim_{n_{o} \rightarrow +\infty}   D^{1/2}_{\rm JS} \left(P_{o},  \widehat{P}_{ \mathbf{D}}\right) +  \lim_{n_{o} \rightarrow +\infty}   
D^{1/2}_{\rm JS} \left( \widehat{P}_{ \mathbf{D}},   P_{\theta^{(l^{\ast})}}\right) 
$$
$$
=0 +  D^{1/2}_{\rm JS} \left( P_{o},   P_{\theta^{(l^{\ast})}}\right)= \delta^{\ast}.
$$
Hence the claim  in Equation~(\ref{misspclimit}) is established.  \end{proof}

\begin{proposition}
Under the assumptions  of Lemma  \ref{deltaast}, it holds for  large $n_{o}$  and  every $l$ 
\begin{equation}\label{jsdkonsist2}
{\rm SIC}_{\rm JSD} \left( \mathbb{M}_{p}^{\left(l^{\ast}\right)}\right ) \leq {\rm SIC}_{\rm JSD} \left( \mathbb{M}_{p}^{\left(l\right)}\right ) \quad \text{for every  $l \neq  l_{o}$ }
\end{equation}
with   $ P_{o}$- probability one.
\end{proposition}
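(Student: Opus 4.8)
The plan is to reproduce the two-case template used in the proof of Proposition~\ref{cons1}, with the limiting fit value $0$ there replaced by $(\delta^{\ast})^{2}$ here. Throughout I abbreviate $D^{(l)} := D_{\rm JS}\left( \widehat{P}_{\mathbf{D}}, P_{\widehat{\theta}^{(l)}_{\rm JSD}}\right)$, so that by Equation~(\ref{sicjsdo})
\[
{\rm SIC}_{\rm JSD}\left(\mathbb{M}_{p}^{(l)}\right) - {\rm SIC}_{\rm JSD}\left(\mathbb{M}_{p}^{(l^{\ast})}\right) = 2n_{o}\left[ D^{(l)} - D^{(l^{\ast})}\right] + \left({\rm dim}(\Theta^{(l)}) - {\rm dim}(\Theta^{(l^{\ast})})\right)\ln\sqrt{\tfrac{n_{o}}{8\pi}}.
\]
I would split $l \neq l^{\ast}$ into the overfitting range $l > l^{\ast}$ and the underfitting range $l < l^{\ast}$; the former is driven by Lemma~\ref{deltaast}, the latter by a lower-bound variant obtained from the same triangle inequality.

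For $l > l^{\ast}$, the nesting $\Theta^{(l^{\ast})} \subseteq \Theta^{(l)}$ together with the minimality of $\delta^{\ast}$ over all indices in Equation~(\ref{mindist3}) forces $\min_{\theta \in \Theta^{(l)}} D_{\rm JS}(P_{o},P_{\theta}) = (\delta^{\ast})^{2}$ for every $l \geq l^{\ast}$. Hence Lemma~\ref{deltaast} applies to both $l$ and $l^{\ast}$ and yields $D^{(l)} \to (\delta^{\ast})^{2}$ and $D^{(l^{\ast})} \to (\delta^{\ast})^{2}$ with $P_{o}$-probability one. In the displayed difference the penalty gap is nonnegative and of order $\ln n_{o}$ (strictly positive once ${\rm dim}(\Theta^{(l)}) > {\rm dim}(\Theta^{(l^{\ast})})$), and the task is to show that the fit gap $2n_{o}[D^{(l)} - D^{(l^{\ast})}]$ cannot overwhelm it, so that the difference is eventually $\geq 0$.

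For $l < l^{\ast}$, set $\delta_{l} := \min_{\theta \in \Theta^{(l)}} D^{1/2}_{\rm JS}(P_{o},P_{\theta})$, which satisfies $\delta_{l} > \delta^{\ast}$ by Equation~(\ref{mindist3}). Exactly as in the underfitting branch of Proposition~\ref{cons1}, Proposition~\ref{sqrmetric} gives
\[
\delta_{l} \leq D^{1/2}_{\rm JS}\left( P_{o}, P_{\widehat{\theta}^{(l)}_{\rm JSD}}\right) \leq D^{1/2}_{\rm JS}\left( P_{o}, \widehat{P}_{\mathbf{D}}\right) + \left( D^{(l)}\right)^{1/2},
\]
and since $D^{1/2}_{\rm JS}(P_{o}, \widehat{P}_{\mathbf{D}}) \to 0$ $P_{o}$-a.s.\ by Equation~(\ref{totvarber110}), one gets $\liminf_{n_{o}} D^{(l)} \geq \delta_{l}^{2} > (\delta^{\ast})^{2}$. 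The fit term of the $l$-model then exceeds that of the $l^{\ast}$-model by roughly $2n_{o}(\delta_{l}^{2} - (\delta^{\ast})^{2})$, which grows linearly in $n_{o}$ and dominates the $O(\ln n_{o})$ penalty difference; choosing $N$ so that this holds for all $n_{o} > N$ reproduces the inequality chain of the $l<l_{o}$ branch of Proposition~\ref{cons1} (with $(\delta^{\ast})^{2}$ in place of $\epsilon$) ending at Equation~(\ref{sicineq}), and gives Equation~(\ref{jsdkonsist2}) for this range with $P_{o}$-probability one.

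The hard part is the overfitting case. There both $\widehat{\theta}^{(l)}_{\rm JSD}$ and $\widehat{\theta}^{(l^{\ast})}_{\rm JSD}$ estimate pseudo-true parameters with the common population value $(\delta^{\ast})^{2}$, so the a.s.\ limits supplied by Lemma~\ref{deltaast} alone do not control the product $2n_{o}[D^{(l)} - D^{(l^{\ast})}]$. What is needed is that this fit gap is $O_{P_{o}}(1)$, the misspecified analogue of a likelihood-ratio statistic between nested models; it is then dominated by the penalty gap of order $\ln n_{o}$ whenever the dimension strictly increases. I would establish this boundedness through the asymptotic distribution theory of the minimum $\phi$-divergence estimator under misspecification in \citet{morales1995asymptotic}. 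The degenerate sub-case ${\rm dim}(\Theta^{(l)}) = {\rm dim}(\Theta^{(l^{\ast})})$ with $l > l^{\ast}$ has a vanishing penalty gap and must be excluded by, or folded into, the strict dimension ordering built into the hypotheses.
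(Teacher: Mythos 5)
Your proposal follows the same two-case template as the paper's own proof, which essentially says ``modify the proof of Proposition~\ref{cons1}''. Your underfitting branch ($l < l^{\ast}$) is the paper's argument, and is in fact tighter: the paper squares the triangle inequality and only obtains $(\delta^{\ast})^{2} < \liminf_{n_{o}} 2D_{\rm JS}\left( \widehat{P}_{\mathbf{D}}, P_{\widehat{\theta}^{(l)}_{\rm JSD}}\right)$, i.e.\ a liminf of the fit bounded below by $(\delta^{\ast})^{2}/2$, which by itself is not enough to dominate the $l^{\ast}$-model's fit term $2n_{o}(\delta^{\ast})^{2}$; your unsquared triangle inequality gives $\liminf_{n_{o}} D^{(l)} \geq \delta_{l}^{2}$, which is what is actually needed. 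Note, however, that both you and the paper tacitly require the strict inequality $\delta_{l} > \delta^{\ast}$ for $l < l^{\ast}$: Equation~(\ref{mindist3}) only yields $\delta_{l} \geq \delta^{\ast}$, and without strictness the statement itself would fail (a smaller model attaining the same minimal divergence would eventually have smaller ${\rm SIC}_{\rm JSD}$), so this is really an implicit hypothesis of the proposition rather than something either argument derives.

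On the overfitting branch ($l > l^{\ast}$), the ``hard part'' you identify is genuine, but you should know the paper does not solve it either: its proof simply writes both models' fit terms as the identical quantity $2n_{o}(\delta^{\ast})^{2}$ (Equation~\ref{sicjsdcomp22}) and then compares the penalties $C\left(l^{\ast}, n_{o}\right) \leq C\left(l, n_{o}\right)$. Almost sure convergence $D^{(l)} \rightarrow (\delta^{\ast})^{2}$ and $D^{(l^{\ast})} \rightarrow (\delta^{\ast})^{2}$ from Lemma~\ref{deltaast} does not license this substitution: the difference $2n_{o}\left[ D^{(l)} - D^{(l^{\ast})}\right]$, which is $\leq 0$ by nesting, could in principle be of larger order than the $O(\ln n_{o})$ penalty gap. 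Your diagnosis --- that one needs an $O_{P_{o}}(1)$, or at least $o_{P_{o}}(\ln n_{o})$, bound on this misspecified likelihood-ratio-type statistic, obtainable from the minimum $\phi$-divergence asymptotics of \citet{morales1995asymptotic} --- names exactly the missing ingredient, and you leave it as a sketch just as the paper leaves it unaddressed (indeed unacknowledged). The same applies to your remark on the degenerate sub-case ${\rm dim}(\Theta^{(l)}) = {\rm dim}(\Theta^{(l^{\ast})})$, which the paper also passes over. In sum, your proposal matches the paper's proof in structure, is nowhere less rigorous than it, and is more careful at the two points where the paper's argument is loose.
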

\begin{proof} 
\begin{description}
\item[$l  > l^{\ast}$:]  By Lemma \ref{deltaast} and replacement of $P_{\theta_{o}}$ with  $P_{o}$,  the proof of    Proposition  \ref{cons1} can be modified  to entail   the  statement  in Equation~(\ref{sicjsdcomp})  in the form that there is 
 some $n_{\epsilon}$ such that  for  $n_{o} >n_{\epsilon}$
\begin{equation}\label{sicjsdcomp22}
{\rm SIC}_{\rm JSD} \left( \mathbb{M}_{p}^{(l_{o})}\right ) =2 n_{o}  ( \delta^{\ast})^{2}  +  C\left(l^{\ast}, n_{o}\right) \leq  2 n_{o}  
(\delta^{\ast})^{2}  +  C\left(l, n_{o}\right) .
\end{equation}
since   $ C\left(l^{\ast}, n_{o}\right) \leq   C\left(l, n_{o}\right) $   for every $l >   l^{\ast}$, when   $n_{o} >8 \pi $.  
\item[$ l< l^{\ast}$:] Since   $ P_{o} \notin \mathbb{M}^{\left(l \right)}$ for $l  <   l^{\ast}$, then  it holds 
for every $l  < l_{o}$ and  for     $\delta^{\ast}>0$ in Equation~(\ref{mindist3})  that 
$$
0 <  \delta^{\ast}\leq   \min_{\theta \in \Theta^{(l)}}D^{1/2}_{\rm JS} \left( P_{o},  P_{\theta}\right) \leq  
D^{1/2}_{\rm JS} \left( P_{o},   P_{ \widehat{\theta}^{(l)}_{{\rm JSD}} (\mathbf{D}) }\right).
$$
It follows modifying  the second case proof  of  Proposition  \ref{cons1} that   
$$
( \delta^{\ast})^{2}  <  \liminf_{n_{o} \rightarrow +\infty}2D_{\rm JS}\left( \widehat{P}_{\mathbf{D}}\ ,P_{\widehat{\theta}^{(l)}_{\rm JSD}} \right).
$$
Hence  for    $ l< l^{\ast}$ 
\begin{equation}\label{unbded22}
{\rm SIC}_{\rm JSD} \left( \mathbb{M}_{p}^{(l)}\right ) =2 n_{o} D_{\rm JS} \left( \widehat{P}_{ \mathbf{D}}, P_{\widehat{\theta}^{(l)}_{\rm JSD}}\right) + C\left(l, n_{o}\right).
\end{equation}
is a  function of  $n_{o}$ that will ultimately with   $P_{o}$ -probability one  exceed 
${\rm SIC}_{\rm JSD} \left( \mathbb{M}_{p}^{(l^{\ast})}\right )=2 n_{o} ( \delta^{\ast})^{2}  + C\left(l^{\ast}, n_{o}\right)$ established in the first case  of this proof.   The rest of the proof is as in   Proposition  \ref{cons1}.
 \end{description}

\end{proof}

\section{Simulation Experiments}\label{sec:experiments}

\renewcommand{\arraystretch}{1.2}

\newcommand{\lx}{\lambda^X} 
\newcommand{\ly}{\lambda^Y} 
\newcommand{\lxy}{\lambda^{XY}} 

\newcommand{\minjsd}{D_{\rm JS} \left( \widehat{P}_{ \mathbf{D}}, P_{\widehat{\theta}^{(l)}_{\rm JSD}}\right)}

The experiments carried out in this work are aimed to evaluate the proposed SIC-JSD rule and its simulator-based approximation SIC-BOLFI.
We evaluate the model selection rules using simulation experiments where the true model is known, and test simulator-based model selection in a real task studied in previous work \citep{corander2017frequency}.
Section~\ref{sec:setup} reviews the model selection rules used in the experiments while Sections \ref{sec:experiment1}--\ref{sec:experiment3} present the experiments and results in more detail.

\subsection{Methods}\label{sec:setup}

The model selection experiments  considered in this work compare candidate models $\mathbb M_p^{(l)}$ fitted to observed data.
We run experiments with simulated observations to evaluate how the proposed model selection rule and its simulator-based approximation behave in different test conditions and when we increase the observed data set size $\nobs$.
While the present work focuses on parameter estimation and model selection in simulator-based models with intractable likelihoods, the simulation experiments in Section \ref{sec:experiment1}--\ref{sec:experiment2} were carried out with models where the mapping between model parameters $\param$ and observation probabilities $P_\param$ is available.
These experiments allowed us to evaluate model selection based on SIC (Equation \ref{scicrit}) and SIC-JSD (Equation \ref{sicjsdo}).


When the mapping between model parameters and observation probabilities is complicated or unknown, parameter estimation and model selection are carried out based on comparison between observed and simulated data.
The present experiments use BOLFI \citep{gutmann2016bayesian} to find parameter values that minimize the expected JSD between observed and simulated data. 
The experiments were carried out with the implementation available in ELFI \citep{lintusaari2018elfi}.
Gaussian process regression with a normal likelihood and squared exponential kernel was used to model the dependencies between simulator parameters and expected JSD between observed and simulated data in BOLFI, and the parameter values used in the simulations were selected based on the lower confidence bound acquisition rule (Section \ref{sec:experiment1}--\ref{sec:experiment2}) or the maximum variance acquisition rule (Section \ref{sec:experiment3}).
The candidate models studied in Section \ref{sec:experiment1} were fitted using 1000 iterations and the candidate models Section \ref{sec:experiment2} and \ref{sec:experiment3} using 2000 iterations.

\subsection{Experiment 1}\label{sec:experiment1}


This experiment studies model selection using the nested model introduced in Example \ref{multicateg2}.
We simulate 100 observation sets with $\nobs=100$ and $\nobs=1000$ samples using model $\mathbb M_p^{(0)}$ which corresponds to $\theta=(0,0)$, model $\mathbb M_p^{(1)}$ with $\theta=(0.2, 0)$ and $\theta=(0.7,0)$, and model $\mathbb M_p^{(2)}$ with $\theta=(0.2,0.2)$, $\theta=(0.7,0.2)$, $\theta=(0.2,0.7)$, and $\theta=(0.7,0.7)$.
The corresponding category probabilities are visualized in Figure~\ref{fig:event_probabilities_example_22}.
\begin{figure}
    \centering
    \includegraphics[width=0.9\textwidth]{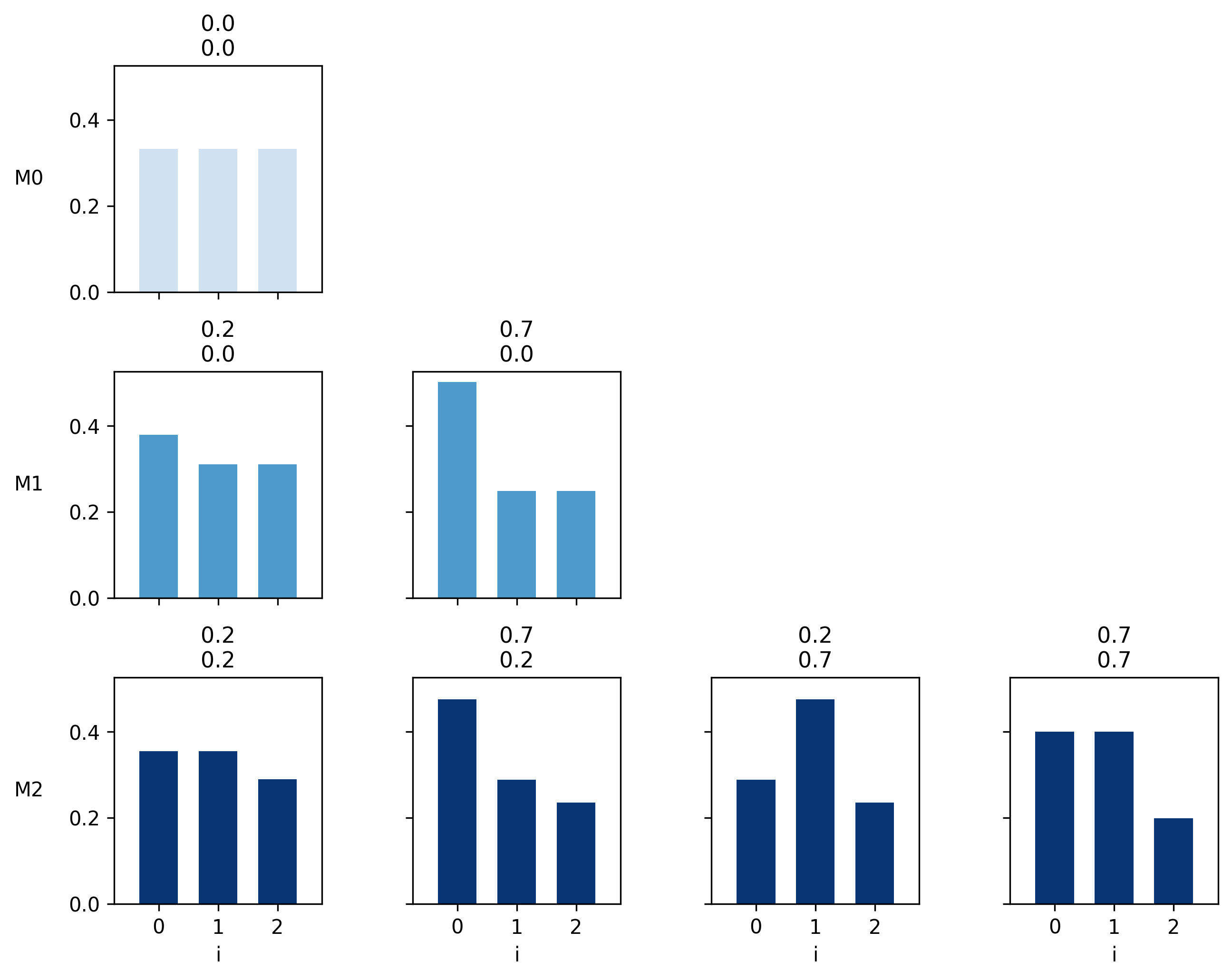}
    \caption{Models and category probabilities used to simulate the observation sets used in the model selection experiment carried out with the nested model in Example \ref{multicateg2}.}
    \label{fig:event_probabilities_example_22}
\end{figure}


The candidate models $\mathbb M_p^{(0)}$, $\mathbb M_p^{(1)}$, and $\mathbb M_p^{(2)}$ are each fitted to the simulated observation sets and model selection is carried out based on SIC, SIC-JSD, and SIC-BOLFI.
The model selection results calculated based on observation sets with $\nobs=100$ observations are presented in Figure~\ref{fig:modelval_example_22_100} and the results calculated based on observation sets with $\nobs=1000$ observations in Figure~\ref{fig:modelval_example_22_1000}.
We observe that when $\nobs=100$, all model selection rules favor $\mathbb M_p^{(0)}$ in test conditions where the true model parameters have low values, and $\mathbb M_p^{(1)}$ in test conditions where the second parameter alone has a low value.
However the model selection rules do not agree in all experiments.
SIC chooses $\mathbb M_p^{(0)}$ or $\mathbb M_p^{(1)}$ over the true model in more experiments than SIC-JSD or SIC-BOLFI, while SIC-JSD and SIC-BOLFI choose $\mathbb M_p^{(2)}$ over the true model in more experiments than SIC. 
%
%
This difference disappears when $\nobs=1000$, and all model selection rules also choose the true model in more experiments.

\begin{figure}
    \centering
    \includegraphics[width=\textwidth]{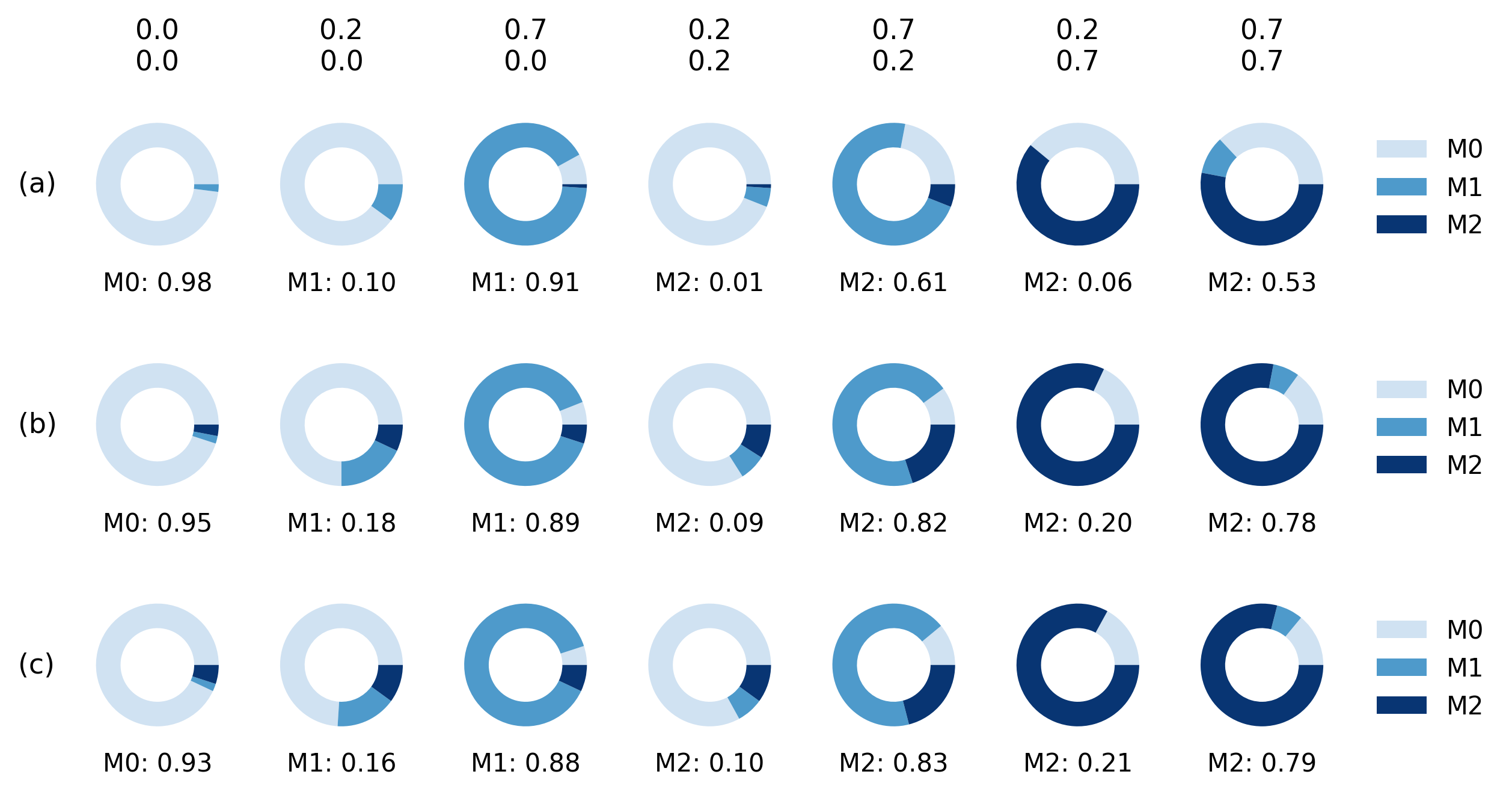}
    \caption{Model selection results calculated based on 100 observation sets with $\nobs=100$ observations simulated with the models in Figure \ref{fig:event_probabilities_example_22}. Model selection rules used in the experiment are (a) SIC, (b) SIC-JSD, and (c) SIC-BOLFI.}
    \label{fig:modelval_example_22_100}
\end{figure}

\begin{figure}
    \centering
    \includegraphics[width=\textwidth]{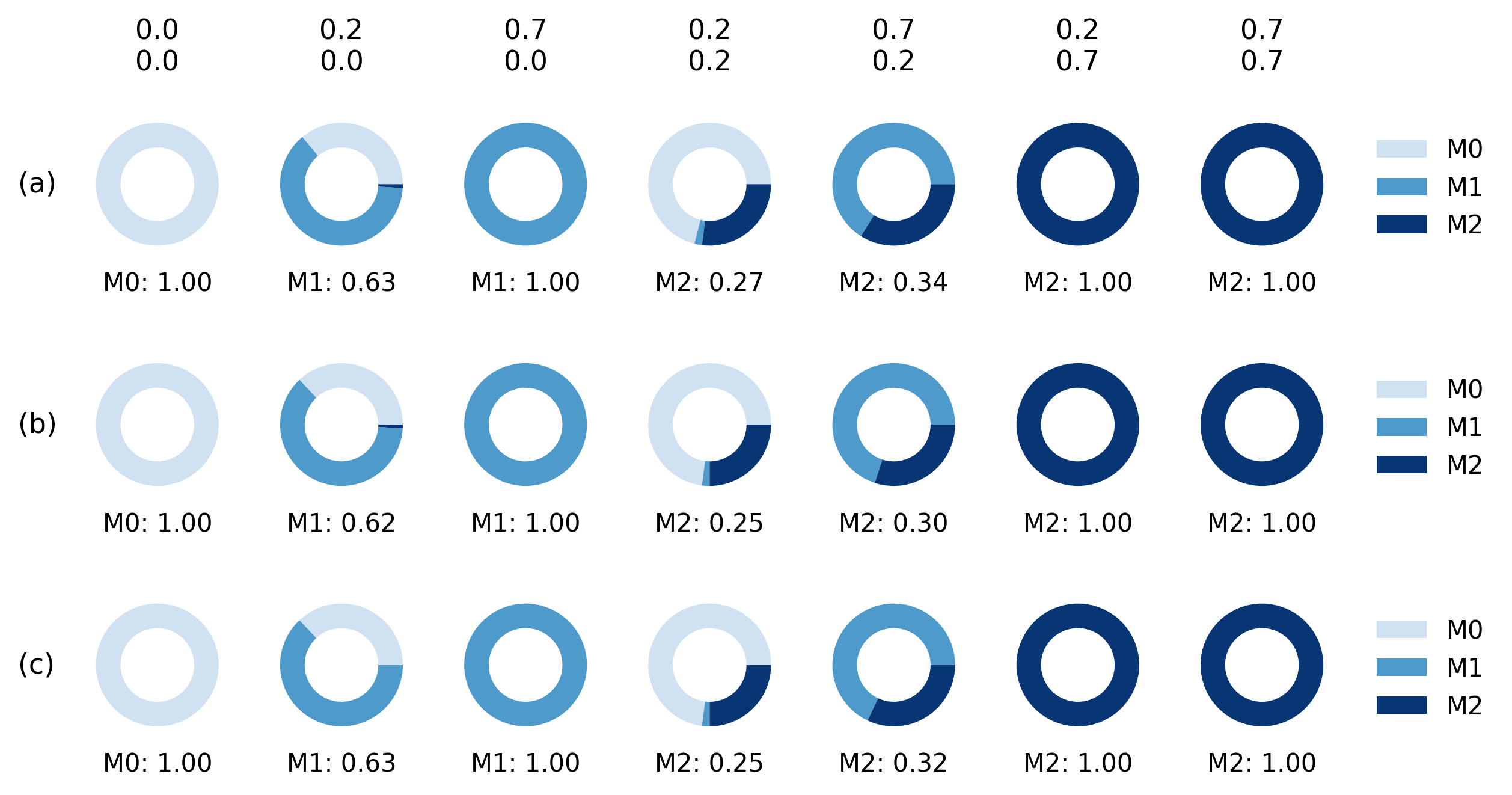}
    \caption{Model selection results calculated based on 100 observation sets with $\nobs=1000$ observations simulated with the models in Figure \ref{fig:event_probabilities_example_22}. Model selection rules used in the experiment are (a) SIC, (b) SIC-JSD, and (c) SIC-BOLFI.}
    \label{fig:modelval_example_22_1000}
\end{figure}

\subsection{Experiment 2}\label{sec:experiment2}


The second experiment studies model selection between log-linear models that describe the association and interaction patterns between two categorical random variables.
The counts in a two-way table are modeled as a sample from a multinomial distribution with $k=4$ categories and expected observation counts $\mu_\icls$ modeled as
\begin{equation}
\log(\mu_\icls) = \lambda+ X_\icls\lx+ Y_\icls\ly+X_\icls Y_\icls\lxy, \quad \icls = 1,2,3,4,
\end{equation}
where $\icls$ indexes the categories and $X_i$ and $Y_i$ are effect-coded variables that take values 1 or -1 as indicated in Table~\ref{tab:effectcode}.
\begin{table}
    \centering
    \begin{tabular}{ccccc}
    $\icls$ & 1 & 2 & 3 & 4  \\\hline
    $X_i$ & 1 & 1 & -1 & -1 \\ 
    $Y_i$ & 1 & -1 & 1 & -1 \\
    \end{tabular}
    \caption{\label{tab:effectcode}Effect coding in the log-linear example.}
\end{table}
The model parameters $\lx$ and $\ly$ then encode expected difference in the proportion between 1 and -1 values in variables $X$ and $Y$, and the parameter $\lxy$ encodes possible association between the two variable values.
Finally the constant $\lambda$ is calculated based on the other parameter values and sample size $\nsms$ so that the sum over expected counts equals $\nsms$.


We run experiments with a two-parameter model $\mathbb M_p^{(2)}$ where $\lxy=0$ and the model parameters $\param=(\lx,\ly)$ and a saturated or three-parameter model $\mathbb M_p^{(3)}$ where the model parameters $\param=(\lx,\ly,\lxy)$.
The observation sets used in the experiments are simulated with parameter values selected as follows.
The model parameters $\lx$ and $\ly$ are first associated with 100 values selected at random within $[-1, 1]\times[-1, 1]$ while the interaction parameter is associated with 11 values selected at interval $0.1$ between $-0.5$ and $0.5$.
We then combine each $\lxy$ value with the $\lx$ and $\ly$ values to create $11\times 100$ parameter combinations and use each combination to simulate observation sets with $\nobs=100$ and $\nobs=1000$ samples.


Model selection results between $\mathbb M_p^{(2)}$ and $\mathbb M_p^{(3)}$ are presented in Table \ref{tab:modelval_loglin}.
%
%
We observe that SIC-JSD and SIC-BOLFI choose $\mathbb M_p^{(3)}$ over $\mathbb M_p^{(2)}$ in more experiments than SIC when $\nobs=100$, but the differences disappear when $\nobs=1000$.
Comparison between the model selection rates in experiments with $\nobs=100$ and $\nobs=1000$ also indicates that all model selection rules choose the true model in more experiments when $\nobs=1000$.

\begin{table}
    \centering
    \begin{tabular}{ll}
    (a) &
    \begin{tabular}{lccccccccccc}
    $\lxy$ & -0.5 & -0.4 & -0.3 & -0.2 & -0.1 & 0.0 & 0.1 & 0.2 & 0.3 & 0.4 & 0.5\\\hline
    SIC & 0.94 & 0.75 & 0.53 & 0.33 & 0.11 & 0.01 & 0.0 8& 0.42 & 0.59 & 0.77 & 0.93\\
    SIC-JSD & 0.99 & 0.87 & 0.67 & 0.44 & 0.20 & 0.10 & 0.17 & 0.55 & 0.76 & 0.92 & 0.96\\
    SIC-BOLFI & 0.99 & 0.89 & 0.68 & 0.46 & 0.24 & 0.16 & 0.22 & 0.58 & 0.77 & 0.94 & 0.96\\
    \end{tabular}\\
    & \\
    (b) &
    \begin{tabular}{lccccccccccc}
    $\lxy$ & -0.5 & -0.4 & -0.3 & -0.2 & -0.1 & 0.0 & 0.1 & 0.2 & 0.3 & 0.4 & 0.5\\\hline
    SIC & 1.00 & 1.00 & 1.00 & 0.96 & 0.41 & 0.00 & 0.43 & 0.95 & 1.00 & 1.00 & 1.00\\
    SIC-JSD & 1.00 & 1.00 & 1.00 & 0.95 & 0.39 & 0.00 & 0.41 & 0.95 & 1.00 & 1.00 & 1.00\\
    SIC-BOLFI & 1.00 & 1.00 & 1.00 & 0.95 & 0.39 & 0.00 & 0.39 & 0.95 & 1.00 & 1.00 & 1.00\\
    \end{tabular}
    \end{tabular}
    \caption{$\mathbb M_p^{(3)}$ selection rate calculated based on 100 observation sets with (a) $\nobs=100$ or (b) $\nobs=1000$ observations simulated with the log-linear model.}
    \label{tab:modelval_loglin}
\end{table}

\subsection{Experiment 3}\label{sec:experiment3}


The last experiment is carried out with the simulator models and data used in previous work \citep{corander2017frequency}.
\citet{corander2017frequency} studied negative frequency-dependent selection (NFDS) in genotype frequencies in post-vaccine pneumococcal populations.
We replicate a comparison between three simulators that model the evolution in genotype frequencies as a discrete-time process where the population at time $t$ is sampled with replacement from population at time $t-1$ using reproduction probabilities calculated based on simulator parameters $\param$.
The candidate models are nested and include the neutral model, the homogeneous-rate multilocus NFDS model, and the heterogeneous-rate multilocus NFDS model.


The models considered in this experiment calculate the reproduction probabilities based on 2--5 parameters as follows.
The neutral model $\mathbb M_p^{(2)}$ takes into account the migration rate $m$ and the vaccine selection strength $v$.
These capture the negative pressure due to migration into population and the negative selection pressure on vaccine-type isolates.
The other model variants extend the neutral model to take into account how isolates with rare variations in their accessory genome could experience positive selection pressure under NFDS.
The difference between the two models is how variations that occur in different loci contribute in the selection pressure.
The homogeneous-rate multilocus NFDS model $\mathbb M_p^{(3)}$ associates all variations with the maximal selection pressure $\sigma_f$ while the heterogeneous-rate multilocus NFDS model $\mathbb M_p^{(5)}$ associates variations in some loci with a weaker selection pressure $\sigma_w$.
The proportion of loci under stronger NFDS is captured with the model parameter $p_f$.


The candidate models are fitted to data that was collected to follow how vaccination affected the pneumococcal population in Massachusetts \citep{mass}.
The data set used in this experiment includes a pre-vaccination ($t=0$) sample with 133 isolates and two post-vaccination samples with 203 isolates collected at $t=36$ and 280 isolates collected at $t=72$.
The isolates have been divided into 41 sequence clusters and typed as vaccine or non-vaccine type as discussed in previous work \citep{corander2017frequency}.
The data set is visualized in Figure~\ref{fig:mass_data}.
\begin{figure}
    \centering
    \includegraphics[width=0.9\textwidth]{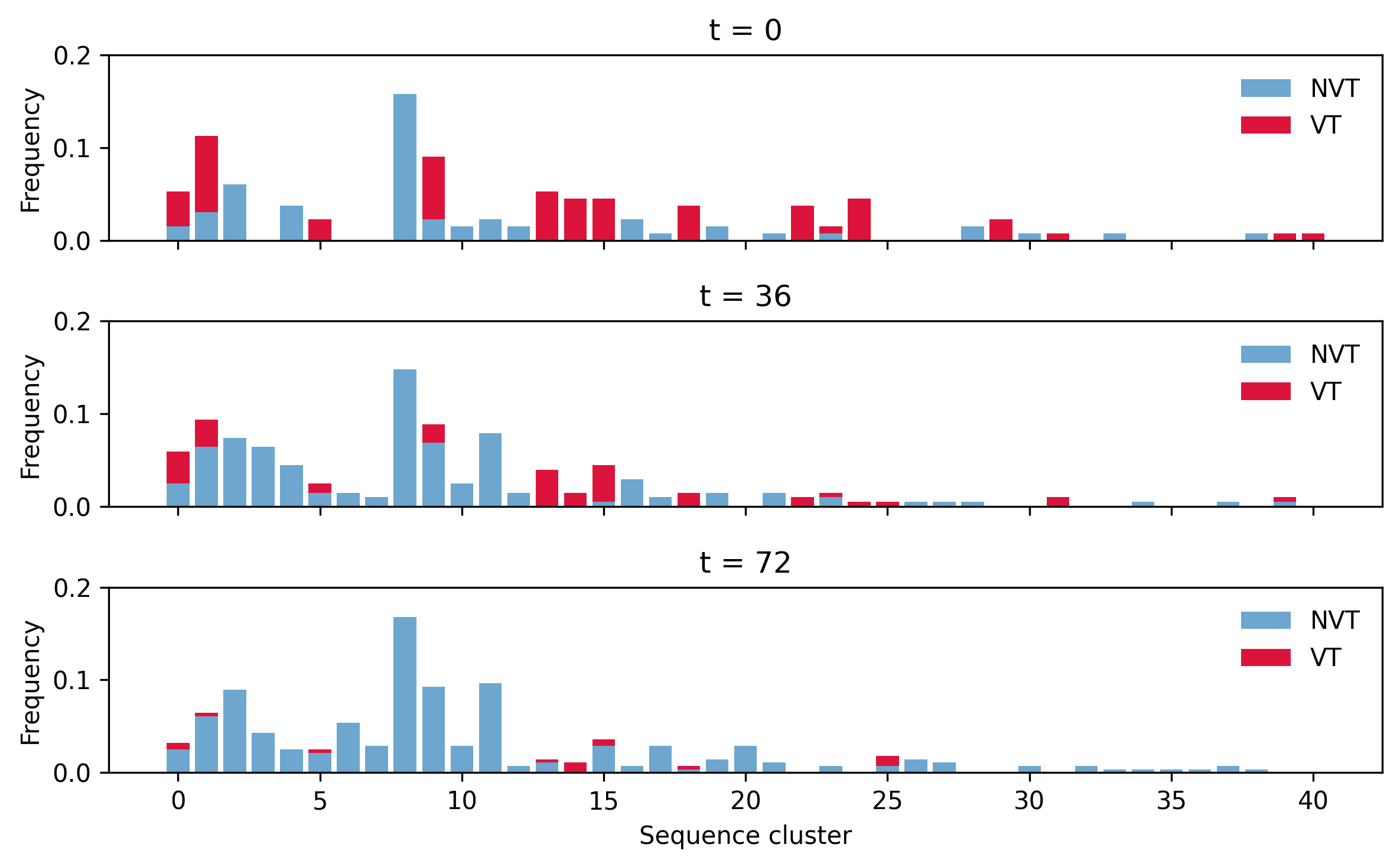}
    \caption{Sequence cluster frequencies in the observed population data. The observations are divided into non-vaccine type (NVT) and vaccine type (VT).}
    \label{fig:mass_data}
\end{figure}
The negative selection pressure on vaccine-type isolates is observed as a reduction in the vaccine-type isolates over time.
In addition we observe a change in the relative frequencies between sequence clusters represented in the population that is non-vaccine tye.
This could indicate that the vaccine resulted in a positive selection pressure on isolates in certain sequence clusters.


Parameter estimation is carried out as proposed in the previous work.
We sample the pre-vaccine ($t=0$) data to initialize the simulated population in each candidate model and use BOLFI to find the parameter estimates that minimize the expected JSD between observed and simulated data sets at $t=36$ and $t=72$.
Hence the sample size used in parameter estimation and model selection $\nobs=483$.
Parameter estimation is carried out over $\ln(m)\in [-7, -1.6]$, $\ln(v) \in [-7, -0.7]$, $\ln(\sigma_f)\in [-7, -1.6]$, $\ln \sigma_w\in [-7, -1.9]$, and $p_f\in [0, 1]$ with the constraint $\sigma_f>\sigma_w$, and the model selection criterion is evaluated based on the average over the expected JSD at $t=36$ and $t=72$.
The parameter estimation and model selection are replicated 100 times to capture possible random variation between parameter estimates and model fit evaluated with BOLFI.


The model selection results reported in Table \ref{tab:modelval_mass} indicate that SIC-BOLFI chooses the heterogeneous multilocus NFDS model $\mathbb M_p^{(5)}$ over the neutral model $\mathbb M_p^{(2)}$ or the homogeneous multilocus NFDS model $\mathbb M_p^{(3)}$.
\begin{table}
    \centering
    \begin{tabular}{cccccccc}
        Model & Selection rate & JSD & $\hat m$ & $\hat v$ & $\hat\sigma_f$ & $\hat\sigma_w$ & $\hat p_f$ \\\hline
        $\mathbb M_p^{(2)}$ & 0.00 & 0.23 & 0.007 & 0.037\\
        $\mathbb M_p^{(3)}$ & 0.00 & 0.20 & 0.006 & 0.073 & 0.008\\
        $\mathbb M_p^{(5)}$ & 1.00 & 0.14 & 0.005 & 0.088 & 0.114 & 0.002 & 0.372 \\
    \end{tabular}
    \caption{Model selection rate and median parameter estimation results calculated based on 100 BOLFI experiments with the pneumococcal population models and data.
    }
    \label{tab:modelval_mass}
\end{table}
%
The outcome seems reasonable since $\mathbb M_p^{(5)}$ introduces a notable improvement in the model fit.
However we also check that the model selection rule is not biased towards $\mathbb M_p^{(5)}$ by running an additional simulation experiment.
We simulate 100 observation sets with 250 isolates sampled at $t=36$ and $t=72$ ($\nobs=500$) using $\mathbb M_p^{(2)}$ with $m=0.007$ and $v=0.050$ and $\mathbb M_p^{(3)}$ with $m=0.007$, $v=0.050$, and $\sigma_f=0.007$.
The model selection results reported in Table \ref{tab:modelval_simu} show that the true model is selected in most experiments.
%

\begin{table}
    \centering
    \begin{tabular}{ll}
    (a) &
    \begin{tabular}{cccccccc}
        Model & Selection rate & JSD \\\hline 
        $\mathbb M_p^{(2)}$ & 0.84 & 0.11 \\
        $\mathbb M_p^{(3)}$ & 0.10 & 0.11 \\
        $\mathbb M_p^{(5)}$ & 0.06 & 0.11 \\
    \end{tabular} \\
    & \\
    (b) & 
    \begin{tabular}{cccccccc}
        Model & Selection rate & JSD \\\hline 
        $\mathbb M_p^{(2)}$ & 0.00 & 0.14 \\
        $\mathbb M_p^{(3)}$ & 0.99 & 0.12 \\
        $\mathbb M_p^{(5)}$ & 0.01 & 0.12 \\
    \end{tabular}
    \end{tabular}
    \caption{Model selection rate and median minimum expected JSD calculated based on 100 observation sets simulated with the pneumococcal population models (a) $\mathbb M_p^{(2)}$ and (b) $\mathbb M_p^{(3)}$.
    }
    \label{tab:modelval_simu}
\end{table}

\section{Discussion and Conclusions}

Model choice as a statistical problem has a rich history in computer science, inspired in particular by information theory, and in statistics, where the major innovations have been founded on Bayesian thinking, which provides an intrinsic solution to the need to penalize more complex models by the prior distribution of the model parameters. However, literature on model choice for the likelihood-free inference setting is scarce, which is understandable since the vast majority of existing model scoring criteria use the likelihood in one way or another to measure the fidelity of a model as a data representation. 

The approach introduced here (JSD-Razor) was inspired by the Occam$^{,}$s  Razor as developed  by Balasubramanian and co-workers   in  \citet{balasubramanian2005mdl}, \citet{balasubramanian1996geometric} and  \citet{myung2000counting}, as well as by our previous work on the asymptotics of likelihood-free parameter inference under JSD  \citep{corremkos}. To the best of our knowledge, this is the first information-theoretic model scoring criterion introduced for simulator-based likelihood-free modeling setting. We anticipate that there are multiple opportunities for future developments in this area that can broaden the applicability of JSD-Razor to several other classes of models and spawn even more refined scoring criteria. For example, in model classes with continuous output, it would be possible to consider quantization to make JSD-Razor applicable, which raises several interesting questions related to the relative loss of information as a function quantization scheme and the sample size.  

\section*{Acknowledgments} 

The authors wish to acknowledge CSC $-$ IT Center for Science, Finland, for computational resources.
J.C. and U.R. are supported by  ERC grant 742158   and T.K.  is supported by  FCAI (=Finnish Center for Artificial Intelligence).

\newpage 
\appendix 
\numberwithin{equation}{section}
\numberwithin{theorem}{section}
\numberwithin{example}{section}

\section{Expressions and   Bounds for JSD }\label{app_a} 
Let us next define the   metric $ || .||_{2}$ on $\mathbb{P}  \times \mathbb{P}$ by  means of the norm on 
$\mathbf{R}^{k}$ as
\begin{equation}\label{birchlemma0}
 || P-Q ||_{2}: =\sqrt{\sum_{x \in {\cal A}} \left( P(x)-Q(x) \right)^{2} }= || \triangle(P) -\triangle(Q )||_{2, \mathbf{R}^{k}}.
\end{equation}  
\begin{lemma}\label{birchbd}
For  any  $(P,Q) \in \mathbb{P}  \times \mathbb{P}$ 
\begin{equation}\label{jsdl2}
D_{\rm JS }^{1/2}(P,Q) \geq   \frac{\sqrt{2}}{4}  || P-Q ||_{2}.
\end{equation}
 \end{lemma}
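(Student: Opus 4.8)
The plan is to bound the symmetric JSD from below by the squared total variation distance $V(P,Q)$ of Equation~(\ref{totvar}) through two applications of Pinsker's inequality, and then to pass from that $L^{1}$-type distance to the Euclidean distance $|| P-Q||_{2}$. Since the claim is equivalent, after squaring, to $D_{\rm JS}(P,Q) \geq \frac{1}{8}|| P-Q||_{2}^{2}$, all the real work takes place at the level of the unsquared divergence.

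First I would start from the defining Equation~(\ref{jssymm}), namely $D_{\rm JS}(P,Q) = \frac{1}{2}D_{\rm KL}(P,M) + \frac{1}{2}D_{\rm KL}(Q,M)$ with $M = \frac{1}{2}P + \frac{1}{2}Q$, and apply Pinsker's inequality in the form $D_{\rm KL}(R,S) \geq \frac{1}{2}V(R,S)^{2}$ to each of the two Kullback--Leibler terms. The decisive simplification is that $M$ is the midpoint of $P$ and $Q$: since $p_{i}-m_{i} = \frac{1}{2}(p_{i}-q_{i})$ and $q_{i}-m_{i} = \frac{1}{2}(q_{i}-p_{i})$, both $V(P,M)$ and $V(Q,M)$ equal $\frac{1}{2}V(P,Q)$. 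Hence each term satisfies $D_{\rm KL}(P,M) \geq \frac{1}{2}\cdot\frac{1}{4}V(P,Q)^{2} = \frac{1}{8}V(P,Q)^{2}$, and averaging the two gives $D_{\rm JS}(P,Q) \geq \frac{1}{8}V(P,Q)^{2}$.

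Second, I would convert $V$ to the Euclidean norm using the elementary inequality $|| {\bf x}||_{1} \geq || {\bf x}||_{2}$ on $\mathbf{R}^{k}$ applied to ${\bf x} = \triangle(P)-\triangle(Q)$; that is, $V(P,Q) = \sum_{i=1}^{k}|p_{i}-q_{i}| \geq \sqrt{\sum_{i=1}^{k}(p_{i}-q_{i})^{2}} = || P-Q||_{2}$. Combining the two steps yields $D_{\rm JS}(P,Q) \geq \frac{1}{8}|| P-Q||_{2}^{2}$, and taking square roots gives $D_{\rm JS}^{1/2}(P,Q) \geq \frac{1}{2\sqrt{2}}|| P-Q||_{2} = \frac{\sqrt{2}}{4}|| P-Q||_{2}$, which is exactly Equation~(\ref{jsdl2}).

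The only genuine input beyond bookkeeping is Pinsker's inequality, which is not established in the excerpt but is entirely standard for $\phi$-divergences; the midpoint identity for $M$ and the $L^{1}$-to-$L^{2}$ comparison are both routine. I therefore expect no real obstacle. It is worth noting that the resulting bound is deliberately not tight — the two uses of Pinsker and the crude norm comparison each lose constants — but this is acceptable, since the lemma is used only as a qualitative control of the Euclidean distance by $D_{\rm JS}^{1/2}$, as in Equation~(\ref{jsdl22}).
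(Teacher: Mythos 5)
Your proof is correct and produces exactly the right constant. It shares its skeleton with the paper's proof---both split $D_{\rm JS}(P,Q)$ into the two Kullback--Leibler terms against the midpoint $M$, apply a Pinsker-type lower bound to each, and exploit the fact that $M$ halves the distance between $P$ and $Q$---but the two arguments differ in the key inequality invoked and in where the square root enters. The paper cites Birch's $L^{2}$-form inequality $D_{\rm KL}(P,Q)\geq \tfrac{1}{2}\|P-Q\|_{2}^{2}$ and bounds $D^{1/2}_{\rm JS}$ directly, which forces an extra step you do not need: concavity of the square root (Jensen) to pass from $D^{1/2}_{\rm JS}$ to the average $\tfrac{1}{2}D^{1/2}_{\rm KL}(P,M)+\tfrac{1}{2}D^{1/2}_{\rm KL}(Q,M)$ before bounding each term. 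You instead stay at the level of squared quantities, invoke the textbook $L^{1}$ Pinsker inequality $D_{\rm KL}\geq\tfrac{1}{2}V^{2}$, use the midpoint identity $V(P,M)=V(Q,M)=\tfrac{1}{2}V(P,Q)$, convert to the Euclidean norm via $\|\cdot\|_{1}\geq\|\cdot\|_{2}$, and take a single square root at the very end. What each buys: your version is more self-contained and slightly more elementary, since standard Pinsker replaces the cited Birch lemma (which is itself Pinsker composed with the same norm comparison) and the Jensen step disappears; the paper's version is a line shorter because the $\ell_{1}$-to-$\ell_{2}$ comparison is already absorbed into Birch's inequality. Both routes yield the identical, non-tight bound $\tfrac{\sqrt{2}}{4}\|P-Q\|_{2}$.
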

\begin{proof} 
Since the square root is a concave function on $[0, +\infty)$, we have by Equation~(\ref{jssymm})  
\begin{equation}\label{sqrconc}
D_{\rm  JS}^{1/2}(P,Q) \geq    \frac{1}{2}  D_{\rm KL}^{1/2}( P, M) +  \frac{1}{2}  D_{\rm KL}^{1/2}( Q,  M).
 \end{equation} 
It is shown in \citet[Lemma 1 p.~819]{birch1964} that  for  any  $(P,Q) \in \mathbb{P}  \times \mathbb{P}$ 
\begin{equation}\label{birchlemma1}
D_{\rm KL}(P,Q) \geq \frac{1}{2} || P-Q ||_{2}^{2}.  
\end{equation} 
 Hence we bound  downwards in the right hand side of  Equation~(\ref{sqrconc}) by 
\begin{equation}\label{birchtklemma1}
\geq   \frac{1}{2 \sqrt{2}} \left(  || P-M ||_{2}  + || Q-M ||_{2} \right). 
\end{equation} 
Here  by Equation~(\ref{birchlemma0})
$$
 || P-M ||_{2} = \sqrt{\sum_{i=1}^{k} \left(p_{i} -   \frac{1}{2}(p_{i}+ q_{i}) \right)^{2}}=   \frac{1}{2} || P-Q ||_{2}, 
$$
and 
$$
 || Q-M ||_{2} = \    \frac{1}{2} || Q-P ||_{2}= \frac{1}{2} || P-Q ||_{2}.
$$
When we insert the last two equalities in Equation~(\ref{birchtklemma1}),  the  assertion in the lemma follows. 
\end{proof}
The inequality above gives   a minor observation  with 
$$
\int_{\Theta} e^{-2n_{o}  D_{\rm JS}\left(  \widehat{P}_{\mathbf{D}},  P_{\theta}\right)}p(\theta) d\theta \leq
 \int_{\Theta} e^{- \frac{n_{o}}{4} ||  \widehat{P}_{\mathbf{D}}- P_{\theta}||_{2}^{2}  }p(\theta) d\theta. 
$$
The right-hand integral contains formally  the non-normalized multivariate normal density  with    the  $d \times d $ unit matrix  multiplied by  $\frac{2}{n_{o}} $  as covariance matrix, thus a kind of  likelihood function for $\theta$.

\section{Differential Calculus for  the Fisher  Information Matrix of  $\mathbb{M}_{p}$ and the  Hessian  of  JSD}\label{birchmapmain} 

\subsection{Information Matrix}\label{fisherrefapp}

Next we check   Lemma \ref{infemma}. 
\begin{proof}
 By definition $I_{ij}(\theta) $ is  
\begin{equation}\label{ijfisher2}
I_{ij}(\theta)
= \sum_{x \in  {\cal A}} \left(\frac{\partial}{\partial \theta_{i}}  \ln P_{\theta}(x) \frac{\partial}{\partial \theta_{j}}  \ln P_{\theta}(x) \right) P_{\theta}(x)
\end{equation}  
By properties of  the Iverson bracket and Equation~(\ref{pariversion}) it follows readily that 
\begin{equation}\label{ijfisher3}
I_{ij}(\theta) = \sum_{s=1}^{k} \frac{\frac{\partial}{\partial \theta_{i}}p_{s}(\theta) \frac{\partial}{\partial \theta_{j}}p_{s}(\theta)}{p_{s}(\theta)}.
\end{equation} 
 We  have  by rules of matrix multiplication and  the Jacobian given in Equation~(\ref{afaktor})  that 
\begin{equation}\label{nablanot6}
A(\theta)^{T}A(\theta) = \sum_{s=1}^{k} \frac{p_{s}^{'}(\theta)^{T} p_{s}^{'}(\theta)}{p_{s}(\theta)},
\end{equation}
where $^{T}$ is the vector transpose  and each $p_{s}^{'}(\theta)^{T} p_{s}^{'}(\theta)$  is a $d \times d$-matrix. 
The  array at position $(i,j)$   in  this matrix is  by  Equation~(\ref{nablanot4})
 \begin{equation}\label{nablanot7}
 \left(p_{s}^{'}(\theta)^{T} p_{s}^{'}(\theta) \right)_{i,j}=  \frac{\partial}{\partial \theta_{i}}p_{s}(\theta) \frac{\partial}{\partial \theta_{j}}p_{s}(\theta).
\end{equation}
Hence the  array at position $(i,j)$   in  $A(\theta)^{T}A(\theta) $ is 
\begin{equation}\label{nablanot8}
 \left(A(\theta)^{T}A(\theta) \right)_{i,j}=  \sum_{s=1}^{k} \frac{ \frac{\partial}{\partial \theta_{i}}p_{s}(\theta) \frac{\partial}{\partial \theta_{j}}p_{s}(\theta) }{p_{s}(\theta)}.
\end{equation}
But a comparison  with  Equation~(\ref{ijfisher3}) and Equation~(\ref{nablanot2}) yields  the asserted formula. 
\end{proof}
\begin{example}\label{multicateg3}
Explicit expressions for the  Fisher  information are established for the nested families   in  Example \ref{multicateg2}.  
 
\begin{description}
\item[ (i) $\theta= \left(\theta_{1},  \theta_{2} \right)$]
The  $2 \times 2$ Fisher information matrix  $I(\theta)$ is computed  for Equation~(\ref{multicat12})  by the formula in Equation~(\ref{nablanot8}) and is found elementwise as 
\begin{eqnarray}\label{fishif3kat}
I_{1,1}(\theta)&=&  e^{ -3M_{2}(\theta)}  e^{ \theta_{1} } \left[  e^{ \theta_{1} }  +\left(1+ e^{ \theta_{2} }  \right )^{2}  + e^{ \theta_{1}+ \theta_{2} }    \right],
 \\
I_{1,2}(\theta)&=& -e^{ -3M_{2}(\theta)} e^{ \theta_{1} + \theta_{2}} \left[ 1+ e^{ \theta_{1} }   + e^{ \theta_{2} }    \right]= I_{2,1}(\theta),
\nonumber \\
I_{2,2}(\theta)&= &e^{ -3M_{2}(\theta)} e^{ \theta_{2} } \left[  e^{ \theta_{2} }  +\left(1+ e^{ \theta_{1} }  \right )^{2}  + e^{ \theta_{1}+ \theta_{2} }    \right]. \nonumber 
\end{eqnarray}
The expression $ \sqrt{\det I(\theta)}$ is readily evaluated  and the integral  $ V\left( \Theta \right):=\int_{\Theta}  \sqrt{\det I(\theta)} d \theta$ in 
Equation~(\ref{notcoarse}) can  be computed, at least  numerically,  for  $\Theta$ such that   the integral exists.
\item[ (ii) $\theta= \left(\theta_{1}, 0\right)$] 
 The  scalar  Fisher information   $I(\theta)$ is computed  for Equation~(\ref{multicat13})  by the formula in Equation~(\ref{nablanot8}). 
\begin{equation}\label{fisinf2}
I(\theta) = 2 e^{ -3M_{1}(\theta)}  e^{ \theta_{1} } \left[ 2 + e^{ \theta_{1} }  \right].
\end{equation}
This agrees with $I_{1,1}(\theta)$ by setting   $\theta_{2}=0$   in  Equation~(\ref{fishif3kat}).  Since 
$M_{1}(\theta)= \ln \left( 2 +  e^{ \theta_{1}}      \right)$,  $I(\theta)$  in Equation~(\ref{fisinf2}) simplifies  to
$
I(\theta) =  2e^{ \theta_{1} }/\left(2 + e^{ \theta_{1}} \right)^{2}
$.
Thereby   the integral  in   Equation~(\ref{notcoarse})  becomes 
\begin{equation}\label{fisinfe2}
 V\left( \Theta \right)=\sqrt{2}\int_{\Theta} \frac{e^{ \theta_{1}/2 }}{ \left(2 + e^{ \theta_{1}} \right)}d\theta_{1}.
\end{equation}
When $\Theta=[a,b]$ we get by some changes of the variable of integration  that 
\begin{equation}\label{fisinfe3}
 V\left( [a,b] \right)=
2 \left[\arctan\left(\frac{e^{b/2 }}{\sqrt{2}}  \right)  - \arctan\left(\frac{e^{a/2 }}{\sqrt{2}}  \right) \right].
\end{equation}
Hence  $ 0 <  V\left( [a,b] \right) < 2 \pi$ for  $-\infty < a<b <+\infty$. 
  \item[ (iii) $\theta= \left(0, 0\right)$] The rule  in   Equation~(\ref{nablanot8})  is not  defined.    However,  from $\bf{(ii)}$,
$
 2e^{ 0 }/\left(2 + e^{ 0} \right)^{2}$ $  =2/9$.
\end{description}
\end{example}

\subsection{Hessian }\label{diffintvajdapist}
First  we prove Lemma \ref{birchloglike}.
\begin{proof}This is most  conveniently  done by differentiating   the generic expression 
\begin{equation}\label{jsdvectormap22}
\Phi\left(  {\bf p},  \theta   \right)= \sum_{i=1}^{k} p_{i}(\theta)  \phi \left( \frac{p_{i}}{p_{i}(\theta) }   \right), 
\end{equation} 
where $\phi$ is given in  Equation~(\ref{jsddiv}). 
For $j=1, \ldots, d$
\begin{equation}\label{partialq} 
\frac{\partial}{\partial \theta_{j}}\Phi\left(  {\bf p},  \theta \right)=  \sum_{i=1}^{k} \left[\phi \left( \frac{p_{i}}{p_{i}(\theta)}  \right)  - \frac{p_{i}}{{p_{i}}(\theta) }   \phi^{'}\left( \frac{p_{i}}{{p_{i}}(\theta) }   \right)\right] \frac{\partial}{\partial \theta_{j}}p_{i}(\theta).
\end{equation} 
 By Equation~(\ref{jsddivder4}) we obtain 
$$
\frac{\partial}{\partial \theta_{j}}\Phi\left(  {\bf p},  \theta \right)= -\frac{1}{2 } \sum_{i=1}^{k} \ln \left(\frac{\frac{1}{2 } \left( p_{i} +p_{i}(\theta)\right)}{ p_{i}(\theta)} \right)\frac{\partial}{\partial \theta_{j}}p_{i}(\theta),
$$
i.e. 
\begin{equation}\label{partialq25} 
\frac{\partial}{\partial \theta_{j}}\Phi\left(  {\bf p},  \theta \right)=\frac{1}{2 } \sum_{i=1}^{k} \ln \left(\frac{2 p_{i}(\theta)}{\left( p_{i} +p_{i}(\theta)\right)} \right)\frac{\partial}{\partial \theta_{j}}p_{i}(\theta).
\end{equation} 
By Equation~(\ref{jsddivder3}), the definition of the   the $k \times d$ Jacobian $J\left(\theta \right)$  in Equation~(\ref{nablanot3}) and rules of matrix multiplication the expression in  Equation~(\ref{gradth}) 
is established. \end{proof}


Next we prove Lemma \ref{jsdhessian}.
\begin{proof}  
 It is convenient to  compute by means of  Equation~(\ref{partialq}). 
For $l=1, \ldots, d$ we get by the chain  rule 
\begin{eqnarray}\label{partialq2}
\frac{\partial^{2}}{\partial \theta_{l}\partial \theta_{j}}\Phi\left(  {\bf p},  \theta \right) =  \frac{\partial}{\partial \theta_{l}}\sum_{i=1}^{k} \left[\phi \left( \frac{p_{i}}{p_{i}(\theta)}  \right)  - \frac{p_{i}}{{p_{i}}(\theta) }   \phi^{'}\left( \frac{p_{i}}{{p_{i}}(\theta) }   \right)\right] \frac{\partial}{\partial \theta_{j}}p_{i}(\theta) & &  \nonumber \\
& & \\
= \sum_{i=1}^{k} \left[\phi \left( \frac{p_{i}}{p_{i}(\theta)}  \right)  - \frac{p_{i}}{{p_{i}}(\theta) }   \phi^{'}\left( \frac{p_{i}}{{p_{i}}(\theta) }   \right)\right] \frac{\partial^{2}}{\partial \theta_{l} \partial \theta_{j}}p_{i}(\theta) & &   \nonumber \\
 + \sum_{i=1}^{k} \frac{\partial}{\partial \theta_{l}} \left[\phi \left( \frac{p_{i}}{p_{i}(\theta)}  \right)  - \frac{p_{i}}{{p_{i}}(\theta) }   \phi^{'}\left( \frac{p_{i}}{{p_{i}}(\theta) }   \right)\right] \frac{\partial}{\partial \theta_{j}}p_{i}(\theta). & &\nonumber  
\end{eqnarray}
As observed in the proof of  Lemma \ref{birchloglike}
\begin{equation}\label{obslema}
\sum_{i=1}^{k} \left[\phi \left( \frac{p_{i}}{p_{i}(\theta)}  \right)  - \frac{p_{i}}{{p_{i}}(\theta) }   \phi^{'}\left( \frac{p_{i}}{{p_{i}}(\theta) }   \right)\right] \frac{\partial^{2}}{\partial \theta_{l} \partial \theta_{j}}p_{i}(\theta) = 
\sum_{i=1}^{k}   \phi^{'}\left( \frac{p_{i}(\theta)}  {p_{i}}  \right) \frac{\partial^{2}}{\partial \theta_{l} \partial \theta_{j}}p_{i}(\theta).
\end{equation}
The second sum  in the right hand side is computed termwise as 
$$
 \frac{\partial}{\partial \theta_{l}} \left[\phi \left( \frac{p_{i}}{p_{i}(\theta)}  \right)  - \frac{p_{i}}{{p_{i}}(\theta) }   \phi^{'}\left( \frac{p_{i}}{{p_{i}}(\theta) }   \right)\right] =
$$
$$
=-\phi^{'}\left( \frac{p_{i}}{p_{i}(\theta)}  \right) \frac{p_{i}}{p^{2}_{i}(\theta)}  \frac{\partial}{\partial \theta_{l}}p_{i}(\theta)  +
 \frac{p_{i}}{p^{2}_{i}(\theta)}  \phi^{'}\left( \frac{p_{i}}{{p_{i}}(\theta) }   \right)  \frac{\partial}{\partial \theta_{l}}p_{i}(\theta) + 
$$
$$
 + \frac{p_{i}}{p_{i}(\theta) }   \phi^{''}\left( \frac{p_{i}}{p_{i}(\theta) } \right)  \frac{p_{i}}{p_{i}^{2}(\theta) }  \frac{\partial}{\partial \theta_{l}}p_{i}(\theta)
$$
\begin{equation}\label{jsddiv34}
=    \frac{p_{i}^{2}}{p_{i}^{3}(\theta) }   \phi^{''}\left( \frac{p_{i}}{p_{i}(\theta) } \right)  \frac{\partial}{\partial \theta_{l}}p_{i}(\theta).
\end{equation}
  By Equation~(\ref{jsddiv2}) 
$$
  \phi^{''}\left( \frac{p_{i}}{p_{i}(\theta) }\right)= \frac{p^{2}_{i}(\theta)}{2 p_{i}(p_{i}+p_{i}(\theta))}.
$$
 Hence 
\begin{equation}\label{jsddiv37}
 \frac{p_{i}^{2}}{p_{i}^{3}(\theta) }   \phi^{''}\left( \frac{p_{i}}{p_{i}(\theta) } \right) = \frac{p_{i}}{2p_{i}(\theta) }  \frac{1}{(p_{i}+p_{i}(\theta))}.  
\end{equation}
After  Equation~(\ref{jsddiv37}) has been  inserted in Equation~(\ref{jsddiv34}), Equation~(\ref{jsddiv34}) and Equation~(\ref{obslema}) are used in the rightmost expression in  Equation~(\ref{partialq2}) 
we obtain 
\begin{eqnarray}\label{hessele}
\frac{\partial^{2}}{\partial \theta_{l}\partial \theta_{j}}\Phi\left(  {\bf p},  \theta \right)&=&\sum_{i=1}^{k}   \phi^{'}\left( \frac{p_{i}(\theta)}  {p_{i}}  \right) \frac{\partial^{2}}{\partial \theta_{l} \partial \theta_{j}}p_{i}(\theta) \nonumber \\
& & \\
 & & + \sum_{i=1}^{k} \frac{p_{i}}{2p_{i}(\theta) }  \frac{1}{(p_{i}+p_{i}(\theta))} \frac{\partial}{\partial \theta_{l}}p_{i}(\theta) \frac{\partial}{\partial \theta_{j}}p_{i}(\theta). \nonumber
\end{eqnarray}

We re-organize the second term in the right-hand side of Equation~(\ref{hessele})  as 
$$
\sum_{i=1}^{k} \frac{p_{i}}{2p_{i}(\theta) }  \frac{1}{(p_{i}+p_{i}(\theta))} \frac{\partial}{\partial \theta_{l}}p_{i}(\theta) \frac{\partial}{\partial \theta_{j}}p_{i}(\theta) = \frac{1}{2}\sum_{i=1}^{k} \frac{\frac{\partial}{\partial \theta_{l}}p_{i}(\theta) \frac{\partial}{\partial \theta_{j}}p_{i}(\theta)}{p_{i}(\theta) }  \frac{p_{i} }{(p_{i}+p_{i}(\theta))}. 
$$
Next we use  $\frac{p_{i} }{(p_{i}+p_{i}(\theta))}  = 1- \frac{p_{i}(\theta) }{(p_{i}+p_{i}(\theta))}$ and  by Equation~(\ref{ijfisher3}) obtain  
\begin{equation}\label{fisinff}
=\frac{1}{2} I_{lj}(\theta)   -  \frac{1}{2}\sum_{i=1}^{k} 
\frac{ \frac{\partial}{\partial \theta_{l}}p_{i}(\theta) \frac{\partial}{\partial \theta_{j}}p_{i}(\theta)  }{(p_{i}+p_{i}(\theta))}. 
\end{equation}
Then the argument in the proof of  Lemma  \ref{infemma} and Equation~(\ref{ajsdfaktor})  can be repeated  to  verify   the   elementwise equality 
\begin{equation}\label{amarizel}
\sum_{i=1}^{k}\frac{ \frac{\partial}{\partial \theta_{l}}p_{i}(\theta) \ \frac{\partial}{\partial \theta_{j}}p_{i}(\theta)  }{(p_{i}+p_{i}(\theta))}= \left[A({\bf p}, \theta)^{T}A({\bf p}, \theta)\right]_{lj}.
\end{equation}
Hence we have obtained the sum  term in the right-hand side of Equation~(\ref{hessele})  as 
\begin{equation}\label{infsumma}
=\frac{1}{2} I_{lj}(\theta) - \frac{1}{2}\left[A({\bf p}, \theta)^{T}A({\bf p}, \theta)\right]_{lj}.
\end{equation}
When we use Equation~(\ref{infsumma}) in Equation~(\ref{hessele}), the asserted expression in Equation~(\ref{jsdhessian}) of Lemma \ref{jsdehesselem}  is there. 
 \end{proof}


\newpage
\vskip 0.2in
\bibliography{jsdmodv30.bib}

\end{document}